\colorlet{MyBlue}{DodgerBlue!60!Black}
\colorlet{MyGreen}{DarkGreen!85!Black}
\numberwithin{equation}{section}  
\crefname{app}{Appendix}{Appendices}
\newcommand{\debug}[1]{#1}
\theoremstyle{plain}
\newtheorem{theorem}{Theorem}
\newtheorem{corollary}[theorem]{Corollary}
\newtheorem{lemma}[theorem]{Lemma}
\newtheorem{proposition}[theorem]{Proposition}
\theoremstyle{definition}
\newtheorem{definition}[theorem]{Definition}
\theoremstyle{remark}
\newtheorem{remark}[theorem]{Remark}
\numberwithin{theorem}{section}
\DeclarePairedDelimiter{\braces}{\{}{\}}
\DeclarePairedDelimiter{\bracks}{[}{]}
\DeclarePairedDelimiter{\parens}{(}{)}
\DeclarePairedDelimiter{\abs}{\lvert}{\rvert}
\DeclarePairedDelimiter{\ceil}{\lceil}{\rceil}
\DeclarePairedDelimiter{\floor}{\lfloor}{\rfloor}
\DeclarePairedDelimiterX{\braket}[2]{\langle}{\rangle}{#1,#2}
\DeclarePairedDelimiterX{\inner}[2]{\langle}{\rangle}{#1,#2}
\DeclarePairedDelimiterX{\setdef}[2]{\{}{\}}{#1:#2}
\DeclarePairedDelimiterXPP{\probof}[1]{\Prob}{(}{)}{}{%

#1}
\DeclarePairedDelimiterXPP{\exof}[1]{\Expect}{[}{]}{}{%

#1}
\DeclareMathOperator{\expo}{\debug {e}}
\newcommand{\diff}{\ \textup{\debug d}}
\newcommand{\naturals}{\mathbb{\debug N}}
\newcommand{\reals}{\mathbb{\debug R}}
\newcommand{\run}{\debug n}
\newcommand{\Expect}{\mathsf{\debug E}}
\newcommand{\Prob}{\mathsf{\debug P}}
\newcommand{\proba}{\debug p}
\newcommand{\distr}{\debug F}
\newcommand{\stime}{\debug \tau}
\newcommand{\distrtau}{\debug H}
\newcommand{\intdist}{\debug \Phi}
\newcommand{\exptau}{\debug \mu}
\DeclareMathOperator{\Beta}{\mathsf{\debug{Beta}}}
\DeclareMathOperator{\Binomial}{\mathsf{\debug{Binomial}}}
\DeclareMathOperator{\Poisson}{\mathsf{\debug{Poisson}}}
\DeclareMathOperator{\Unif}{\mathsf{\debug{Unif}}}
\DeclareMathOperator{\Var}{\mathsf{\debug{Var}}}
\newcommand{\per}{\debug t}
\newcommand{\peralt}{\debug s}
\newcommand{\Per}{\debug \ell}
\newcommand{\play}{\debug i}
\newcommand{\pA}{\mathrm{\debug{A}}}
\newcommand{\pB}{\mathrm{\debug{B}}}
\newcommand{\actA}{\debug a}
\newcommand{\actAalt}{\actA'}
\newcommand{\actAaltalt}{\actA''}
\newcommand{\actB}{\debug b}
\newcommand{\actBalt}{\actB'}
\newcommand{\actBaltalt}{\actB''}
\newcommand{\nactions}{\debug K}
\newcommand{\nactionsstar}{\nactions^{\ast}}
\newcommand{\nactionsn}{\nactions_{\run}}
\newcommand{\nactionsA}{\nactions^{\pA}}
\newcommand{\nactionsB}{\nactions^{\pB}}
\newcommand{\nactionsAn}{\nactionsn^{\pA}}
\newcommand{\nactionsBn}{\nactionsn^{\pB}}
\newcommand{\actionsA}{[\nactionsA]}
\newcommand{\actionsB}{[\nactionsB]}
\newcommand{\actionsAn}{[\nactionsAn]}
\newcommand{\actionsBn}{[\nactionsBn]}
\newcommand{\Pay}{\debug U}
\newcommand{\Bimatrix}{\boldsymbol{\Pay}}
\newcommand{\PayA}{\Pay^{\pA}}
\newcommand{\PayB}{\Pay^{\pB}}
\newcommand{\eq}[1]{#1^{\ast}}
\newcommand{\potential}{\debug \Psi}
\newcommand{\nequi}{\debug W}
\DeclareMathOperator{\BRD}{\mathsf{\debug{BRD}}}
\DeclareMathOperator{\cyclet}{\mathsf{\debug{cycle}}}
\DeclareMathOperator{\NE}{\mathsf{\debug{NE}}}
\newcommand{\bigoh}{\mathcal{\debug{O}}}
\newcommand{\ind}{\mathds{\debug 1}}
\newcommand{\smalloh}{\debug o}
\newcommand{\ie}{i.e., }
\newcommand{\eg}{e.g., }
\newcommand{\iid}{i.i.d.\ }
\newcommand{\rhs}{r.h.s.\ }
\newcommand{\lhs}{l.h.s.\ }
\DeclareMathOperator*{\argmax}{arg\,max}
\newcommand{\setA}{\debug A}
\newcommand{\ratioBA}{\debug \alpha}
\newcommand{\qu}{\debug q}
\newcommand{\resp}{\debug R}
\newcommand{\cresp}{\debug r}
\newcommand{\best}{\debug A}
\newcommand{\ppath}{\debug \pi}
\newcommand{\paths}{\debug \Pi}
\newcommand{\bestpath}{\debug G}
\newcommand{\brdNE}{\debug C}
\newcommand{\brdR}{\debug E}
\newcommand{\brddiff}{\debug Z}
\newcommand{\Trap}{\debug M}
\newcommand{\eqpay}{\debug S}
\newcommand{\FTs}{\debug J}
\newcommand{\Xonemax}{\debug L}
\newcommand{\badiid}{\debug D}
\newacro{BRD}{best response dynamics}
\newacro{NE}{Nash equilibrium}
\newacro{PNE}{pure Nash equilibrium}
\newacro{MNE}{mixed Nash equilibrium}
\newacro{PFNE}{prior-free Nash equilibrium}
\newacro{KKT}{Karush\textendash Kuhn\textendash Tucker}
\newacro{FIP}{finite improvement property}
\newacro{CLT}{central limit theorem}
\journal{arXiv}
\begin{document}

\begin{frontmatter}

\title{Best-Response Dynamics in Two-Person Random Games with Correlated Payoffs}

\author[labelLuiss]{Hlafo Alfie Mimun}

\affiliation[labelLuiss]{organization={Dipartimento di Economia e Finanza, Luiss University},
            addressline={Viale Romania 32}, 
            city={Roma},
            postcode={00197}, 
            state={RM},
            country={Italy}}

\author[labelSapienza]{Matteo Quattropani}

\affiliation[labelSapienza]{organization={Dipartimento di Matematica ``Guido Castelnuovo'', Sapienza Università di Roma},
            addressline={Piazzale Aldo Moro 5}, 
            city={Roma},
            postcode={00185}, 
            state={RM},
            country={Italy}}

\author[labelLuiss]{Marco Scarsini}

\begin{abstract}
We consider finite two-player normal form games with random payoffs.
Player $\pA$'s payoffs are \iid from a uniform distribution.
Given $\proba\in[0,1]$, for any action profile, 
player $\pB$'s payoff coincides with player $\pA$'s payoff with probability $\proba$ and is \iid from the same uniform distribution with probability $1-\proba$.
This model interpolates the model of \iid random payoff used in most of the literature and the model of random potential games. 
First we study the number of \aclp{PNE} in the above class of games.
Then we show that, for any positive $\proba$, asymptotically in the number of available actions, \acl{BRD} reaches a pure Nash equilibrium with high probability.
\end{abstract}

\begin{keyword}

pure Nash equilibrium \sep random games \sep 
potential games \sep 
best response dynamics

\MSC[2020] 91A05 \sep 91A26 \sep 91A14

\end{keyword}

\end{frontmatter}

%
%
\section{Introduction}
\label{se:intro}

%
%
\subsection{The problem}
\label{suse:problem}

Consider the class of two-person normal-form finite games. 
Some properties hold for the entire class, for instance, the mixed extension of each game in the class admits a \acl{NE} \citep{Nas:PNAS1950,Nas:AM1951}.
Some properties hold generically, for instance, generically the number of \aclp{NE} is finite and odd
\citep{Wil:SJAM1971,Har:IJGT1973}.
Some properties do not hold generically and neither does their negation; for instance having a \acl{PNE} or not having a \acl{PNE} is a not a generic property of finite games. 
Still, it may be relevant to know how likely it is for a finite game to admit a pure equilibrium. 
Along a similar line of investigation, how likely is a recursive procedure---such as \acl{BRD}---to reach a \acl{PNE} in finite time?

One way to formalize these questions is to assume that the game is drawn at random according to some probability measure.
It is not clear what a natural probability measure is in this setting; a good part of the literature on the topic has focused on measures that make the payoffs \iid with zero probability of ties.
Few papers have relaxed this assumption.
For instance, \citet{RinSca:GEB2000}  considered payoff vectors that are \iid across different action profiles, but can have some positive or negative dependence within the same action profile.
\citet{AmiColScaZho:MOR2021}  considered \iid payoffs whose distribution may have atoms and, as a consequence, may produce ties.
\citet{DurGau:AGT2016}  studied the class of random potential games, \ie a class of games that admit a potential having \iid entries.

%
%
\subsection{Our contribution}
\label{suse:contribution}

In this paper we want to study two-person games with random payoffs where the stochastic model for the payoffs parametrically interpolates the case of \iid payoffs with no ties and the case of random potential games.  
In particular, we start with a model where all payoffs are \iid according to a continuous distribution function (without loss of generality, uniform on $[0,1]$) and we consider an \iid set of coin tosses, one for each action profile. 
If the toss gives head, then the original payoff of the second player is made equal to the payoff of the first player; if the toss gives tail, the payoff remains unchanged.
The relevant parameter is the probability $\proba$ of getting heads in the coin toss.
If $\proba=0$, we obtain the classical model of random games with continuous \iid payoffs. 
If $\proba=1$, we get the model of common-interest random games. 
From the viewpoint of \acp{PNE} any potential game is strategically equivalent to a common-interest game.
Therefore, the above class of games parametrically interpolates the case of \iid payoffs with no ties and the case of random potential games.  
When $\proba$ is small, the game is close to a game with \iid payoffs; when $\proba$ is large, the game is close to a potential game. 

For this parametric class of games we first compute the expected number of \acp{PNE} as a function of $\proba$, and then study its asymptotic behavior as the numbers of actions of the two players diverge, possibly at different speeds. 
It is well known \citep{Pow:IJGT1990} that, as the number of action increases, the asymptotic distribution of the number of \acp{PNE} is a $\Poisson(1)$ distribution, for \iid random payoffs.
Our result shows an interesting phase transition around $\proba=0$, in the sense that for every $\proba>0$ the expected number of \acp{PNE} diverges. 

We then consider \ac{BRD} for the above class of games. 
\citet{DurGau:AGT2016} considered \ac{BRD} for random potential games with an arbitrary number of players and the same number of actions for each player. In this class of games a \ac{PNE} is reached by a \ac{BRD} in finite time. 
\citet{DurGau:AGT2016} studied the asymptotic behavior of the expectation of this random time.
In our paper we first consider potential games and we compute the distribution of the time that the \ac{BRD} needs to reach a \ac{PNE}.
Moreover we compute exactly the first two moments of this random time, when the two players have the same action set. 

\citet{AmiColHam:ORL2021} showed that, for games with \iid continuous payoffs, when players have the same action set, as the number of actions increases, the probability that a \ac{BRD} reaches a \ac{PNE} goes to zero. 
Here we generalize the result of \citet{AmiColHam:ORL2021} to the case of possibly different action sets for the two players. 
Moreover, we prove that for every positive $\proba$, asymptotically in the number of actions, a \ac{BRD} reaches a \ac{PNE} in finite time with probability arbitrarily close to $1$. 
Again this shows a phase transition in $\proba=0$ for the behavior of the \ac{BRD}.

%
%
\subsection{Related literature}
\label{suse:related-literature}

Games with random payoffs have been studied for more than sixty years. 
We refer the reader to \citet{AmiColScaZho:MOR2021,HeiJanMunPanScoTar:IJGT2023} for an extensive survey of the literature on the topic. 
Here we mention just some recent papers and some articles that are more directly connected with the results of our paper. 
\citet{Pow:IJGT1990} proved that in random games with \iid payoffs having a continuous distribution, as the number of actions of at least two players diverges, the asymptotic distribution of the number of \acp{PNE} is $\Poisson(1)$.
\citet{Sta:GEB1995} computed the exact nonasymptotic form of this distribution, from which the result in \citet{Pow:IJGT1990} can be obtained as a corollary.
\citet{RinSca:GEB2000} retained the \iid assumptions for payoff vectors corresponding to different action profiles, but allowed dependence for payoffs within the same profile.    
They proved an interesting phase transition in terms of the payoffs' correlation: asymptotically in either the number of players or the number of actions, for negative dependence the number of \acp{PNE} goes to $0$, for positive dependence it diverges, and for independence it is $\Poisson(1)$, as proved by \citet{Pow:IJGT1990}.
\citet{BalRinSte:PSM1989} studied the distribution of the number of local maxima on a graph, which---by choosing a suitable graph---can be translated into the number of \acp{PNE} in a random potential game.

\citet{PeiTak:GEB2019} studied point-rationalizable strategies in two-person random games. 
Since the number of point-rationalizable strategies for each player is weakly larger than the number of \acp{PNE}, they were interested in the typical magnitude of the difference between these two numbers.
A game is dominance solvable if iterated elimination of strictly dominated strategies leads to a unique action profile, which must be a \ac{PNE}.
\citet{AloRudYar:arXiv2021} used recent combinatoric results to prove that the probability that a two-person random game is dominance solvable vanishes with the number of actions.

Several papers studied the behavior of various learning dynamics \ac{BRD} in games with random payoffs. 
For instance, \citet{GalFar:PNAS2013} studied a type of reinforcement learning called experience-weighted attraction in two-person games and showed the existence of three different regimes in terms of convergence to equilibria.
\citet{SanFarGal:SR2018} extended their analysis to games with an arbitrary finite number of players.
\citet{PanHeiFar:SA2019} compared through simulation the behavior of various adaptive learning procedures in games whose payoffs are drawn at random.
\citet{HeiJanMunPanScoTar:IJGT2023} compared the behavior of \ac{BRD} in games with random payoffs, when the order of acting players is fixed vs when it is random and they showed that, asymptotically in either the number of players or the number of strategies, the fixed-order \ac{BRD} converges with vanishing probability, whereas the random-order does converge to a \ac{PNE} whenever it exists.
Similar results were obtained by \citet{WieHei:DGA2022}.

\citet{CouDurGauTou:NetGCoop2014,DurGau:AGT2016,DurGarGau:PE2019} focused on random potential games and measured the speed of convergence of \ac{BRD} to a \ac{PNE}. 
\citet{AmiColHam:ORL2021} dealt with two-person games where the players have the same action set and   payoffs are \iid with a continuous distribution. 
They compared the behavior of \acl{BRD} and better response dynamics.
They proved that, asymptotically in the number of actions, the first reaches a \ac{PNE} only with vanishing probability, whereas the second does reach it, whenever it exists.
\citet{AmiColScaZho:MOR2021} studied a class of games with $n$ players and two actions for each player where the payoffs are \iid but their distribution may have atoms.  
They proved that the relevant parameter for the analysis of this class of games is the probability of ties in the payoffs, called $\alpha$.
They showed that, whenever this parameter is positive, the number of \acp{PNE} diverges, as $n\to\infty$ and proved a central limit theorem for this random variable. 
Moreover, using percolation techniques, they studied the asymptotic behavior of \ac{BRD}, as a function of $\alpha$, and they showed a phase transition at $\alpha=1/2$.
\citet{JohSavScoTar:arXiv2023} considered the class of games whose random payoffs are \iid with a continuous distribution; they showed that in almost every game in this class that has a \acl{PNE}, asymptotically in the number of players, \acl{BRD} can lead from every action profile that is not a \acl{PNE} to every \acl{PNE}.    

Potential functions in games were introduced by \citet{Ros:IJGT1973} and their properties were extensively studied by \citet{MonSha:GEB1996}.
Among them, existence of \acp{PNE} and convergence to one of these equilibria of the most common learning procedures, including \ac{BRD}.

\citet{GoeMirVet:FOCS2005} introduced the concept of sink equilibrium. 
Sink equilibria are strongly connected stable sets of action profiles that are never abandoned once reached by a \ac{BRD}. 
A sink equilibrium that is not a \ac{PNE} is what in this paper is called a trap.

\citet{FabJagSha:TCS2013} studied the class of weakly acyclical games, \ie the class of games for which from every action profile, there exists some better-response improvement path that leads from that action profile to a \ac{PNE}.
This class includes potential games and dominance solvable games as particular cases.

The goal of our paper is to consider probability measures on spaces of finite noncooperative games that go beyond the usual assumption of \iid payoffs.
In particular, we  define a parametric family of probability measures that interpolates random games with \iid payoffs and random potential games.  
The  interpolation is achieved locally  by acting on each action profile of the game and replacing---with some fixed probability  and independently across profiles---the payoff of the second player with the payoff of the first player.
A different  interpolation could be achieved by considering a convex combination of a game with \iid payoffs and a random potential game. 
This was done, \eg in \citet{RinSca:GEB2000}, 
where in each action profile the payoffs are obtained by summing a Gaussian vector with \iid components and an independent Gaussian vector with identical components (which plays the role of the \emph{random potential}).   
This approach is somehow comparable with the idea of decomposing the space of finite games proposed by \citet{CanMenOzdPar:MOR2011}. 
This decomposition was then used by \citet{CanOzdPar:GEB2013} to analyze \ac{BRD} in games that are close to potential games.

%
%

\subsection{Organization of the paper}
\label{suse:organization}

\cref{se:preliminaries} introduces some basic game theoretic concepts.
\cref{se:number-PNE} defines the parametric family of distributions on the space of games and deals with the  number of \ac{PNE} in games with random payoffs. 
\cref{se:BRD} studies the behavior of \ac{BRD} in games with random payoffs in the above parametric class, for different values of the parameter.  
\cref{se:proofs} contains all the proofs.
Conclusions and open problems can be found in \cref{se:conclusions}.
\cref{se:symbols} lists the symbols used throughout the paper.
\cref{se:beta} contains two well-known results about the Beta distribution. 

%
%

\subsection{Notation}
\label{suse:notation}

Given an integer $\run$, the symbol $[\run]$ indicates the set $\braces{1,\dots,\run}$.
Given a finite set $\setA$, the symbol $\abs{\setA}$ denotes its cardinality.
The symbol $\sqcup$ denotes the union of disjoint sets.
We use the notation $x \wedge y \coloneqq \min\braces{x,y}$.
The symbol $\xrightarrow{\Prob}$ denotes convergence in probability.

Given two nonnegative sequences $h_{\run},g_{\run}$, we use the following common asymptotic notations:
\begin{align}
\label{eq:small-oh}
h_{\run}=\smalloh(g_{\run}) \quad&\text{if}\quad \lim_{\run\to\infty} \frac{h_{\run}}{g_{\run}} = 0,\\
\label{eq:big-oh}
h_{\run}=\bigoh(g_{\run}) \quad&\text{if}\quad \limsup_{\run\to\infty} \frac{h_{\run}}{g_{\run}} < \infty,\\
\label{eq:Omega}
h_{\run}=\Omega(g_{\run}) \quad&\text{if}\quad \liminf_{\run\to\infty} \frac{h_{\run}}{g_{\run}} > 0,\\
\label{eq:omega}
h_{\run}=\omega(g_{\run}) \quad&\text{if}\quad \lim_{\run\to\infty} \frac{h_{\run}}{g_{\run}} = \infty,\\
\label{eq:Theta}
h_{\run}=\Theta(g_{\run}) \quad&\text{if}\quad h_{\run}=\bigoh(g_{\run})  \quad\text{and}\quad h_{\run}=\Omega(g_{\run}).
\end{align}

%
%

\section{Preliminaries}
\label{se:preliminaries}

We consider two-person normal-form games where, for $\play\in\braces{\pA,\pB}$, player~$\play$'s action set is $[\nactions^{\play}] \coloneqq\braces*{1,\dots,\nactions^{\play}}$ and
$\Pay^{\play} \colon \actionsA \times \actionsB \to \reals$ is player~$\play$'s payoff function.
The game is defined by the payoff bimatrix 
\begin{equation}
\label{eq:bimatrix}
\Bimatrix \coloneqq \parens{\Bimatrix^{\pA},\Bimatrix^{\pB}},  
\end{equation}
where, for $\play\in\braces{\pA,\pB}$,
\begin{equation}
\label{eq:payoff-matrix}
\Bimatrix^{\play} \coloneqq \parens{\Pay^{\play}(\actA,\actB)}_{\actA\in\actionsA,\actB\in\actionsB}.
\end{equation}

A \acfi{PNE}\acused{PNE} of the game is a pair $\parens{\eq\actA,\eq\actB}$ of actions such that, for all $\actA\in\actionsA, \actB\in\actionsB$ we have
\begin{equation}
\label{eq:PNE}
\PayA(\eq\actA,\eq\actB) \ge \PayA(\actA,\eq\actB) 
\quad\text{and}\quad
\PayB(\eq\actA,\eq\actB) \ge \PayB(\eq\actA,\actB). 
\end{equation}

As is well known,  \acp{PNE} are not guaranteed to exist. 
A class of games that admits \acp{PNE} is the class of \emph{potential games}, \ie games for which there exists a \emph{potential function} $\potential \colon \actionsA \times \actionsB \to \reals$ such that for all $\actA,\actAalt\in\actionsA$, for all $\actB,\actBalt\in\actionsB$, we have
\begin{subequations}
\label{eq:potential}
\begin{equation}
\PayA(\actA,\actB) - \PayA(\actAalt,\actB) =
\potential(\actA,\actB) - \potential(\actAalt,\actB), 
\end{equation}
\begin{equation}
\PayB(\actA,\actB) - \PayB(\actA,\actBalt) =
\potential(\actA,\actB) - \potential(\actA,\actBalt).
\end{equation}
\end{subequations}
Games of \emph{common interest}, \ie games for which $\PayA = \PayB$, are a particular case of potential games.
As far as \acp{PNE} are concerned, every potential game is strategically equivalent to a common interest game, for instance to the game where $\PayA = \PayB = \potential$.
For the properties of potential games with an arbitrary number of players, we refer the reader to \citet{MonSha:GEB1996}.

Given a finite game, it is interesting to see whether an equilibrium can be reached iteratively by allowing players to deviate whenever they have an incentive to do so. 
In particular, we will consider a procedure where, starting from a fixed action profile, players in alternation choose their best response to the other player's action.
If the procedure gets stuck in an action profile, then it has reached a \acl{PNE}.
In general, there is no guarantee that this occurs. 

Assume that the payoffs of each player are all different, \ie$ \PayA(\actA,\actB) \neq \PayA(\actAalt,\actB)$ for all $\actA \neq \actAalt$ and all $\actB$ (and similarly for the second player).
The \acfi{BRD}\acused{BRD} is a learning algorithm taking as input a two-player game $(\PayA, \PayB)$ and a starting action profile $\parens{\actA_{0},\actB_{0}}$. For each $\per\ge0 $ we consider the process $\BRD(\per)$ on $\actionsA \times \actionsB$ such that 
\begin{align}
\label{eq:BRD}
\BRD(0) &= \parens{\actA_{0},\actB_{0}} 
\intertext{and, if $\BRD(\per)=\parens{\actAalt,\actBalt}$, then, for $\per$ even,}
\BRD(\per+1) &= \parens{\actAaltalt,\actBalt},
\intertext{where 
$\actAaltalt\in\argmax_{\actA\in\actionsA} \PayA(\actA,\actBalt) \setminus \braces*{\actAalt}$, if the latter set is not empty, otherwise}
\BRD(\per+1) &= \BRD(\per); 
\intertext{for $\per$ odd,}
\BRD(\per+1) &= \parens{\actAalt,\actBaltalt},
\intertext{where 
$\actBaltalt\in\argmax_{\actB\in\actionsB} \PayB(\actAalt,\actB)  \setminus \braces*{\actBalt}$, if the latter set is not empty, otherwise}
\BRD(\per+1) &= \BRD(\per).
\end{align}
It is easy to see that, if, for some 
positive  $\hat\per$, we have
\begin{equation}
\label{eq:BRD-PNE}
\BRD(\hat\per) = \BRD(\hat\per+1) = 
\parens{\eq\actA,\eq\actB},
\end{equation}
then $\BRD(\per)=\parens{\eq\actA,\eq\actB}$ for all $\per\ge\hat\per$ and  $\parens{\eq\actA,\eq\actB}$ is a \ac{PNE} of the game. 

The algorithm stops when it visits an action profile for the second time. 
If this profile is the same as the one visited at the previous time, then a \ac{PNE} has been reached. 

Inspired by the concept of \emph{sink equilibrium} of \citet{GoeMirVet:FOCS2005}, we give a definition of trap in a way that is suitable for our two-player environment.    

\begin{definition}
\label{de:trap}
A \emph{trap}  is a finite set $\Trap$ of action profiles such that 
\begin{enumerate}[(a)]
\item
$\abs{\Trap}\ge 2$, 

\item
if $\BRD(\per) \in \Trap$, then $\BRD(\per+1) \in \Trap$;

\item
for every $(\actA,\actB)\in\Trap$, there exists $\per$ such that $\BRD(\per+k\abs{\Trap})=(\actA,\actB)$, for every $k \in \naturals$.

\end{enumerate} 
\end{definition}

Moreover, the definition of \ac{BRD} implies that for every trap $\Trap$ we have $\abs{\Trap}\ge 4$ and  $\abs{\Trap}$ even.
Moreover, for every game $(\PayA,\PayB)$ and every initial profile $(\actA,\actB)$, the \ac{BRD} eventually visits a \ac{PNE} or a trap in finite time, say $\stime$. 

Even if the game admits \ac{PNE}, there is no guarantee that a \ac{BRD} reaches one of them; it could cycle over a trap, \ie it could start to periodically visit the same set of action profiles and never stabilize.
On the other hand, if the game is a potential game, then a \ac{BRD} always reaches a \ac{PNE}.
This is due to the fact that at every iteration of the \ac{BRD} the payoff of one player increases, and so does the potential. 
Since the game is finite,  in finite time the \ac{BRD} reaches a local maximum, which is a \ac{PNE} \citep[see, \eg][proposition~4.4.6]{KarPer:AMS2017}.

The goal of this paper is to study the number of \acp{PNE} and the behavior of \ac{BRD} in a ``typical'' game. 
To make sense of the above sentence, we need to formalize the meaning of the term typical.
The approach that we will take is stochastic. 
That is, we will assume the bimatrix $\Bimatrix$ to be random and drawn from a distribution that will be specified later.
In any game with random payoffs, the set $\NE$ of \aclp{PNE} is a (possibly empty) random set of action profiles, \ie a random subset of $\actionsA \times \actionsB$. 
Therefore, since the game is finite, the number of \acp{PNE} is an integer-valued  random variable.
Moreover, we will be able to speak about the probability that a \ac{BRD} converges (to a \ac{PNE}).

%
%

\section{Number of \aclp{PNE} in random games}
\label{se:number-PNE}
As mentioned in the Introduction,  several attempts have been made in the literature to put a probability measure on a space of games.
Most of the existing papers assume all the entries of $\Bimatrix$ to be \iid with a continuous marginal distribution.
There are some notable exceptions to the independence assumption. 
\citet{RinSca:GEB2000} considered a setting where the payoff vectors of different action profiles have a continuous distribution and are \iid\!\!, but some dependence is allowed within each profile.    
\citet{DurGau:AGT2016} studied  random potential games where the entries of the potential are \iid with a continuous distribution.

Our stochastic model is quite general, since---in a sense that will be made precise---it interpolates the \iid payoffs and the random potential. 

By definition, the concept of \acl{PNE} is ordinal, that is, if all payoffs in a game are transformed according to a strictly increasing function, then the set of \aclp{PNE} remains the same.
Assume that each entry of  $\Bimatrix$ has a marginal distribution that is uniform on the interval $[0,1]$.
Its distribution function will be denoted by $\distr$.
The above consideration implies that this uniformity assumption is without loss of generality, \ie any other continuous distribution would produce the same conclusions.

Start with $\parens{\Bimatrix^{\pA},\Bimatrix^{\pB}}$, where all the entries are \iid with distribution $\distr$.
Then, for each action profile $\parens{\actA,\actB}$, with probability $\proba$ set $\PayB(\actA,\actB)$ to be equal to $\PayA(\actA,\actB)$, independently of the other action profiles.
In other words, for every pair $\parens{\actA,\actB}$,
\begin{itemize}
\item
with probability $1-\proba$,  the random payoffs $\PayA(\actA,\actB)$ and $\PayB(\actA,\actB)$ are independent,

\item
with probability $\proba$, we have $\PayA(\actA,\actB) = \PayB(\actA,\actB)$. 

\end{itemize}
The larger $\proba$, the closer the game is to a potential game. The smaller $\proba$, the closer the game is to a random game with \iid payoffs.
The game whose payoff bimatrix is obtained as above will be denoted by $\Bimatrix(\proba)$.

We now compute the expected number of \acp{PNE} in the above-defined class of random games. 

\begin{proposition}
\label{pr:number-PNE}
If $\nequi$ is the number of \acp{PNE}  in the game $\Bimatrix(\proba)$, then
\begin{equation}
\label{eq:E-W-n}
\Expect\bracks*{\nequi} =  \proba \frac{\nactionsA\nactionsB}{\nactionsA
+\nactionsB-1}+(1-\proba).
\end{equation}
\end{proposition}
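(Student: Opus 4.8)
The plan is to combine linearity of expectation with the symmetry of the construction across action profiles, reducing everything to the probability that one fixed profile is a \ac{PNE}. Writing $\nequi = \sum_{(\actA,\actB)} \ind\{(\actA,\actB) \in \NE\}$ and noting that every profile is treated identically in the definition of $\Bimatrix(\proba)$, I would obtain $\Expect[\nequi] = \nactionsA \nactionsB \, \Prob[(\actA,\actB) \in \NE]$ for an arbitrary fixed $(\actA,\actB)$. The whole computation then reduces to evaluating this single probability, and the natural device is to condition on the coin toss attached to that profile.

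Conditioning on the outcome of the coin at $(\actA,\actB)$ splits the analysis into two regimes. On heads (probability $\proba$) the two payoffs collapse to a common value $X \coloneqq \PayA(\actA,\actB) = \PayB(\actA,\actB)$, and the profile is a \ac{PNE} exactly when $X$ exceeds both the $\nactionsA - 1$ other entries $\PayA(\actAalt, \actB)$ in its column of $\Bimatrix^{\pA}$ and the $\nactionsB - 1$ other entries $\PayB(\actA, \actBalt)$ in its row of $\Bimatrix^{\pB}$. On tails (probability $1 - \proba$) the values $\PayA(\actA,\actB)$ and $\PayB(\actA,\actB)$ are independent, and the \ac{PNE} condition decouples into a column-maximum event for $\Bimatrix^{\pA}$ and an independent row-maximum event for $\Bimatrix^{\pB}$.

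The key point is a careful accounting of independence, which rests on the fact that player $\pA$'s payoffs are never modified, so $\Bimatrix^{\pA}$ always has \iid uniform entries, while each entry of $\Bimatrix^{\pB}$ is marginally uniform and, when its own coin is heads, borrows its value from the $\Bimatrix^{\pA}$-entry in the same cell. In the heads regime I would show that the $\nactionsA + \nactionsB - 2$ competing values are jointly \iid uniform and independent of $X$: the column entries $\PayA(\actAalt, \actB)$ are distinct original draws, and each row entry $\PayB(\actA, \actBalt)$ equals either the original draw $\PayA(\actA, \actBalt)$ or a fresh uniform, in every case assembled from underlying variables disjoint from those determining $X$ and from one another. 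Exchangeability of $X$ with these $\nactionsA + \nactionsB - 2$ values then gives, with no ties almost surely, $\Prob[(\actA,\actB) \in \NE \mid \text{heads}] = 1/(\nactionsA + \nactionsB - 1)$. In the tails regime the column-maximum event involves only column $\actB$ of $\Bimatrix^{\pA}$ and the row-maximum event involves only row $\actA$ of $\Bimatrix^{\pB}$, two disjoint sets of underlying inputs, so the events are independent with probabilities $1/\nactionsA$ and $1/\nactionsB$, giving $1/(\nactionsA \nactionsB)$.

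Combining the regimes yields $\Prob[(\actA,\actB) \in \NE] = \proba/(\nactionsA + \nactionsB - 1) + (1 - \proba)/(\nactionsA \nactionsB)$, and multiplication by $\nactionsA \nactionsB$ reproduces \cref{eq:E-W-n}. I expect the only real obstacle to be the independence bookkeeping in the heads regime---in particular, confirming that the row entries of $\Bimatrix^{\pB}$ stay independent of the shared value $X$ and of the column entries of $\Bimatrix^{\pA}$ despite the value-borrowing mechanism; once the disjointness of the underlying random inputs is made explicit, the symmetry argument closes the computation.
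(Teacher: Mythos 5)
Your proposal is correct and follows essentially the same route as the paper's proof: linearity of expectation reduces everything to $\Prob\parens*{(1,1)\in\NE}$, which is then computed by conditioning on the coin toss at that profile, with the heads case giving $1/(\nactionsA+\nactionsB-1)$ by exchangeability of the $\nactionsA+\nactionsB-1$ relevant payoffs and the tails case giving $1/(\nactionsA\nactionsB)$ by independence of the column-maximum and row-maximum events. The only difference is that you make explicit the independence bookkeeping (disjointness of the underlying draws behind the value-borrowing mechanism) that the paper compresses into ``by symmetry'' and ``by independence of the payoffs,'' a welcome but inessential elaboration.
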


The analysis of this class of games is quite complicated for fixed $\nactionsA,\nactionsB$.
Therefore, as it is done in much of the literature, we will take an asymptotic approach, letting the number of actions grow.
More formally, we will consider a sequence $(\Bimatrix_{\run})_{\run
\in\naturals}$ of payoff bimatrices, where the numbers of actions in game $\Bimatrix_{\run}$ are $\nactionsAn$ and $\nactionsBn$, and these two integer sequences are increasing in $\run$ and diverge to $\infty$.
In particular, we allow the number of actions of the two players to diverge at different speeds.

The following proposition shows the asymptotic behavior of the random number of \acp{PNE} where the parameter $\proba$ may vary with $\run$.
We write $\proba_\run$ to highlight this dependence.
In what follows, every asymptotic equality holds for $\run\to\infty$. 
The proof uses a second-moment argument.

For every $\run\in\naturals$, let  $\nequi_{\run}$ be the number of \acp{PNE} in the game $\Bimatrix_{\run}$ and 
\begin{equation}
\label{eq:K-n}
\nactionsn \coloneqq \min(\nactionsAn,\nactionsBn).
\end{equation}

\begin{proposition}
\label{pr:number-PNE-gen}
If $\proba_{\run} = \omega(1/\nactionsn)$, then 
\begin{equation}
\label{eq:W-n-to}
\frac{\nactionsAn+\nactionsBn}{\proba_{\run}\nactionsAn\nactionsBn}\nequi_{\run} \xrightarrow{\Prob} 1.
\end{equation}
\end{proposition}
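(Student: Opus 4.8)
The plan is to run a second-moment argument and conclude by Chebyshev's inequality. Set $s\coloneqq\nactionsAn+\nactionsBn$, write $m_\run\coloneqq\Expect\bracks{\nequi_\run}$ and let $c_\run\coloneqq\frac{s}{\proba_{\run}\nactionsAn\nactionsBn}$ be the normalizing factor. I would first dispose of the first moment using \cref{pr:number-PNE}, which gives $c_\run m_\run=\frac{s}{s-1}+\frac{1-\proba_{\run}}{\proba_{\run}}\bigl(\frac{1}{\nactionsAn}+\frac{1}{\nactionsBn}\bigr)$. The first summand tends to $1$, and since $\frac{1}{\nactionsAn}+\frac{1}{\nactionsBn}\le\frac{2}{\nactionsn}$ the second is at most $\frac{2}{\proba_{\run}\nactionsn}\to0$, precisely because $\proba_{\run}=\omega(1/\nactionsn)$. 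Hence $c_\run m_\run\to1$ and in particular $m_\run\to\infty$. It then suffices to prove $\Var(\nequi_\run)=\smalloh(m_\run^2)$, because then $\Var(c_\run\nequi_\run)=c_\run^2\Var(\nequi_\run)=\smalloh\bigl((c_\run m_\run)^2\bigr)=\smalloh(1)$, so $c_\run\nequi_\run$ has mean $\to1$ and variance $\to0$ and converges to $1$ in probability.

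Write $\nequi_\run=\sum_{(\actA,\actB)}X_{\actA\actB}$ with $X_{\actA\actB}=\ind\{(\actA,\actB)\text{ is a }\ac{PNE}\}$, so $\Var(\nequi_\run)=\sum_{c_1,c_2}\operatorname{Cov}(X_{c_1},X_{c_2})$. The diagonal contributes at most $\sum_{c}\Expect\bracks{X_c}=m_\run=\smalloh(m_\run^2)$. For off-diagonal pairs $c_1=(\actA_1,\actB_1)\ne c_2=(\actA_2,\actB_2)$ I split by geometry. If $\actB_1=\actB_2$, both profiles would have to realize the (a.s.\ unique) maximum of player $\pA$'s payoffs in that column, which is impossible; symmetrically for $\actA_1=\actA_2$ with player $\pB$. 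In both degenerate cases the joint probability vanishes, the covariance is negative, and it may be dropped for an upper bound on $\Var(\nequi_\run)$.

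The crux is the generic case $\actA_1\ne\actA_2$, $\actB_1\ne\actB_2$, which accounts for $\Theta\bigl((\nactionsAn\nactionsBn)^2\bigr)$ ordered pairs. Here $\{c_1\text{ is }\ac{PNE}\}$ depends only on player $\pA$'s payoffs in column $\actB_1$ and player $\pB$'s payoffs in row $\actA_1$, and symmetrically for $c_2$; these cell-sets meet only in the two ``crossing'' cells $(\actA_1,\actB_2)$ and $(\actA_2,\actB_1)$. Conditionally on both crossing cells being of the independent type, $X_{c_1}$ and $X_{c_2}$ become functions of disjoint families of independent variables, hence independent; and the crossing-cell types do not affect the conditional marginal of either $X_{c_i}$. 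By the law of total covariance the whole covariance is therefore carried by the event that at least one crossing cell is a common-payoff cell. Conditioning on the equilibrium values $v_1,v_2$ of $c_1,c_2$ (the dominant term is both common), a crossing cell replaces the independent factor $v_1v_2$ by $(1-\proba_{\run})v_1v_2+\proba_{\run}(v_1\wedge v_2)$, so the covariance integrand picks up the nonnegative correction $2\proba_{\run}\,v_1v_2\,(v_1\wedge v_2-v_1v_2)+\bigoh(\proba_{\run}^2)$ against the weight $v_1^{\,s-3}v_2^{\,s-3}$. This is a Beta integral of order $\Theta(1/s^3)$, yielding per-pair covariance of order $\proba_{\run}^3/s^3$; summing over $\Theta\bigl((\nactionsAn\nactionsBn)^2\bigr)$ pairs gives $\Theta\!\bigl(\proba_{\run}^3(\nactionsAn\nactionsBn)^2/s^3\bigr)=\Theta\!\bigl((\proba_{\run}/s)\,m_\run^2\bigr)=\smalloh(m_\run^2)$, since $\proba_{\run}/s\to0$. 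Generic pairs in which $c_1$ or $c_2$ is an independent cell contribute strictly less and are absorbed in the same bound.

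The main obstacle is the bookkeeping in this final step: one must correctly identify that the \emph{only} correlation channel between two generic profiles is the common-type status of the two crossing cells, and then extract the sharp order $\proba_{\run}^3/s^3$ of the per-pair covariance. A naive bound that only saw ``at least one crossing cell common'' without the $v_1\wedge v_2-v_1v_2$ gain would give per-pair covariance $\proba_{\run}^2/s^2$ and hence merely $\bigoh(m_\run^2)$, which is not enough; getting the extra factor $\proba_{\run}/s$ from the Beta-integral asymptotics is exactly what makes $\Var(\nequi_\run)$ genuinely $\smalloh(m_\run^2)$.
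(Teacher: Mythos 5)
Your proposal is correct, and it shares the paper's skeleton---first moment from \cref{pr:number-PNE}, then a second-moment/Chebyshev argument with the same trichotomy of pairs (diagonal; same row or column, where the joint probability vanishes by continuity; generic) and the same observation that two generic equilibria interact only through the two crossing cells $(\actAalt,\actB)$ and $(\actA,\actBalt)$---but you handle the crucial generic case by a genuinely different device. The paper never computes a covariance: it dominates $\braces*{(\actAalt,\actBalt)\in\NE_{\run}}$ by the weaker event in which the maxima run over $\actionsAn\setminus\braces{\actA}$ and $\actionsBn\setminus\braces{\actB}$; this event is \emph{exactly} independent of $\braces*{(\actA,\actB)\in\NE_{\run}}$ and has probability $\proba_{\run}\frac{1}{\nactionsAn+\nactionsBn-3}+(1-\proba_{\run})\frac{1}{(\nactionsAn-1)(\nactionsBn-1)}=(1+\smalloh(1))\Prob\parens*{(\actA,\actB)\in\NE_{\run}}$, giving $\Expect\bracks*{\nequi_{\run}^{2}}\le\Expect\bracks*{\nequi_{\run}}+(1+\smalloh(1))\parens*{\Expect\bracks*{\nequi_{\run}}}^{2}$ directly. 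You instead run a law of total covariance over the crossing-cell types (correctly noting these leave both conditional marginals untouched, so only the within-type term survives) and extract the sharp per-pair order $\Theta(\proba_{\run}^{3}/s^{3})$ via Beta integrals; your dismissal of the mixed and both-independent type combinations as dominated is true but asserted rather than checked. What each buys: yours gives the exact covariance order; the paper's gives brevity and no type bookkeeping. One correction to your final remark: the sharp $\proba_{\run}^{3}$ order is \emph{not} what makes the argument close. The paper's domination yields a per-pair bound of order $\proba_{\run}^{2}/s^{3}$ (the gain over the product of marginals is a multiplicative $1+\Theta(1/\nactionsn)$, \ie an additive $1/s$-type gain, with no extra factor of $\proba_{\run}$), and summing over $\Theta\parens*{(\nactionsAn\nactionsBn)^{2}}$ pairs already gives $m_{\run}^{2}/s=\smalloh(m_{\run}^{2})$; so a coarser bound than your $v_{1}\wedge v_{2}-v_{1}v_{2}$ refinement suffices, provided one exploits the near-independence structurally rather than via the crude ``at least one crossing cell common'' union bound you rightly reject.
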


The following corollary deals with the case of fixed $\proba$.    

\begin{corollary}
\label{co:ratioBA-number}
If $\proba_{\run}=\proba$ for all $\run\in\naturals$  and $\nactionsBn=\omega(\nactionsAn)$, then
\begin{equation}
\label{eq:W-n-omega}
\frac{\nequi_{\run}}{\nactionsAn} \xrightarrow{\Prob} \proba.
\end{equation}

If $\nactionsBn=\ratioBA_{\run}\nactionsAn$, with $\ratioBA_{\run}\to\ratioBA$, then 
\begin{equation}
\label{eq:W-n-equal}
\frac{\nequi_{\run}}{\nactionsAn} \xrightarrow{\Prob} \frac{\ratioBA}{\ratioBA+1}\proba.
\end{equation}

In particular, if $\nactionsAn=\nactionsBn$, then 
\begin{equation}
\label{eq:W-n-ratioBA}
\frac{\nequi_{\run}}{\nactionsn} \xrightarrow{\Prob} \frac{\proba}{2}.
\end{equation}
\end{corollary}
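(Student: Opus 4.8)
The plan is to deduce all three statements from \cref{pr:number-PNE-gen} by rewriting the normalizing constant and invoking Slutsky's theorem; no fresh probabilistic estimate is needed. Throughout I use that $\proba$ is a fixed positive constant, so that, since $\nactionsn=\min(\nactionsAn,\nactionsBn)\to\infty$, we have $\proba\,\nactionsn\to\infty$, that is $\proba_{\run}=\proba=\omega(1/\nactionsn)$. Hence the hypothesis of \cref{pr:number-PNE-gen} is met and
\[
\frac{\nactionsAn+\nactionsBn}{\proba\,\nactionsAn\nactionsBn}\,\nequi_{\run} \xrightarrow{\Prob} 1 .
\]
(Note that this is exactly where positivity of $\proba$ is used: for $\proba=0$ the normalizer is not even defined, which is why the statement is implicitly restricted to $\proba>0$.)

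First I would record the elementary algebraic identity, valid for every $\run$,
\[
\frac{\nequi_{\run}}{\nactionsAn}
= \left(\frac{\nactionsAn+\nactionsBn}{\proba\,\nactionsAn\nactionsBn}\,\nequi_{\run}\right)
\cdot \frac{\proba\,\nactionsBn}{\nactionsAn+\nactionsBn},
\]
obtained by multiplying and dividing by the normalizer appearing in the proposition. The factor in parentheses converges in probability to $1$ by \cref{pr:number-PNE-gen}, while the remaining factor is a purely deterministic sequence. Since the product of a sequence converging in probability to a constant and a deterministic sequence converging to a constant converges in probability to the product of the limits (Slutsky), it remains only to evaluate the limit of the deterministic factor in each regime.

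For the first claim I write $\dfrac{\proba\,\nactionsBn}{\nactionsAn+\nactionsBn}=\dfrac{\proba}{\nactionsAn/\nactionsBn+1}$; the assumption $\nactionsBn=\omega(\nactionsAn)$ gives $\nactionsAn/\nactionsBn\to0$, so the factor tends to $\proba$ and \eqref{eq:W-n-omega} follows. For the second claim, substituting $\nactionsBn=\ratioBA_{\run}\nactionsAn$ yields $\dfrac{\proba\,\nactionsBn}{\nactionsAn+\nactionsBn}=\dfrac{\proba\,\ratioBA_{\run}}{1+\ratioBA_{\run}}\to\dfrac{\proba\,\ratioBA}{1+\ratioBA}$ since $\ratioBA_{\run}\to\ratioBA$, giving \eqref{eq:W-n-equal}. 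The third claim is the special case $\nactionsAn=\nactionsBn$, that is $\ratioBA_{\run}\equiv1$ and $\ratioBA=1$ with $\nactionsn=\nactionsAn=\nactionsBn$, so \eqref{eq:W-n-equal} reduces to \eqref{eq:W-n-ratioBA}. There is no genuine obstacle here, as the whole argument reduces to the already-established proposition together with an elementary limit; the only points requiring care are verifying the $\omega(1/\nactionsn)$ hypothesis for constant $\proba$ and correctly carrying the deterministic normalizer through Slutsky's theorem.
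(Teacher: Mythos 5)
Your proposal is correct and is exactly the argument the paper intends: the corollary is stated without a separate proof precisely because it follows from \cref{pr:number-PNE-gen} by the rewriting $\nequi_{\run}/\nactionsAn = \bigl(\tfrac{\nactionsAn+\nactionsBn}{\proba\nactionsAn\nactionsBn}\nequi_{\run}\bigr)\cdot\tfrac{\proba\nactionsBn}{\nactionsAn+\nactionsBn}$ and evaluation of the deterministic factor in each regime, as you do. Your side remark on $\proba>0$ is also apt: the paper handles $\proba=0$ separately via the $\Poisson(1)$ convergence of \citet{Pow:IJGT1990}, under which the limits would still hold trivially since a tight sequence divided by $\nactionsAn\to\infty$ vanishes in probability.
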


When $\proba=0$, \ie the payoffs are \iid the number of \acp{PNE} converges in distribution to a Poisson with parameter $1$ \citep[see][]{Pow:IJGT1990}.
When the payoffs within the same action profile are positively correlated, \citet{RinSca:GEB2000} showed that the number of \acp{PNE} diverges.
A similar phenomenon happens here when $\proba>0$.

%
%

\section{Best Response Dynamics}
\label{se:BRD}
We now want to study the behavior of \ac{BRD} in the class of random games introduced in \cref{se:number-PNE}.
First notice that the continuity of $\distr$ implies that the probability of ties in the payoffs of the same player is zero; as a consequence, once the game is realized, a \ac{BRD} is almost surely deterministic. 
In this respect, the symbols $\Prob$ and $\Expect$ refer solely to the randomness of the payoffs, not to any randomness in the \ac{BRD}.
Moreover, the symmetry of our model implies that, without loss of generality, we can assume  the starting position of the \ac{BRD} to be any fixed profile.
In the rest of the paper, without loss of generality, the starting point of any \ac{BRD} will always be the profile $(1,1)$, \ie for every $\run\in\naturals$ 
\begin{equation}
    \Prob(\BRD_{\run}(0)=(1,1))=1.
\end{equation}

\subsection{\ac{BRD} and related stopping times}\label{suse:BRD-tau}

As stressed above, for a given realization of the payoffs and a starting point, the \ac{BRD} is a deterministic algorithm that decides its next step only on the basis of \emph{local information}. 
In what follows we will exploit this fact by revealing the players' payoffs only when this information is needed to select the next position of the \ac{BRD}. 
This whole process, in which the \ac{BRD} moves on a sequentially sampled random game, can be thought of as a non-Markov stochastic process, and the time at which the \ac{BRD} stops can be seen as a stopping time for such a stochastic process.

We will focus our attention on the distribution of the first time the \ac{BRD} reaches a \ac{PNE}. 
For the sake of brevity, we write $\NE_{\run}$ for the (random) set of \acp{PNE} in the game $\Bimatrix_{\run}(\proba_{\run})$, and we define
\begin{equation}
\label{eq:tau-NE}
\stime_{\run}^{\NE} \coloneqq \min\braces*{\per \colon \BRD_{\run}(\per) \in \NE_{\run}}.
\end{equation}
In words, $\stime_{\run}^{\NE}$ is the first time the process $\BRD_{\run}(\per)$ visits a \ac{PNE}.  
Notice that the first step of the \ac{BRD} is somehow different from the following, because, by the definition of the model, at time $0$ no  player is assumed to be already in a best response. 
Contrarily, for any $\per\ge 1$ odd (resp. even) we have that the first (resp. second) player is in a best response, and the other player's action can be changed at the following step, if it is not itself a best response. 
As a consequence of this fact, the forthcoming definitions that depend on the step $\per$ of the \ac{BRD}, require a special treatment for the first few steps, \ie $\per=0,1,2$. 
Clearly, this issue could be solved by assuming that at $\per=0$ the second player is in best response. 
The latter assumption has been made, \eg in \cite{AmiColHam:ORL2021}. 
For the sake of generality, we prefer to avoid this assumption and rather treat the case of $\per=0,1,2$ separately.
We will make use of the following sequence of random sets:
\begin{align}
\label{eq:response}
\begin{split}
\resp_{\run}(\per) &\coloneqq 
\left\{(\actA,\actB) \colon \text{either }\BRD_{\run}(\peralt)=(\actAalt,\actB) \text{ or }
\BRD_{\run}(\peralt)=(\actA,\actBalt),\right.
\\
&\left.\quad \text{ for some }\actAalt\in\actionsAn, \actBalt\in\actionsBn \text{ and }1\le \peralt\le\per\right\},\quad \per\ge 1,
\end{split} \\
\label{eq:response-0}
\resp_{\run}(0) &\coloneqq \braces*{(\actA,1):\ \actA\in\actionsAn}.
\end{align}
Roughly speaking, the random set $\resp_{\run}(\per)$ is the set of all rows and columns that contain one element that has been visited by the \ac{BRD} up to time $\per$. More precisely, for $\per\ge 1$, in order to determine where $\BRD_\run(\per)$ is and whether it is in a \ac{PNE} or not, for each action profile $(\actA,\actB) \in \resp_{\run}(\per)$, at least one of the payoffs $\PayA(\actA,\actB)$ and $\PayB(\actA,\actB)$, at some time $\peralt\le \per$, needs  to be revealed.

For instance, $\BRD_\run(\per) = (\actA,1)$ for some $\actA\in\actionsAn$. 
To determine whether this profile $(\actA,1)$ is a \ac{PNE} or not, the payoff $\PayB(\actA,1)$ has to be compared with all payoffs $\PayB(\actA,\actB)$ for each $\actB\in\actionsBn$.

To better understand the above definition, consider the random time 
\begin{equation}
\label{eq:def-tau-R}
\stime^{\resp}_\run\coloneqq \min\braces*{\per\ge 2:\ \BRD_\run(\per)\in\resp_\run(\per-2)} .
\end{equation}
By the  definitions in \eqref{eq:response} and \eqref{eq:response-0}, at time $\stime_{\run}^\resp$
\begin{enumerate}[(i)]
    \item
    \label{item:NE} 
    either the \ac{BRD} has reached an equilibrium, \ie $\BRD_\run(\stime^{\resp}_\run)=\BRD_\run(\stime^{\resp}_{\run}-1)\in\NE_\run$;
    
    \item
    \label{item:trap} 
    or the \ac{BRD} has reached a trap, \ie there exists some $\per\le \stime_{\run}^\resp-3$ such that $\BRD_\run(\stime_{\run}^\resp+1)=\BRD_\run(\per)$.
\end{enumerate}
An example of the first steps of the BRD is given in \cref{Fig:tauR}. 
The red (respectively blue) dots are the action profiles whose payoff is compared by the row (respectively column) player. 
The red (respectively blue) lines helps visualize the action profiles that the \ac{BRD} considers when the row (respectively column) player is active in the \ac{BRD}. 
The intersections between these lines are action profiles in which the payoffs of both the column and row player have been examined by the \ac{BRD}. 
Such points are represented by two overlapping dots of different size: the color of the biggest (respectively smallest) one is associated to the first (respectively second) player that compares the payoff of such action profiles. 
The numbers indicate the positions of the \ac{BRD} at the various times. 
The odd (respectively even) numbers are in red (respectively blue) since the associated action profiles are visited for the first time when the row (respectively column) player is active. 
The left side of \cref{Fig:tauR} shows an instance of what occurs in case~\eqref{item:NE} in the above list, whereas the right side of the figure contains a  graphical explanation of what happens in case~ \eqref{item:trap}.

\begin{figure}[h]
    \centering
    \includegraphics[width=14cm]{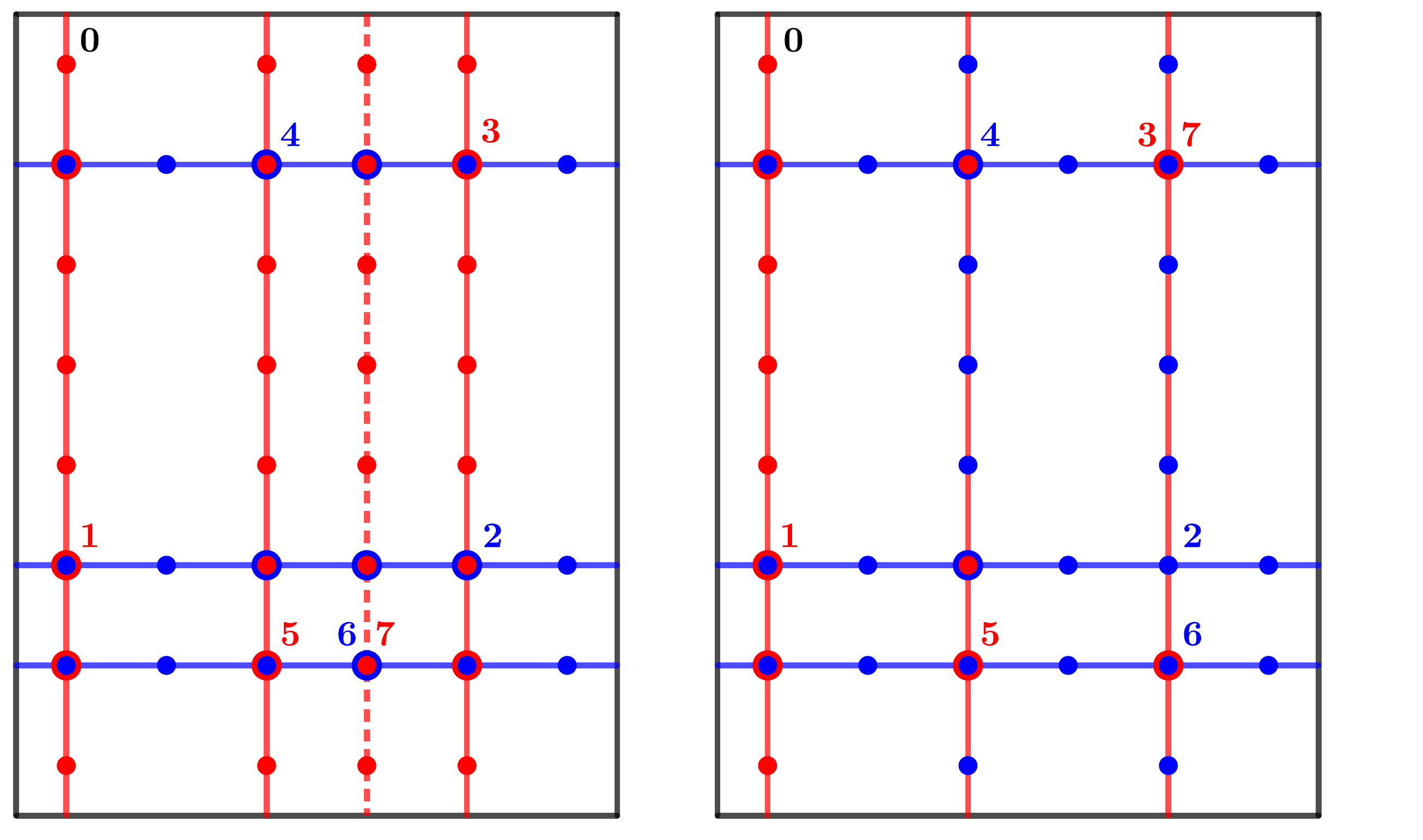}
\caption{Both  figures show an instance of the first seven steps in the \ac{BRD}.
The figure on the left describes the case in which $\stime^{\NE}_\run=6$.
To compute $\BRD_\run(7)$, the row player visits the action profiles on the red dashed lines and finds the maximum payoff at $\BRD_\run(6)$; hence $\BRD_\run(6)\in\NE_\run$. 
In this case $\resp_\run(5)$ consists of all the action profiles on the solid red and blue lines in the figure. 
Hence, $\stime^{\resp}_\run=7$ and, consequently, $\stime^{\resp}_\run-1=\stime^{\NE}_\run$.
The figure on the right describes the case in which the \ac{BRD} discovers a trap.
Since $\BRD_\run(6)$ and $\BRD_\run(3)$ are in the same column, we have  $\BRD_\run(7)=\BRD_\run(3)$. 
In this case $\stime^{\resp}_\run=6$ because $\resp_\run(4)$ consists of all the action profiles on the solid lines in the figure, except for the blue line passing through the action profiles labeled $5$ and $6$. 
Hence, $\BRD_\run(\stime^{\resp}_\run+1)=\BRD_\run(\per)$ for $\per=3\leq \stime^{\resp}_\run-3$.} 
\label{Fig:tauR}
\end{figure}

In general, the trajectory of the \ac{BRD} for all $\per\ge 0$ is completely determined by the trajectory up to the random time $\stime_{\run}^\resp$. Concerning the random time $\stime_{\run}^{\NE}$ defined in \eqref{eq:tau-NE}, notice that
\begin{enumerate}[(a)]
    \item\label{it:R-NE} in the case described in \eqref{item:NE}, we have $\stime_{\run}^{\NE}=\stime_{\run}^{\resp}-1$ if $\stime_{\run}^{\resp}>2$, and $\stime_{\run}^{\NE}=0$ if $\stime_{\run}^{\resp} =2$ (see \cref{Fig:tauR} on the left);
    \item whereas, in the case in \eqref{item:trap}, $\stime_{\run}^{\NE}=\infty$.
\end{enumerate}
By \eqref{eq:response} and \eqref{eq:def-tau-R}, we have
\begin{equation}
\label{eq:up-bound-tau-R}
\Prob(\stime_{\run}^\resp\le 2\nactionsn)=1\ ,
\end{equation}
Moreover, by \eqref{eq:up-bound-tau-R} and \eqref{it:R-NE}, if a \ac{PNE} is eventually reached, then the \ac{BRD} must visit the set $\NE_{\run}$ for the first time within $2\nactionsn-1$ steps. 
Formally,
\begin{equation}
\label{eq:general-bound-tau-NE}
\Prob(\stime_{\run}^{\NE}\le 2\nactionsn-1)=1-\Prob(\stime_{\run}^{\NE}=\infty)\ .
\end{equation}
As shown in  the following lemma, another relevant feature of \eqref{eq:response} and \eqref{eq:response-0} is that, for all $\per\ge 1$,
conditionally on the event $\{\stime_{\run}^{\resp}>\per \}$, even if the set $\resp_{\run}(\per)$ is random,  its cardinality  $\abs*{\resp_\run(\per)}$ is almost surely equal to the deterministic value $\cresp_{\run}$. 

\begin{lemma}
\label{le:size-of-R}
Fix any $\run\in\naturals$ and $\proba_\run\in[0,1]$. 
Call, for every $\per\ge 1$,
\begin{equation}
\label{eq:r-t=}
\cresp_{\run}(\per)\coloneqq 
\ceil*{\dfrac{\per+1}{2}}\nactionsAn+\floor*{\dfrac{\per+1}{2}}\nactionsBn - \floor*{\dfrac{\per+1}{2}}\ceil*{\dfrac{\per+1}{2}}.
\end{equation}
Then, for every $\per\ge 1$,
\begin{equation}
\label{eq:card-R}
\Prob\parens*{\abs{\resp_{\run}(\per)} = \cresp_{\run}(\per)\mid \stime_{\run}^{\resp}>\per}=1.
\end{equation}
\end{lemma}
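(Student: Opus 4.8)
The plan is to reduce the statement to a purely combinatorial count of how many distinct rows and columns the trajectory of $\BRD_\run$ has touched by time $\per$, and then to show that the conditioning event $\{\stime_\run^\resp>\per\}$ forces this count to take the claimed deterministic value. The first observation is that, by \eqref{eq:response} and \eqref{eq:response-0}, $\resp_\run(\per)$ is always a union of \emph{full} rows and \emph{full} columns: a profile $(\actA,\actB)$ lies in it precisely when row $\actA$ or column $\actB$ has been occupied by $\BRD_\run(\peralt)$ for some $1\le\peralt\le\per$ (the starting column $1$ being occupied already at time $1$). Writing $R_\per$ and $C_\per$ for the numbers of distinct rows and columns so visited, inclusion--exclusion gives
\begin{equation*}
\abs{\resp_\run(\per)} = R_\per\,\nactionsBn + C_\per\,\nactionsAn - R_\per C_\per .
\end{equation*}
Comparing with \eqref{eq:r-t=}, the claim becomes equivalent to showing that, on the event $\{\stime_\run^\resp>\per\}$, one has $R_\per=\floor*{(\per+1)/2}$ and $C_\per=\ceil*{(\per+1)/2}$ surely.

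Next I would exploit the alternating structure of the dynamics. Since the row (resp.\ column) coordinate is the only one that can change at a transition out of an even (resp.\ odd) time, a new row can appear only at the odd times $1,3,5,\dots$ and a new column only at the even times $2,4,6,\dots$, with the initial column $1$ present from time $1$. Hence, \emph{provided} every transition up to time $\per$ genuinely changes the active coordinate to a value not seen before, a direct parity count over $\{1,\dots,\per\}$ yields exactly $\floor*{(\per+1)/2}$ rows and $\ceil*{(\per+1)/2}$ columns, the extra column being the starting column $1$. So it remains to certify, on $\{\stime_\run^\resp>\per\}$, both that no transition stalls and that no row or column is ever repeated.

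This is the crux, and it is where the $-2$ offset in \eqref{eq:def-tau-R} does the work. Unpacking the definitions, the condition $\BRD_\run(\peralt)\notin\resp_\run(\peralt-2)$ says that \emph{neither} the current row \emph{nor} the current column of $\BRD_\run(\peralt)$ was visited during times $1,\dots,\peralt-2$. For even $\peralt$ the current row is the one just introduced at the odd step $\peralt-1$ while the current column is the one just introduced at step $\peralt$; thus a single membership test simultaneously certifies that the row of step $\peralt-1$ and the column of step $\peralt$ are both new, and symmetrically for odd $\peralt$, with the base case $\peralt=2$ handled by $\resp_\run(0)$ (the full column $1$). Moreover, a stalled transition at step $\peralt\ge 2$ would leave the unchanged coordinate equal to its value at step $\peralt-2$, forcing $\BRD_\run(\peralt)\in\resp_\run(\peralt-2)$ and hence $\stime_\run^\resp\le\peralt$; so stalling is excluded on the event. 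Letting $\peralt$ range over $2,\dots,\per$ then certifies pairwise distinctness of all visited rows and of all visited columns (including column $1$), which is exactly the input the parity count of the previous paragraph requires; since everything holds surely on $\{\stime_\run^\resp>\per\}$, the conditional probability in \eqref{eq:card-R} is $1$. The main obstacle is precisely this bookkeeping: matching each introduced row and column to the unique membership test that certifies its novelty, correctly absorbing the initial column $1$ into the ceiling term, and checking the low-index cases $\per=1,2$ so that the ceiling/floor split is reproduced without off-by-one errors.
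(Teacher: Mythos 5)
Your proof is correct and rests on exactly the same mechanism as the paper's: conditioned on $\{\stime_{\run}^{\resp}>\per\}$, the \ac{BRD} alternately introduces a brand-new row at each odd step and a brand-new column at each even step (with column $1$ present from the start), so $\resp_{\run}(\per)$ is a union of $\floor*{(\per+1)/2}$ full rows and $\ceil*{(\per+1)/2}$ full columns whose cardinality is given by inclusion--exclusion. The only difference is presentational---the paper runs an induction on $\per$ via the increment formula for $\cresp_{\run}(\per)-\cresp_{\run}(\per-1)$, whereas you count globally and spell out in more detail why the membership tests $\BRD_{\run}(\peralt)\notin\resp_{\run}(\peralt-2)$ certify no stalls and no repeated rows or columns, a point the paper asserts without elaboration.
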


\subsection{Potential games}
\label{suse:p=1}

We start studying the case of $\proba=1$, \ie the case of potential games.
A potential game cannot have traps.
Therefore, thanks to \eqref{it:R-NE}, 
\begin{equation}
\label{eq:equivalence-potential}
    \braces*{\stime_{\run}^{\resp}=\per}
    =\braces*{\stime_{\run}^{\NE}=\per -1},\quad\text{for }\per> 2, 
\end{equation}
and, together with \eqref{eq:general-bound-tau-NE} and \eqref{eq:up-bound-tau-R}, we recover the well-known fact
\begin{equation}
\label{eq:NE-always-potential}
    \Prob(\stime_{\run}^{\NE}<\infty)=1\,,\qquad \run\ge 1 .
\end{equation}
Define
\begin{align}
\label{eq:q-t-n}
\qu_{\per,\run} &\coloneqq \Prob\parens*{\stime_{\run}^{\NE} = \per \mid \stime_{\run}^{\NE} \ge \per}\quad\text{for all }\per \in \braces*{0,\dots,2 \nactionsn-1}.
\end{align}
In words, $\qu_{\per,\run}$ represents the conditional probability that a \ac{PNE} is reached at time $\per$, given that it was not reached before.
Notice that if $\per=2\nactionsn-1$, then $\qu_{\per,\run}=1$.
We start with a non-asymptotic result, which, for every $\run\in\naturals$, provides the value of $\qu_{\per,\run}$ for all $\per\le 2\nactionsn-1$.

\begin{lemma}
\label{le:bound-tau-NE}
Fix $\run\in\naturals$, let $\proba_{\run}=1$ and recal the definition of $\cresp_{\run}(\per)$ in \eqref{eq:r-t=}. 
Then, 
\begin{align}
\label{eq:q-0-n}
\qu_{0,\run} &=
\frac{1}{\nactionsAn+\nactionsBn-1},
\\
\label{eq:q-1-n}
\qu_{1,\run} &=
\frac{\nactionsAn-1}{\nactionsAn+\nactionsBn-2},
\intertext{and, for $\per \ge 2$,}
\label{eq:q-t-n=}
\qu_{\per,\run} &=
\frac{\cresp_{\run}(\per-1)}{\cresp_{\run}(\per)}.
\end{align}
\end{lemma}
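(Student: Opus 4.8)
The plan is to use that $\proba_{\run}=1$ makes $\Bimatrix_{\run}(\proba_{\run})$ a common-interest game with a single payoff matrix $\Pay=\PayA=\PayB$ whose entries are \iid $\distr$. In such a game each \ac{BRD} step moves the current cell to the largest payoff on the one line (row or column) examined by the active player, so the visited payoffs strictly increase and the dynamics is a hill-climb; moreover, being a potential game, it has no traps, so stopping is equivalent to reaching a \ac{PNE} (cf.\ \eqref{eq:equivalence-potential}). I would first record the geometric fact that $\resp_{\run}(\per)$ from \eqref{eq:response} is exactly the union of the lines examined up to step $\per$, and that the current position $\BRD_{\run}(\per)$ carries the largest payoff in $\resp_{\run}(\per-1)$: by construction $\BRD_{\run}(\per)$ is the maximum of the line examined at step $\per-1$, and the increasing-payoff property forces this value to dominate the maxima of all earlier examined lines, hence all of $\resp_{\run}(\per-1)$. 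Consequently, at step $\per$ the active player examines the remaining line through $\BRD_{\run}(\per)$, and $\stime^{\NE}_{\run}=\per$ holds if and only if $\BRD_{\run}(\per)$ is the maximum of that line too, equivalently if and only if $\BRD_{\run}(\per)$ attains the maximum of $\Pay$ over the whole of $\resp_{\run}(\per)=\resp_{\run}(\per-1)\cup(\text{line examined at }\per)$.

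For $\per\ge 2$ the heart of the argument is then a conditional-uniformity computation. Conditioning on $\{\stime_{\run}^{\NE}\ge\per\}$, which by \eqref{eq:equivalence-potential} coincides with $\{\stime_{\run}^{\resp}>\per\}$, I would invoke \cref{le:size-of-R} to fix $\abs{\resp_{\run}(\per-1)}=\cresp_{\run}(\per-1)$ and $\abs{\resp_{\run}(\per)}=\cresp_{\run}(\per)$ almost surely, so that the line examined at step $\per$ contributes exactly $\cresp_{\run}(\per)-\cresp_{\run}(\per-1)$ \emph{fresh} entries. The crucial point is that the event of reaching step $\per$ is measurable with respect to the relative ranking of the entries in $\resp_{\run}(\per-1)$ alone, since the \ac{BRD} trajectory up to $\per$ only ever compares such entries, whereas the fresh entries are \iid uniform and independent of that ranking. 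Because for \iid continuous variables the order statistics are independent of the ranks, conditioning on reaching step $\per$ leaves unchanged both the law of the maximum of $\resp_{\run}(\per-1)$ and its independence from the fresh block; hence, conditionally, the overall maximum of $\Pay$ over $\resp_{\run}(\per)$ is equally likely to lie in any of its $\cresp_{\run}(\per)$ cells, and stopping at $\per$ is precisely the event that it lies in the sub-block $\resp_{\run}(\per-1)$. Therefore
\begin{equation*}
\qu_{\per,\run}=\frac{\abs{\resp_{\run}(\per-1)}}{\abs{\resp_{\run}(\per)}}=\frac{\cresp_{\run}(\per-1)}{\cresp_{\run}(\per)},
\end{equation*}
which is \eqref{eq:q-t-n=}.

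The two initial steps, excluded from the general reformulation, I would treat by hand. For $\per=0$ one simply notes that $(1,1)$ is a \ac{PNE} if and only if $\Pay(1,1)$ is the largest of the $\nactionsAn+\nactionsBn-1$ payoffs in its row and column, giving \eqref{eq:q-0-n}. For $\per=1$ I would compute the two unconditional probabilities and divide: $\stime_{\run}^{\NE}=1$ requires the maximum of column $1$ to be attained at some row $r_{1}\neq 1$ and to be simultaneously the maximum of its own row; using that for \iid entries the value of the column maximum is independent of the row at which it is attained, one gets $\Prob(\stime_{\run}^{\NE}=1)=(\nactionsAn-1)/(\nactionsAn+\nactionsBn-1)$, and dividing by $\Prob(\stime_{\run}^{\NE}\ge1)=1-\qu_{0,\run}=(\nactionsAn+\nactionsBn-2)/(\nactionsAn+\nactionsBn-1)$ yields \eqref{eq:q-1-n}. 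The step I expect to be the main obstacle is the conditional-uniformity claim of the second paragraph: one must argue carefully that the conditioning event genuinely depends only on the ranks inside $\resp_{\run}(\per-1)$ and is independent of the fresh entries, so that the independence of ranks and order statistics applies and the location of the overall maximum is not biased beyond the cardinality ratio.
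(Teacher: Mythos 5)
Your proof is correct and takes essentially the same route as the paper: the cases $\per=0,1$ are handled by the same direct computations, and for $\per\ge 2$ your observation that $\Pay_{\run}(\BRD_{\run}(\per))$ is the maximum of the $\cresp_{\run}(\per-1)$ already-examined payoffs, to be compared against the fresh uniform entries of the newly examined line, is exactly the paper's computation $\Prob\parens*{\Beta(\cresp_\run(\per-1),1)>\Beta\parens*{\nactionsstar_\run-\floor*{\frac{\per+1}{2}},1}}=\cresp_\run(\per-1)/\cresp_\run(\per)$. The only difference lies in how the conditioning step you yourself flag as the main obstacle is formalized: since $\resp_{\run}(\per-1)$ is a random set, the paper decomposes $\braces*{\stime_{\run}^{\NE}\ge\per}$ into the disjoint trajectory events $\bestpath_{\per}^{\ppath}$, $\ppath\in\paths_{\per}$, so that the examined region is deterministic on each piece and your rank/order-statistics independence argument applies verbatim (note only that ``equally likely to lie in any of the $\cresp_{\run}(\per)$ cells'' should be read as the block probability $\cresp_{\run}(\per-1)/\cresp_{\run}(\per)$, since given the ranks the maximum of $\resp_{\run}(\per-1)$ sits precisely at $\BRD_{\run}(\per)$).
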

As an immediate consequence of the previous lemma, we obtain the exact form of the distribution of the random time $\stime_{\run}^{\NE}$.

\begin{theorem}
\label{th:exact-distr-tau}
With the definitions of \cref{le:bound-tau-NE}, we have
\begin{equation}
\label{eq:surv-tau}
\Prob\parens*{\stime_{\run}^{\NE} > \per} =
\prod_{j=0}^{\per}(1-\qu_{j,\run}).
\end{equation}
\end{theorem}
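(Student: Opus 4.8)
The plan is to recognize \eqref{eq:surv-tau} as the standard relation between the discrete-time hazard rate $\qu_{\per,\run}$ and the survival function of $\stime_{\run}^{\NE}$, and to establish it by a short telescoping induction. Since we are in the potential-game regime $\proba_{\run}=1$, \eqref{eq:NE-always-potential} guarantees $\stime_{\run}^{\NE}<\infty$ almost surely, while \cref{le:bound-tau-NE} gives $\qu_{\per,\run}<1$ for every $\per<2\nactionsn-1$. This ensures that the conditioning event $\braces*{\stime_{\run}^{\NE}\ge\per}$ appearing in the definition \eqref{eq:q-t-n} has strictly positive probability for every $\per\le 2\nactionsn-1$, so each $\qu_{\per,\run}$ is well defined, and it is the only point that requires checking before the algebra.

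First I would introduce the survival sequence $S_{\per}\coloneqq\Prob(\stime_{\run}^{\NE}\ge\per)$, noting that $S_0=1$ because $\stime_{\run}^{\NE}\ge 0$ holds surely. Since $\braces*{\stime_{\run}^{\NE}=\per}\subseteq\braces*{\stime_{\run}^{\NE}\ge\per}$, the definition \eqref{eq:q-t-n} rewrites as $\Prob(\stime_{\run}^{\NE}=\per)=\qu_{\per,\run}\,S_{\per}$. Combining this with the disjoint decomposition $\braces*{\stime_{\run}^{\NE}\ge\per}=\braces*{\stime_{\run}^{\NE}=\per}\sqcup\braces*{\stime_{\run}^{\NE}\ge\per+1}$ yields the recursion
\begin{equation*}
S_{\per+1}=S_{\per}-\Prob(\stime_{\run}^{\NE}=\per)=S_{\per}\,(1-\qu_{\per,\run}).
\end{equation*}

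Then I would unroll this recursion from $S_0=1$ by induction on $\per$, obtaining $S_{\per+1}=\prod_{j=0}^{\per}(1-\qu_{j,\run})$. Rewriting the left-hand side as $S_{\per+1}=\Prob(\stime_{\run}^{\NE}\ge\per+1)=\Prob(\stime_{\run}^{\NE}>\per)$ then gives exactly \eqref{eq:surv-tau}.

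There is essentially no analytic obstacle: the substantive content was already packaged into \cref{le:bound-tau-NE}, and this statement merely converts the hazard rates into a survival function. The only step requiring a moment of care is the well-definedness of the conditional probabilities, i.e.\ that $S_{\per}>0$ throughout the relevant range; this is precisely what the potential-game assumption supplies through $\qu_{\per,\run}<1$ for $\per<2\nactionsn-1$. The same fact also explains why the product terminates correctly: at $\per=2\nactionsn-1$ one has $\qu_{\per,\run}=1$, forcing $S_{2\nactionsn}=0$, which reconfirms that a \ac{PNE} is reached by time $2\nactionsn-1$ in agreement with \eqref{eq:general-bound-tau-NE}.
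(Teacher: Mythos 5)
Your proof is correct and is essentially the paper's argument: the paper telescopes the one-step identity $\Prob\parens*{\stime_{\run}^{\NE} > \per}=(1-\qu_{\per,\run})\Prob\parens*{\stime_{\run}^{\NE} > \per-1}$ down to $\per=0$, which is exactly your recursion $S_{\per+1}=S_{\per}(1-\qu_{\per,\run})$ in different notation, since $\braces*{\stime_{\run}^{\NE}\ge\per}=\braces*{\stime_{\run}^{\NE}>\per-1}$ for the integer-valued $\stime_{\run}^{\NE}$. Your extra verification that the conditioning events have positive probability (via $\qu_{\per,\run}<1$ for $\per<2\nactionsn-1$) is a detail the paper leaves implicit, not a divergence in method.
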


The following proposition provides the asymptotic expectation and variance of $\stime_{\run}^{\NE}$ when the two players have the same action set.

\begin{proposition}
\label{pr:KA=KB}
If  $\proba_{\run}=1$ and $\nactionsAn=\nactionsBn$ for every $\run\in\naturals$. Then 
\begin{align}
\label{eq:lim-E-tau-NE}
\lim_{\run\to\infty} \Expect\bracks*{\stime_{\run}^{\NE}} 
&= \expo - 1 \\
\label{eq:lim-Var-tau-NE}
\lim_{\run\to\infty} \Var\bracks*{\stime_{\run}^{\NE}} 
&\approx 0.767. 
\end{align}
\end{proposition}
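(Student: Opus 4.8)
The plan is to start from the exact survival function in \cref{th:exact-distr-tau} together with the explicit conditional probabilities of \cref{le:bound-tau-NE}, specialize everything to the symmetric case $\nactionsAn=\nactionsBn=\nactionsn$, and read off a limiting distribution whose first two moments give \eqref{eq:lim-E-tau-NE} and \eqref{eq:lim-Var-tau-NE}. First I would simplify \eqref{eq:r-t=} in the symmetric case: using $\ceil*{(\per+1)/2}+\floor*{(\per+1)/2}=\per+1$ and $\floor*{(\per+1)/2}\ceil*{(\per+1)/2}=\floor*{(\per+1)^{2}/4}$, it collapses to
\[
\cresp_{\run}(\per)=(\per+1)\nactionsn-\floor*{\tfrac{(\per+1)^{2}}{4}}.
\]
Substituting into \eqref{eq:q-0-n}--\eqref{eq:q-t-n=} gives $\qu_{0,\run}=1/(2\nactionsn-1)$, the exact identity $\qu_{1,\run}=1/2$ (valid for every $\run$), and $\qu_{\per,\run}=\cresp_{\run}(\per-1)/\cresp_{\run}(\per)$ for $\per\ge2$.

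Next I would fix $\per$ and let $\run\to\infty$, so $\nactionsn\to\infty$. Since the subtracted floor term in $\cresp_{\run}$ is of lower order than the linear term, $\qu_{0,\run}\to0$ and $\qu_{\per,\run}\to\per/(\per+1)$ for $\per\ge2$; combined with $\qu_{1,\run}=1/2$, the product \eqref{eq:surv-tau} converges termwise to
\[
\lim_{\run\to\infty}\Prob\parens*{\stime_{\run}^{\NE}>\per}=\prod_{j=0}^{\per}\frac{1}{j+1}=\frac{1}{(\per+1)!}.
\]
Thus the limit law is the nonnegative integer distribution with survival function $1/(\per+1)!$. Its moments follow from the tail-sum identities $\Expect[X]=\sum_{\per\ge0}\Prob(X>\per)$ and $\Expect[X^{2}]=\sum_{\per\ge0}(2\per+1)\Prob(X>\per)$, which yield $\sum_{k\ge1}1/k!=\expo-1$ and $\sum_{m\ge1}(2m-1)/m!=2\expo-(\expo-1)=\expo+1$ respectively. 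Hence the limiting variance is $(\expo+1)-(\expo-1)^{2}=3\expo-\expo^{2}$, the constant appearing in \eqref{eq:lim-Var-tau-NE}, and the limiting expectation is $\expo-1$, as in \eqref{eq:lim-E-tau-NE}.

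The step I expect to be the main obstacle is that $\stime_{\run}^{\NE}$ is supported on $\{0,\dots,2\nactionsn-1\}$, a range that grows with $\run$, so termwise convergence of the survival function does not by itself transfer to the infinite tail sums defining the moments. I would close this gap by dominated convergence for series, establishing an $\run$-independent summable majorant. The key estimate is $1-\qu_{\per,\run}\le2/(\per+1)$ for all $2\le\per\le2\nactionsn-1$: indeed $1-\qu_{\per,\run}=(\cresp_{\run}(\per)-\cresp_{\run}(\per-1))/\cresp_{\run}(\per)\le\nactionsn/\cresp_{\run}(\per)$, because the numerator never exceeds $\nactionsn$, while $\cresp_{\run}(\per)\ge(\per+1)\nactionsn-(\per+1)^{2}/4\ge\tfrac12(\per+1)\nactionsn$ throughout the range $\per+1\le2\nactionsn$. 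Feeding this into \eqref{eq:surv-tau} and using $\qu_{1,\run}=1/2$ gives $\Prob(\stime_{\run}^{\NE}>\per)\le2^{\per-1}/(\per+1)!$ for $\per\ge2$, uniformly in $\run$; since $\sum_{\per\ge0}(2\per+1)2^{\per-1}/(\per+1)!<\infty$, dominated convergence legitimizes passing the limit inside both tail sums. The only care needed is to track the floor functions when pinning down the constants and to treat the first steps $\per=0,1,2$ directly, exactly as flagged before \eqref{eq:response}; these are finitely many terms, each bounded by $1$ and convergent to its limit, so they cause no difficulty.
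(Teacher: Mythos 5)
Your proposal is correct, and it takes a genuinely different route from the paper in the two places where real work is needed. The common core is the same: both start from \cref{le:bound-tau-NE} and \cref{th:exact-distr-tau}, specialize $\cresp_{\run}(\per)=(\per+1)\nactionsn-\floor{(\per+1)^{2}/4}$, and identify the termwise limit $\Prob(\stime_{\run}^{\NE}>\per)\to 1/(\per+1)!$ (your checks $\qu_{1,\run}=1/2$ exactly and $1-\qu_{\per,\run}\le\nactionsn/\cresp_{\run}(\per)\le 2/(\per+1)$ on the whole support $\per+1\le 2\nactionsn$ are both valid). The divergence is in how the limit is passed to the moments. For the interchange of limit and sum, the paper truncates at $\Per_{\run}=\floor{(\log\nactionsn)^{2}}$, shows $\Prob(\stime_{\run}^{\NE}>\per)\le 2^{-\Per_{\run}}=\smalloh(\nactionsn^{-1})$ beyond the cutoff, and controls the product below the cutoff with a uniform multiplicative error $(1+\bigoh(\Per_{\run}^{2}/\nactionsn))$; you instead exhibit the $\run$-independent summable majorant $2^{\per-1}/(\per+1)!$ and invoke dominated convergence for series, which is cleaner, though the paper's quantitative truncation also yields explicit rates that your qualitative argument does not. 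For the variance, the paper converts to a continuous-time computation via the integral identity of \cref{le:var-formula} and evaluates the two resulting integrals numerically ($\approx 0.258$ and $\approx 0.509$), whereas you stay discrete, use the tail-sum identity $\Expect[X^{2}]=\sum_{\per\ge 0}(2\per+1)\Prob(X>\per)$, and obtain the closed form $\lim_{\run\to\infty}\Var[\stime_{\run}^{\NE}]=(\expo+1)-(\expo-1)^{2}=3\expo-\expo^{2}\approx 0.7658$. This is a strict improvement in precision: the exact value of the paper's pair of integrals is in fact $3\expo-\expo^{2}=0.76579\ldots$, so the stated ``$\approx 0.767$'' in \cref{pr:KA=KB} carries a small rounding artifact from the numerical integration (it should round to $0.766$), and your closed form both confirms the result within the stated precision and sharpens it.
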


We point out that our expression for $\Expect\bracks*{\stime_{\run}^{\NE}}$  coincides with the one in \citet[theorem 4]{DurGau:AGT2016}, but the proof techniques exploited in \cref{th:exact-distr-tau} are completely different from the ones used by  \citet{DurGau:AGT2016}; moreover, our analysis considers the whole distribution of $\stime_{\run}^{\NE}$ and not just its expectation.

%
%

\subsection{Games with \iid payoffs}
\label{suse:iid}

In a recent paper \citet[theorem~2.3]{AmiColHam:ORL2021} showed that, when $\proba_{\run}=0$ and $\nactionsAn=\nactionsBn$, with high probability as $\run\to\infty$, the \ac{BRD} does not converge to the set $\NE_{\run}$. We start by generalizing their result to the general setting $\nactionsAn\neq\nactionsBn$. 
\begin{theorem}
\label{th:iid-rectangular}
Let $\proba_{\run}=0$ for all $\run\in\naturals$.  Then
\begin{equation}
\label{eq:tau<infty}
\lim_{\run\to\infty}\Prob(\stime_{\run}^{\NE}<\infty)=0.
\end{equation}
\end{theorem}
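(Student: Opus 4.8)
The plan is to bound directly the probability of case~\eqref{item:NE}, since by \eqref{it:R-NE} the event $\{\stime_{\run}^{\NE}<\infty\}$ coincides with the event that the \ac{BRD} terminates in a \ac{PNE} rather than in a trap, and this is decided at the stopping time $\stime_{\run}^{\resp}\le 2\nactionsn$ of \eqref{eq:up-bound-tau-R}. I would run the \ac{BRD} on the sequentially sampled game, revealing an opponent's payoffs in a row or column only at the step in which the corresponding player first becomes active there. Conditioning on the event $\{\stime_{\run}^{\resp}>\per\}$ that the dynamics is still exploring, \cref{le:size-of-R} certifies that at an even step the row player becomes active in a freshly visited column and at an odd step the column player becomes active in a freshly visited row.

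The heart of the argument is a one-step trichotomy, which crucially uses $\proba_{\run}=0$. Consider an odd step $\per$ at which the column player is active in a fresh row $\actA^{*}$. Because the row is fresh and $\PayA,\PayB$ are independent when $\proba_{\run}=0$, the payoffs $(\PayB(\actA^{*},\actB))_{\actB\in\actionsBn}$ are, conditionally on the exploration history, \iid uniform and independent of everything revealed so far, so the column player's best response is uniform over the $\nactionsBn$ columns. Writing $c$ for the number of columns already visited, this gives three mutually exclusive outcomes: with probability $1/\nactionsBn$ the best response is the current column, the profile is unchanged and a \ac{PNE} is confirmed (case~\eqref{item:NE}); with probability $(c-1)/\nactionsBn$ it is another already-visited column, the next profile revisits an explored line and the \ac{BRD} enters a trap (case~\eqref{item:trap}); and with probability $(\nactionsBn-c)/\nactionsBn$ it is a fresh column and exploration continues. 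The symmetric statement holds at even steps with $\nactionsBn,c$ replaced by $\nactionsAn$ and the number $r$ of visited rows.

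Summing over steps, I would write
\begin{equation*}
\Prob(\stime_{\run}^{\NE}<\infty)
\le \frac{1}{\nactionsAn}\sum_{\per\ \text{even}}\Prob(\stime_{\run}^{\resp}>\per)
+\frac{1}{\nactionsBn}\sum_{\per\ \text{odd}}\Prob(\stime_{\run}^{\resp}>\per),
\end{equation*}
using that the per-step probability of confirming a \ac{PNE} is $1/\nactionsAn$ or $1/\nactionsBn$. For the survival factor I would retain only the ``continue'' probabilities at the even (respectively odd) steps, which by the trichotomy and \cref{le:size-of-R} give $\Prob(\stime_{\run}^{\resp}>\per)\le \prod_{r\le R}(1-r/\nactionsAn)\le \exp(-R(R+1)/(2\nactionsAn))$ with $R=\Theta(\per)$ the number of visited rows, and analogously $\exp(-\Theta(c^{2}/\nactionsBn))$ for columns. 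A Gaussian tail estimate then yields $\sum_{\per}\Prob(\stime_{\run}^{\resp}>\per)=\bigoh(\sqrt{\nactionsAn})$ for the even sum and $\bigoh(\sqrt{\nactionsBn})$ for the odd one, whence $\Prob(\stime_{\run}^{\NE}<\infty)=\bigoh(1/\sqrt{\nactionsAn})+\bigoh(1/\sqrt{\nactionsBn})\to 0$ as $\nactionsAn,\nactionsBn\to\infty$. Note that the rectangular case costs nothing extra, since the two action sets are handled separately throughout.

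The main obstacle I anticipate is making the one-step trichotomy fully rigorous: one must set up the filtration generated by the revealed payoffs and verify that, conditionally on $\{\stime_{\run}^{\resp}>\per\}$ and the entire past, the fresh line's opponent-payoffs remain \iid uniform and independent of the current position. This is exactly where $\proba_{\run}=0$ is indispensable (for $\proba_{\run}>0$ the $\PayA$ and $\PayB$ entries are coupled), and where the sequential-revelation bookkeeping of \cref{le:size-of-R} must be invoked to certify freshness. A secondary nuisance is the careful treatment of the initial steps $\per=0,1,2$, flagged after \eqref{eq:def-tau-R}, where fewer than one full row and column have been explored and the counts $r,c$ start from degenerate values; these contribute only lower-order terms and can be bounded crudely.
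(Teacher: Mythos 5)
Your proposal is correct and takes essentially the same route as the paper's own proof: your one-step trichotomy (confirm with probability $1/\nactionsAn$ or $1/\nactionsBn$, hit a visited line and trap, or continue to a fresh line) is exactly what the paper encodes in the conditional probabilities \eqref{eq:P-Ct-Bt-1} and \eqref{eq:P-Bt} for the events $\brdNE_{\run}(\per)$ and $\badiid_{\run}(\per)$, and your survival estimate $\exp(-\Theta(\per^{2}/\nactionsn))$ with Gaussian-tail summation reproduces \eqref{eq:PBt-equiv}--\eqref{eq:P-C-2KA-1}, yielding the same $\bigoh(1/\sqrt{\nactionsn})$ rate. The differences are pure bookkeeping---you condition on $\braces*{\stime_{\run}^{\resp}>\per}$ and split the sums by parity, whereas the paper iterates the recursion \eqref{eq:PCt}---so beyond the filtration formalities you already flag, no gap remains.
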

Even though the result in \cref{th:iid-rectangular} can be achieved by naturally adapting the proof of \citet[theorem~2.3]{AmiColHam:ORL2021} to the rectangular case, for completeness we present a detailed proof in \cref{suse:proof-BRD-iid}. 
It is worth stressing  that, contrarily to our setting, \citet{AmiColHam:ORL2021} assume that the \ac{BRD} starts at an action profile in which the second player is already at best-response. 
The fact that we dispense with this  assumption makes the quantities appearing in our proof slightly different from those in \citet{AmiColHam:ORL2021}.

%
%

\subsection{The general case}
\label{suse:general}
The main purpose of this paper is to complement the negative result in \cref{th:iid-rectangular} by showing that a tiny bit of (positive) correlation in the players payoffs, namely $\proba_{\run}>0$, is enough to dramatically change the picture and make it similar to the case of a potential game, \ie $\proba_\run=1$.

More precisely, the following result shows that, if $\proba_\run$ is not too small compared to $\nactionsn$, then the probability of the event $\braces*{\stime_{\run}^{\NE}=\infty}$ vanishes as $\run$ goes to infinity.
\begin{theorem}
\label{th:BRD-p-not0}
Fix a positive sequence $\proba_{\run}$. If 
\begin{equation}
\label{eq:condition-K}
		\lim_{\run\to\infty}\frac{\log(\proba_{\run})}{\log(\nactionsn )}=0,
		\end{equation}
then
		\begin{equation}\label{eq:tNE<infty}
		\lim_{\run\to\infty} \Prob\parens*{\stime_{\run}^{\NE}=\infty}=0.
		\end{equation}
\end{theorem}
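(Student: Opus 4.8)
The plan is to show that the complementary event $\{\stime_{\run}^{\NE}<\infty\}$ has probability tending to $1$ by setting up, along the sequentially revealed trajectory of the \ac{BRD}, a race between two competing events: reaching a \ac{PNE} (driven by the correlation parameter $\proba_{\run}$) and closing a cycle, i.e.\ falling into a trap. Recall from the discussion around \eqref{eq:def-tau-R} and item~\eqref{it:R-NE} that the \ac{BRD} reaches a \ac{PNE} precisely in case~\eqref{item:NE} at the stopping time $\stime_{\run}^{\resp}$, and otherwise enters a trap; hence $\Prob(\stime_{\run}^{\NE}=\infty)$ equals the probability that the exploration closes a cycle before it ever meets a profile at which the active player is already best-responding. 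I would fix a truncation level $m_{\run}$ (tuned at the end) and bound
\begin{equation*}
\Prob(\stime_{\run}^{\NE}=\infty)\le \Prob(\text{a trap is detected at some }\per\le m_{\run})+\prod_{\per=0}^{m_{\run}}\parens*{1-\underline{\qu}_{\per,\run}},
\end{equation*}
where $\underline{\qu}_{\per,\run}$ is a lower bound, uniform in $\per\le m_{\run}$, for the conditional probability of reaching a \ac{PNE} while processing step $\per$, given that the exploration has neither stopped nor been trapped before. The second term controls the event $\{\stime_{\run}^{\resp}>m_{\run}\}$ of meeting no \ac{PNE} in the first $m_{\run}$ steps, the first controls an early trap.

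The core estimate is the lower bound on $\underline{\qu}_{\per,\run}$, and this is where $\proba_{\run}>0$ enters. Consider a step at which, say, the row player has just moved to a profile $(\actA,\actB)$, so that $X\coloneqq \PayA(\actA,\actB)=\max_{\actAalt}\PayA(\actAalt,\actB)$ is a column maximum. I would isolate the \emph{coincidence} event, of probability $\proba_{\run}$, on which $\PayB(\actA,\actB)=\PayA(\actA,\actB)=X$; on this event the profile is a \ac{PNE} as soon as $X$ exceeds every payoff $\PayB(\actA,\actBalt)$, $\actBalt\neq\actB$, in the current row. Conditioning on the filtration generated by the revealed payoffs and using \cref{le:size-of-R} to certify that, before $\stime_{\run}^{\resp}$, the relevant entries of that row are still unrevealed, these competing payoffs are \iid uniform and independent of $X$, whose law is $\Beta(\nactionsAn,1)$, the law of the maximum of $\nactionsAn$ \iid uniforms. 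Integrating $X^{\nactionsBn-1}$ against this law yields
\begin{equation*}
\underline{\qu}_{\per,\run}\ge \proba_{\run}\,\Expect\bracks*{X^{\nactionsBn-1}}=\proba_{\run}\,\frac{\nactionsAn}{\nactionsAn+\nactionsBn-1},
\end{equation*}
and the symmetric computation at column-player steps gives the factor $\nactionsBn/(\nactionsAn+\nactionsBn-1)$. Over each consecutive pair of steps these two bounds sum to at least $\proba_{\run}(\nactionsAn+\nactionsBn)/(\nactionsAn+\nactionsBn-1)\ge \proba_{\run}$, so that $\sum_{\per\le m_{\run}}\underline{\qu}_{\per,\run}\ge \tfrac12\proba_{\run}m_{\run}$ and hence $\prod_{\per\le m_{\run}}(1-\underline{\qu}_{\per,\run})\le \expo^{-\proba_{\run}m_{\run}/2}$.

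For the trap term I would argue that, as long as the exploration has not stopped, each best-response move lands on a fresh line, and the line it selects is the argmax of a collection of \iid uniform payoffs, hence essentially uniformly distributed; a trap is detected only when this line coincides with one of the at most $\per$ previously visited lines. This gives a per-step trap probability of order $\per/\nactionsn$ and thus $\Prob(\text{trap before }m_{\run})=\bigoh(m_{\run}^{2}/\nactionsn)$. Combining, $\Prob(\stime_{\run}^{\NE}=\infty)\le \bigoh(m_{\run}^{2}/\nactionsn)+\expo^{-\proba_{\run}m_{\run}/2}$. It remains to choose $m_{\run}$: taking, e.g., $m_{\run}=\floor*{\sqrt{\nactionsn}/\log\nactionsn}$ makes the first term vanish, while the second equals $\expo^{-\Theta(\proba_{\run}\sqrt{\nactionsn}/\log\nactionsn)}$, which tends to $0$ precisely because the hypothesis \eqref{eq:condition-K} forces $\proba_{\run}=\nactionsn^{-\smalloh(1)}$ and therefore $\proba_{\run}\sqrt{\nactionsn}/\log\nactionsn=\nactionsn^{1/2-\smalloh(1)}/\log\nactionsn\to\infty$.

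The main obstacle is the rigorous bookkeeping behind the two ``fresh line'' approximations. The sequential-revelation viewpoint built through $\resp_{\run}(\per)$ is designed exactly for this, but one must handle the entries of the current line already revealed at earlier steps (in particular those frozen by previous coincidences), and show that they neither spoil the \ac{PNE} lower bound---because the column maximum $X$ is with high probability larger than every previously revealed, non-maximal payoff---nor bias the selected argmax enough to inflate the trap probability. Controlling these dependencies uniformly over $\per\le m_{\run}$, together with the separate treatment of the initial steps $\per=0,1,2$ flagged after \eqref{eq:tau-NE}, is the delicate part; the slack incurred there is what makes a condition of the type \eqref{eq:condition-K}, rather than the weaker $\proba_{\run}=\omega(\nactionsn^{-1/2})$ suggested by the heuristic race above, the natural hypothesis under which to close the argument.
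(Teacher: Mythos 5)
Your global architecture---truncate at $m_{\run}$, bound the trap term by a per-step ``collision with an old line'' estimate of order $\per/\nactionsn$, and beat it in a race with a PNE-detection term---matches the first half of the paper's proof of \cref{pr:BRD-p-not0} almost exactly: your trap bound is the paper's \eqref{eq:est4}--\eqref{eq:PEt-less-iter}, including the stochastic-domination input of \cref{le:dom-unif} to handle entries frozen by earlier coincidences. Where you genuinely diverge is the PNE term. The paper does \emph{not} iterate a per-step success probability of order $\proba_{\run}$. Instead it waits for a \emph{run} of $\peralt_{\run}$ consecutive coincidence profiles (the event $\FTs_{\run}^{\peralt_{\run},\Per_{\run}}$ of \eqref{eq:F-T-s}): after such a run the current payoff is the maximum of at least $u=\cresp_{\run}(\peralt_{\run})-\Per_{\run}^{2}/4$ \iid uniforms, so the \emph{single} attempt fails with probability at most $3/(\peralt_{\run}+3)$ by \cref{prop:betacomp}, and the probability that no run occurs is bounded by $(1-\proba_{\run}^{\peralt_{\run}})^{\floor{\Per_{\run}/\peralt_{\run}}}$ using block independence of the coincidence indicators; optimizing $\peralt_{\run}$ as in \eqref{eq:s-n-instance} is what produces the logarithmic condition \eqref{eq:condition-K}, and a final truncation $\Per_{\run}'=\Per_{\run}\wedge\ceil{\nactionsAn\wedge\nactionsBn}^{1/3}$ removes your third constraint. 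The structural reason for this detour is that the paper only ever needs to control \emph{one} conditional experiment, rather than multiplying conditional failure probabilities across all steps of the survival event.

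That multiplication is exactly where your proposal has a genuine gap. Your product bound $\prod_{\per\le m_{\run}}(1-\underline{\qu}_{\per,\run})$ requires the lower bound $\underline{\qu}_{\per,\run}\ge c\,\proba_{\run}$ to hold \emph{conditionally on every history consistent with survival}, in a filtration you never specify---and the choice matters. In the fine filtration where revealed payoff values are known, your domination step fails pointwise: a competitor $\PayB(\actA,\actBalt)$ in a previously visited column has conditional law $\proba_{\run}\delta_{v}+(1-\proba_{\run})\Unif([0,1])$ with $v$ a \emph{revealed} value, and this is not stochastically dominated by a uniform when $v$ is large; domination only reappears after averaging $v$ in the coarse trajectory filtration (this is precisely the content of \cref{le:dom-unif}, which concerns the averaged conditional law, not a realized value). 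One must also verify that the coincidence indicator at the newly reached profile is still conditionally $\mathrm{Bernoulli}(\proba_{\run})$ given the survival event, that the new profile's row is fresh (so no competitor is itself a recorded line maximum), and that the conditioning on past failures only pushes the running maximum $X$ upward. None of this is executed, and you flag it yourself; but your closing calibration then becomes incoherent: if the per-step bound survives the bookkeeping, your argument proves the theorem under the much weaker condition $\proba_{\run}=\omega\parens*{\nactionsn^{-1/2}\log\nactionsn}$---a strengthening the paper neither claims nor, per its open problem (i) in \cref{se:conclusions}, knows how to obtain---whereas if the bookkeeping degrades the constant-order factor, you have not shown what survives under \eqref{eq:condition-K}. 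As written, the central estimate is asserted rather than proved, so the argument does not close; the paper's run-of-coincidences construction is the device that circumvents precisely this difficulty, at the documented cost of the stronger hypothesis \eqref{eq:condition-K}.
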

Notice that the latter result is qualitative in nature: it tells us that the \ac{BRD} will eventually converge to a Nash equilibrium rather than to a trap, but does not provide any bound on the rate of convergence beyond the trivial one presented in \eqref{eq:general-bound-tau-NE}. 
The following result---which  implies \cref{th:BRD-p-not0}---provides a much better bound on the time of convergence. 
Indeed, it states that, under the condition in \eqref{eq:condition-K}, the time of convergence of the \ac{BRD} to an equilibrium can be upper bounded, with high probability, by any function diverging exponentially faster than $\proba_\run^{-1}$.
\begin{proposition}
\label{pr:pn-p-tau}
Fix a positive sequence $\proba_{\run}$ satisfying \eqref{eq:condition-K}. 
Then, for any sequence $\Per_\run$ such that 
\begin{equation}
\lim_{\run\rightarrow \infty} \Per_\run =\infty \quad\text{and}\quad 
\lim_{\run\rightarrow \infty} \frac{\log (\proba_\run)}{\log (\Per_\run)}=0,
\end{equation}
we have
\begin{equation}\label{eq:co-pn-p-tau}
		\lim_{\run\to\infty}\Prob\parens*{\stime_{\run}^{\NE}> \Per_\run} = 0\,.
	\end{equation} 
\end{proposition}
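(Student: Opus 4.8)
The plan is to separate the two ways the bound $\stime_\run^{\NE}>\Per_\run$ can occur---the \ac{BRD} falls into a trap, or it reaches a \ac{PNE} only slowly---and to control the second possibility by showing that, as long as the \ac{BRD} keeps exploring fresh rows and columns, each step lands on a \ac{PNE} with probability of order $\proba_\run$. Two consequences of the hypotheses will be used. First, \eqref{eq:condition-K} holds, so $\Prob(\stime_\run^{\NE}=\infty)\to0$ by \cref{th:BRD-p-not0}. Second, since $\log(\proba_\run\Per_\run)=\log(\Per_\run)\parens*{1+\log(\proba_\run)/\log(\Per_\run)}\to+\infty$, we have $\proba_\run\Per_\run\to\infty$. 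Because a finite $\stime_\run^{\NE}$ never exceeds $2\nactionsn-1$ by \eqref{eq:general-bound-tau-NE}, the event $\{\Per_\run<\stime_\run^{\NE}<\infty\}$ is empty once $\Per_\run\ge2\nactionsn-1$, so I may assume $\Per_\run<2\nactionsn-1$ and work where fresh rows and columns remain available throughout the first $\Per_\run$ steps. Writing
\begin{equation}
\Prob\parens*{\stime_\run^{\NE}>\Per_\run}=\Prob\parens*{\stime_\run^{\NE}=\infty}+\Prob\parens*{\Per_\run<\stime_\run^{\NE}<\infty},
\end{equation}
it remains to show that the second term vanishes.

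On $\{\Per_\run<\stime_\run^{\NE}<\infty\}$ the \ac{BRD} avoids any trap and has not yet reached a \ac{PNE} by time $\Per_\run$; hence, by \eqref{it:R-NE}, $\stime_\run^{\resp}=\stime_\run^{\NE}+1>\Per_\run+1$, so no collision with the sets $\resp_\run(\argdot)$ has occurred and \cref{le:size-of-R} guarantees that each payoff relevant for the next move is an unrevealed \iid uniform. The crux is the following per-step lower bound, which I would establish for each $\per\ge1$:
\begin{equation}
\label{eq:perstep-lb}
\Prob\parens*{\BRD_\run(\per)\in\NE_\run\mid\stime_\run^{\resp}>\per}\ge b_{\per},\qquad b_\per\coloneqq\proba_\run\cdot\frac{\nactionsAn\ind\{\per\text{ odd}\}+\nactionsBn\ind\{\per\text{ even}\}}{\nactionsAn+\nactionsBn-1}.
\end{equation}
To see \eqref{eq:perstep-lb}, consider an odd $\per$, at which the row player has just moved within the current column $\actB$ to a maximizer $\actA^{*}$ of $\PayA(\argdot,\actB)$, thereby revealing the $\nactionsAn$ \iid payoffs of that fresh column. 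With probability $\proba_\run$ the coin at $(\actA^{*},\actB)$ is a head, so that $\PayB(\actA^{*},\actB)=\PayA(\actA^{*},\actB)=\max_{\actA}\PayA(\actA,\actB)=:V$, the maximum of $\nactionsAn$ \iid uniforms. The profile $(\actA^{*},\actB)$ is then a \ac{PNE} precisely when the still-unrevealed \iid payoffs $\PayB(\actA^{*},\actB')$, $\actB'\neq\actB$, all fall below $V$, an event of conditional probability $\Expect\bracks*{V^{\nactionsBn-1}}=\nactionsAn/(\nactionsAn+\nactionsBn-1)$ by the Beta moment recalled in \cref{se:beta}. The even case, with the column player moving, is symmetric, with $\nactionsAn$ and $\nactionsBn$ interchanged.

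Given \eqref{eq:perstep-lb}, the bound assembles at once. Since a \ac{PNE} reached at step $\per$ forces $\stime_\run^{\resp}=\per+1$, we have $\{\stime_\run^{\resp}>\per+1\}\subseteq\{\stime_\run^{\resp}>\per\}\cap\{\BRD_\run(\per)\notin\NE_\run\}$, whence $\Prob(\stime_\run^{\resp}>\per+1\mid\stime_\run^{\resp}>\per)\le1-b_\per$. As each parity occurs at least $\floor*{\Per_\run/2}$ times among $\per\in\{1,\dots,\Per_\run\}$ and $\nactionsAn+\nactionsBn\ge\nactionsAn+\nactionsBn-1$, one has $\sum_{\per=1}^{\Per_\run}b_\per\ge\proba_\run\floor*{\Per_\run/2}$, so telescoping and $\log(1-x)\le-x$ give
\begin{equation}
\Prob\parens*{\Per_\run<\stime_\run^{\NE}<\infty}\le\Prob\parens*{\stime_\run^{\resp}>\Per_\run+1}\le\prod_{\per=1}^{\Per_\run}(1-b_\per)\le\exp\parens*{-\proba_\run\floor*{\Per_\run/2}}.
\end{equation}
Since $\proba_\run\Per_\run\to\infty$, the right-hand side tends to $0$, and combined with $\Prob(\stime_\run^{\NE}=\infty)\to0$ this yields \eqref{eq:co-pn-p-tau}.

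The main obstacle is the rigorous justification of \eqref{eq:perstep-lb}, which is entirely a matter of sequential-revelation bookkeeping: one must set up the filtration so that, conditionally on $\{\stime_\run^{\resp}>\per\}$, the coin at the freshly selected extremal cell is genuinely unexposed (supplying the factor $\proba_\run$), the competing entries in the new row or column are genuinely unrevealed \iid uniforms (so that the Beta computation is exact), and conditioning on wandering does not depress the law of $V$---all of which is underpinned by the deterministic cardinality of \cref{le:size-of-R}. The anomalous initial steps $\per=0,1,2$ would be treated separately, but they contribute only a bounded number of factors and do not affect the asymptotics. I would also emphasize that this per-step argument cannot by itself bound $\Prob(\stime_\run^{\NE}=\infty)$: a short trap may be entered after only a few steps with non-vanishing probability, so the input of \cref{th:BRD-p-not0} for the trap term is genuinely needed.
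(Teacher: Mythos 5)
Your argument has a genuine circularity. You split $\Prob\parens*{\stime_{\run}^{\NE}>\Per_\run}$ into the trap term $\Prob\parens*{\stime_{\run}^{\NE}=\infty}$ and the slow-convergence term $\Prob\parens*{\Per_\run<\stime_{\run}^{\NE}<\infty}$, and you dispose of the trap term by citing \cref{th:BRD-p-not0}. But in the paper \cref{th:BRD-p-not0} has no independent proof: it is obtained \emph{from} \cref{pr:pn-p-tau} by taking $\Per_\run=2\nactionsn-1$. So you are assuming exactly the statement you are asked to prove (in its hardest instance), and you even acknowledge that your per-step machinery ``cannot by itself bound $\Prob(\stime_{\run}^{\NE}=\infty)$.'' The paper's actual route avoids this: it proves the self-contained \cref{pr:BRD-p-not0}, whose key extra ingredient---the one your proposal is missing---is a quantitative collision (birthday-type) bound on the trap time, $\Prob\parens*{\stime_{\run}^{\cyclet}\le\Per_\run}\le\Per_\run^{2}/(\nactionsAn\wedge\nactionsBn-1)$ as in \eqref{eq:PEt-less-iter}, valid for \emph{all} $\proba_\run$ thanks to the stochastic-domination \cref{le:dom-unif} (needed because entries in previously visited rows/columns are not clean uniforms, contrary to your claim that no collision makes every relevant payoff an unrevealed \iid uniform). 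This forces working with $\Per_\run=\smalloh\parens*{\sqrt{\nactionsAn\wedge\nactionsBn}}$, and the general statement is then recovered by truncating $\Per_\run'=\Per_\run\wedge\ceil*{\nactionsAn\wedge\nactionsBn}^{1/3}$ and using monotonicity of $\braces*{\stime_{\run}^{\NE}<\Per_\run}$ in $\Per_\run$. Your truncation at $2\nactionsn-1$ goes the wrong way: it is too coarse to give any control of traps, and it also degrades your per-step constant $b_\per$ when $\per$ is of order $\nactionsn$ (the fresh pool $\nactionsAn-\floor{\per/2}$ shrinks to zero), though that second point is fixable by summing $b_\per$ only over $\per\le\Per_\run\wedge\nactionsn/2$.

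That said, your engine for the slow-convergence term is genuinely different from the paper's and, if executed, would be sharper there. The paper waits for a run of $\peralt_\run$ consecutive profiles in $\eqpay_\run$ (event $\FTs_{\run}^{\peralt_\run,\Per_\run}$, cost $\proba_\run^{\peralt_\run}$ per window, success probability then close to $1$ via \cref{prop:maxunif,prop:betacomp}), which is what produces the logarithmic interplay in \eqref{eq:condition-T}; your bound \eqref{eq:perstep-lb} pays a single factor $\proba_\run$ per step (one head at the freshly selected column/row argmax) and would yield $\Prob\parens*{\stime_{\run}^{\resp}>\Per_\run+1}\le\expo^{-c\,\proba_\run\Per_\run}$, needing only $\proba_\run\Per_\run\to\infty$, which indeed follows from your hypotheses. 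Combined with the paper's collision bound \eqref{eq:tau-cycle-P} (note $\proba_\run=\nactionsn^{-\smalloh(1)}$ under \eqref{eq:condition-K}, so one can pick $t_\run$ with $t_\run=\smalloh(\sqrt{\nactionsn})$ and $\proba_\run t_\run\to\infty$), this would assemble into a complete and arguably stronger proof. But as submitted, the sequential-revelation justification of \eqref{eq:perstep-lb} is flagged as an ``obstacle'' rather than carried out, and the trap term rests on a circular citation, so the proposal does not stand as a proof.
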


Notice that \cref{th:BRD-p-not0} follows from \cref{pr:pn-p-tau} by choosing $\Per_{\run}= 2 \nactionsn-1$.
In the particular case when $\proba_{\run}=\proba>0$ for all $\run\in\naturals$, \cref{pr:pn-p-tau} states that,  with high probability, a \ac{PNE} is reached by the \ac{BRD} in at most $\Per_{\run}$ steps, no matter how slowly the sequence $\Per_{\run}$ diverges.

\begin{remark}
Notice that, whenever $\proba_{\run}<1$, the random variable $\stime_{\run}^{\NE}$ takes value $+\infty$ with positive probability.
Therefore, even though \eqref{eq:tNE<infty} holds true, the random variable $\stime_{\run}^{\NE}$ cannot have a finite expectation.
\end{remark}

%
%

\section{Proofs}
\label{se:proofs}

%
%

\subsection*{Proofs of \cref{se:number-PNE}}

\begin{proof}[Proof of \cref{pr:number-PNE}]
By linearity of expectation, we have
\begin{equation}
\label{eq:E-W}
\Expect\bracks*{\nequi} = \nactionsA\nactionsB \Prob((1,1)\in\NE).
\end{equation}
Conditioning on the event $\braces*{\PayA(1,1)=\PayB(1,1)}$ and using the law of total probability, we get
\begin{equation}
\label{eq:P-1-1-nonasympt}
\Prob((1,1)\in\NE) = \proba \frac{1}{\nactionsA
+\nactionsB-1}+(1-\proba)\frac{1}{\nactionsA\nactionsB}.
\end{equation}
To see that \eqref{eq:P-1-1-nonasympt} holds, consider the following: 
when $\PayA(1,1)=\PayB(1,1)$, the profile $(1,1)$ is a \ac{PNE} if and only if $\PayA(1,1)$ is larger than or equal to all payoffs $\PayA(\actA,1)$ and all payoffs $\PayB(1,\actB)$  for all $\actA\in\actionsA$ and all $\actB\in\actionsB$.
By symmetry, this happens with probability $1/(\nactionsA + \nactionsB-1)$.
On the other hand, when $\PayA(1,1) \neq \PayB(1,1)$, the profile $(1,1)$ is a \ac{PNE} if and only if  $\PayA(1,1)$ is larger than or equal to all payoffs $\PayA(\actA,1)$ for all $\actA\in\actionsA$ and $\PayB(1,1)$ is larger than or equal to all payoffs $\PayB(1,\actB)$ for all $\actB\in\actionsB$.  
By independence of the payoffs and by symmetry, this happens with probability $1/\nactionsA\nactionsB$.
Plugging \eqref{eq:P-1-1-nonasympt} into \eqref{eq:E-W}, we get the result.
\end{proof}

\begin{proof}[Proof of \cref{pr:number-PNE-gen}]

By \cref{pr:number-PNE} we get
\begin{equation}
\label{eq:E=W-n-variant}
\Expect\bracks*{\nequi_{\run}} 
= \proba_{\run} \frac{\nactionsAn\nactionsBn}{\nactionsAn+\nactionsBn-1} + (1-\proba_{\run})
= (1+\smalloh(1))\proba_{\run}  \frac{\nactionsAn\nactionsBn}{\nactionsAn+\nactionsBn},
\end{equation}
where, in the asymptotic equality we used the fact that $\proba_{\run}\nactionsn\to\infty$.
This shows that $\Expect\bracks*{\nequi_{\run}}\to\infty$.

We now show that $\nequi_{\run}$ concentrates around its expectation.
To this end, we use an upper bound on the second moment  of  $\nequi_{\run}$. 
Notice that
\begin{equation}
\label{eq:second-moment-W-n}
\begin{split}
\Expect\bracks*{\nequi_{\run}^{2}} 
&= \Expect\bracks*{\parens*{\sum_{\parens{\actA,\actB}\in\actionsAn\times\actionsBn}\ind_{\parens{\actA,\actB}\in\NE_{\run}}}^{2}} \\
&= \sum_{\parens{\actA,\actB}\in\actionsAn\times\actionsBn}
\sum_{\parens{\actAalt,\actBalt}\in\actionsAn\times\actionsBn}
\Prob\parens*{\parens{\actA,\actB},\parens{\actAalt,\actBalt}\in\NE_{\run}}.
\end{split}
\end{equation}
From \eqref{eq:second-moment-W-n} it follows immediately that the computation of the second moment amounts to studying  probabilities of the form
\begin{equation*}
 \Prob\parens*{\parens{\actA,\actB},\parens{\actAalt,\actBalt}\in\NE_{\run}}\ ,\qquad \actA,\actAalt\in\actionsAn,\:\actB,\actBalt\in\actionsBn.  
\end{equation*}
We argue that there are only three relevant cases: 
\begin{enumerate}[(a)]
\item 
\label{it:a=a-b=b}    $\actA=\actAalt$ and $\actB=\actBalt$;

\item 
\label{it:a=a-bnotb}
$\actA=\actAalt$ and $\actB\neq\actBalt$ or vice versa;

\item
\label{anota-bnotb}
$\actA\neq\actAalt$ and $\actB\neq\actBalt$;
\end{enumerate}
Case~\eqref{it:a=a-b=b}  is trivial:
\begin{equation}
\label{eq:case-1}
    \Prob\parens*{\parens{\actA,\actB},\parens{\actAalt,\actBalt}\in\NE_{\run}}=\Prob\parens*{\parens{\actA,\actB}\in\NE_{\run}}\ .
\end{equation}
The continuity of $\distr$ implies that in Case~\eqref{it:a=a-bnotb} we have
\begin{equation}
\label{eq:P-ab-abprime-NE}
\Prob\parens*{\parens{\actA,\actB},\parens{\actAalt,\actBalt}\in\NE_{\run}} = 0.
\end{equation}
To analyze Case~\eqref{anota-bnotb}, we remark that, 
if $\actA\neq\actAalt$ and $\actB\neq\actBalt$, then the event $\braces*{\parens{\actAalt,\actBalt}\in\NE_{\run}}$ depends on the event $\braces*{\parens{\actA,\actB}\in\NE_{\run}}$ only through the payoffs $\PayA(\actAalt,\actB)$ and $\PayB(\actA,\actBalt)$.
Therefore
\begin{equation}
\label{eq:ab-eq-abprime-eq}
\begin{split}
\Prob\parens*{\parens{\actA,\actB},\parens{\actAalt,\actBalt}\in\NE_{\run}}
&= \Prob\parens*{\parens{\actA,\actB}\in\NE_{\run}}
\Prob\parens*{\parens{\actAalt,\actBalt}\in\NE_{\run} \mid \parens{\actA,\actB}\in\NE_{\run}} \\
&\le \Prob\parens*{\parens{\actA,\actB}\in\NE_{\run}}
\Prob\left(
\PayA(\actAalt,\actBalt) = \max_{\actAaltalt\in\left[\nactionsAn\right]\setminus\{\actA\}} \PayA(\actAaltalt,\actBalt), \right.\\
&\left.\qquad\PayB(\actAalt,\actBalt) = \max_{\actBaltalt\in\actionsBn\setminus\{\actB\}} \PayB(\actAalt,\actBaltalt) \right)\\
&= \Prob\parens*{\parens{\actA,\actB}\in\NE_{\run}}
\parens*{\proba_{\run}\frac{1}{\nactionsAn+\nactionsBn-3}
+(1-\proba_{\run})\frac{1}{(\nactionsAn-1)(\nactionsBn-1)}}\\
&=(1+\smalloh(1)) \parens*{\Prob\parens*{\parens{\actA,\actB}\in\NE_{\run}}}^{2}.
\end{split}
\end{equation}
The inequality in \eqref{eq:ab-eq-abprime-eq} stems from the fact that the event $(\actAalt,\actBalt)\in\NE_\run$ coincides with the event $\left\{\PayA(\actAalt,\actBalt) = \max_{\actAaltalt\in\left[\nactionsAn\right]} \PayA(\actAaltalt,\actBalt),\,\PayB(\actAalt,\actBalt) = \max_{\actBaltalt\in\actionsBn} \PayB(\actAalt,\actBaltalt)\right\}$, which in turn implies the event $\left\{\PayA(\actAalt,\actBalt) = \max_{\actAaltalt\in\left[\nactionsAn\right]\setminus \{\actA\}} \PayA(\actAaltalt,\actBalt),\,\PayB(\actAalt,\actBalt) = \max_{\actBaltalt\in\actionsBn\setminus \{\actB\}} \PayB(\actAalt,\actBaltalt)\right\}$. 
The independence between this latter event and $\{(\actA,\actB)\in\NE_\run\}$ proves the inequality.
The second equality in  \eqref{eq:ab-eq-abprime-eq},  follows from the fact that, when $\PayA(\actAalt,\actBalt)=\PayB(\actAalt,\actBalt)$, which happens with probability $\proba_{\run}$, the event 
\begin{equation}
\label{eq:eventUAB}
\braces*{\PayA(\actAalt,\actBalt) = \max_{\actAaltalt\in\actionsAn\setminus\{\actA\}} \PayA(\actAaltalt,\actBalt)\,, \,\PayB(\actAalt,\actBalt) = \max_{\actBaltalt\in\actionsBn\setminus\{\actB\}}\PayB(\actAalt,\actBaltalt)  }
\end{equation}
has the same probability as the event of picking the maximum among $\nactionsAn+\nactionsBn-3$ equally probable objects; 
on the other hand, when $\PayA(\actAalt,\actBalt)\neq\PayB(\actAalt,\actBalt)$, the event in \eqref{eq:eventUAB} has the same probability of picking independently the maximum of $\nactionsAn-1$ equally probably objects and the maximum of $\nactionsBn-1$ equally probably objects.
The last equality stems from \eqref{eq:P-1-1-nonasympt}.

In conclusion, plugging \cref{eq:case-1,eq:P-ab-abprime-NE,eq:ab-eq-abprime-eq} into \eqref{eq:second-moment-W-n}, we obtain
\begin{equation}
\label{eq:E-W-2-le}
\begin{split}
\Expect\bracks*{\nequi_{\run}^{2}} 
&= 
\sum_{\parens{\actA,\actB}\in\actionsAn\times\actionsBn}
\Prob\parens*{\parens{\actA,\actB}\in\NE_{\run}}  
\\
&\quad +
(1+\smalloh(1))
\sum_{\parens{\actA,\actB}\in\actionsAn\times\actionsBn}
\sum_{\substack{\parens{\actAalt,\actBalt}\in\actionsAn\times\actionsBn\\
\actA \neq \actAalt,\actB \neq \actBalt}}
 \parens*{\Prob\parens*{\parens{\actA,\actB}\in\NE_{\run}}}^{2} \\
 &<
\Expect\bracks*{\nequi_{\run}} +
(1+\smalloh(1))(\Expect\bracks*{\nequi_{\run}})^{2} \\
&= (1+\smalloh(1)) (\Expect\bracks*{\nequi_{\run}})^{2},
\end{split}    
\end{equation}
where the last equality stems from the fact that   $\Expect\bracks*{\nequi_{\run}}\to\infty$, which implies $\Expect\bracks*{\nequi_{\run}} = \smalloh((\Expect\bracks*{\nequi_{\run}})^{2})$.

By Chebyshev's inequality, we have that, for every $\varepsilon>0$, 
\begin{equation}
\label{eq:Chebyshev}
\Prob\parens*{\abs*{\nequi_{\run}-\Expect[\nequi_{\run}]} \ge \varepsilon \Expect[\nequi_{\run}]}
\le \frac{\Expect\bracks*{\nequi_{\run}^{2}}-\parens*{\Expect\bracks*{\nequi_{\run}}}^{2}}{\varepsilon^{2}\parens*{\Expect\bracks*{\nequi_{\run}}}^{2}}
=\smalloh(1),
\end{equation}
where the asymptotic upper bound follows from \eqref{eq:E-W-2-le}.
\end{proof}

%
%

\subsection*{Proofs of \cref{suse:BRD-tau}}

\begin{proof}[Proof of \cref{le:size-of-R}]
We prove \eqref{eq:card-R} by induction. 
For $\per=1$ we have $\abs{\resp_{\run}(1)}=\nactionsAn+\nactionsBn-1$, which is trivially true. 
Assume now that \eqref{eq:card-R} holds up to $\per-1 < \stime_{\run}^{\resp}$. 
Notice that, conditioning on $\{\stime_{\run}^{\resp}>t\}$, $\BRD_{\run}(\per)$ cannot visit a row or column visited at time $1,2,\ldots,\per-1$.
Since player $\pA$  plays first, by the inductive hypothesis and thanks to the conditioning, for $0 < \per< \stime_{\run}^{\resp}$, we have almost surely, 
\begin{align}
\label{eq:r-t-1=}
\abs{\resp_{\run}(\per)}=\begin{cases}
\cresp_{\run}(\per-1)+\nactionsBn-\floor*{\dfrac{\per+1}{2}}\,, &\text{if $\per$ is odd};
\\\\
\cresp_{\run}(\per-1)+\nactionsAn-\floor*{\dfrac{\per+1}{2}}\,, &\text{if $\per$ is even}.
\end{cases}
\end{align}
Since
\begin{equation}
\label{eq:r-n-t-1}
\cresp_{\run}(\per-1) = 
\ceil*{\dfrac{\per}{2}}\nactionsAn+\floor*{\dfrac{\per}{2}}\nactionsBn - \floor*{\dfrac{\per}{2}}\ceil*{\dfrac{\per}{2}},
\end{equation}
we can rewrite the right hand side of \eqref{eq:r-t-1=} as \eqref{eq:r-t=}.
Hence, \eqref{eq:card-R} holds.  
\end{proof}

\subsection*{Proofs of \cref{suse:p=1}}
\begin{proof}[Proof of \cref{le:bound-tau-NE}]

Since $\proba_{\run}=1$, we have $\Bimatrix_{\run}^{\pA} = \Bimatrix_{\run}^{\pB}$ almost surely.
Therefore, to simplify the notation, we  write
\begin{equation}
\label{eq:U-UA-UB}
\Pay_{\run}(\actA,\actB)\coloneqq\PayA_{\run}(\actA,\actB) = \PayB_{\run}(\actA,\actB).
\end{equation}
Moreover, in a potential game
$\braces*{\stime_{\run}^{\resp}=\per}
=\braces*{\stime_{\run}^{\NE}=\per -1}$, for $\per> 2$, as in \eqref{eq:equivalence-potential}.
We start with $\per=0$.
We have
\begin{equation}
\label{eq:q-0-n=}
\begin{split}
\qu_{0,\run} 
&= \Prob\parens*{\stime_{\run}^{\NE} = 0 }\\
&= \Prob\parens*{\Pay_{\run}(1,1) = \max \braces*{\max_{\actA\in\actionsAn}\Pay_{\run}(\actA,1),\max_{\actB\in\actionsBn}\Pay_{\run}(1,\actB)}}
\\
&= \frac{1}{\nactionsAn+\nactionsBn-1}.
\end{split}
\end{equation}
Let now $\per=1$.
We have
\begin{equation}
\label{eq:P-tau-1-ge-1}
\Prob\parens*{\stime_{\run}^{\NE} \ge 1} =
\Prob\parens*{\stime_{\run}^{\NE} \ge 0} - \Prob\parens*{\stime_{\run}^{\NE} = 0} 
= 1 - \frac{1}{\nactionsAn+\nactionsBn-1}
= \frac{\nactionsAn+\nactionsBn-2}{\nactionsAn+\nactionsBn-1}.
\end{equation}
Moreover, if we define the event 
\begin{equation}
\label{eq:best-i}
\best_{\actA} \coloneqq \braces*{\text{player $\pA$'s best response to action $1$ is $\actA$}},
\end{equation}
we have
\begin{equation}
\label{eq:P-tau-1=}
\begin{split}
\Prob\parens*{\stime_{\run}^{\NE} = 1} 
&= \sum_{\actA \in \actionsAn} \Prob\parens*{\stime_{\run}^{\NE} = 1 \mid \best_{\actA}} \Prob(\best_{\actA})
\\
&= \sum_{\actA=2}^{\nactionsAn} \Prob\parens*{\Beta\parens*{\nactionsAn,1} \ge \Beta\parens*{\nactionsBn-1,1}} \frac{1}{\nactionsAn}
\\
&= \sum_{\actA=2}^{\nactionsAn} \frac{\nactionsAn}{\nactionsAn+\nactionsBn-1} \cdot \frac{1}{\nactionsAn}
\\
&= \frac{\nactionsAn-1}{\nactionsAn+\nactionsBn-1},
\end{split}
\end{equation}
where, with an abuse of notation, we have identified a random variable with its distribution.
Conditionally on  $\best_{\actA}$, we have $\stime_{\run}^{\NE} = 1$ if the payoff in $(\actA,1)$ is the largest among all payoffs in the same row. 
To get the result we have applied \cref{prop:maxunif} about the maximum of uniform independent random variables and \cref{prop:betacomp} about the probability that a $\Beta(\actA,1)$ is larger than an independent $\Beta(\actB,1)$. 
This result can be applied because the payoffs, and consequently the two Beta random variables, are independent.
Therefore, combining \eqref{eq:P-tau-1-ge-1} and \eqref{eq:P-tau-1=}, we obtain
\begin{equation}
\label{eq:q-1-n=}
\begin{split}
\qu_{1,\run} &= \Prob\parens*{\stime_{\run}^{\NE} = 1 \mid \stime_{\run}^{\NE} \ge 1}
\\
&= \frac{\Prob\parens*{\stime_{\run}^{\NE} = 1}}{\Prob\parens*{\stime_{\run}^{\NE} \ge 1}}
\\
&= \frac{\nactionsAn+\nactionsBn-1}{\nactionsAn+\nactionsBn-2} \cdot
\frac{\nactionsAn-1}{\nactionsAn+\nactionsBn-1}
\\
&=
\frac{\nactionsAn-1}{\nactionsAn+\nactionsBn-2}.
\end{split}
\end{equation}

Let now $\per \geq 2$. 
Call $\paths_{\per}$ the set of sequences $\ppath=(\ppath_{0},\ppath_{1},\ldots,\ppath_{\per})$ such that 
\begin{itemize}
\item $\ppath_{i}=(\actA_{i},\actB_{i})\in[\nactionsAn]\times [\nactionsBn]$ for $i\geq 0$,
\item $\ppath_{0}=(1,1)$,
\item if $i$ is odd, then $\actB_{i}=\actB_{i-1}$, whereas, if $i>0$ is even, then $\actA_{i}=\actA_{i-1}$,
\item there is no pair of distinct odd indices $i,j$ such that $\actB_{i}=\actB_{j}$,
\item there is no pair of distinct even indices $i,j$ such that $\actA_{i}=\actA_{j}$.
\end{itemize}
Notice that the set $\paths_{\per}$ coincides with all possible trajectories of length $\per$ of $\BRD_{\run}$ satisfying the event $\braces*{\stime_{\run}^{\NE}\geq\per}$.
Define the event
\begin{equation}
\label{eq:best-paths}
\bestpath_{\per}^{\ppath}=\braces*{\BRD_{\run}(\peralt)=\ppath_{\peralt}\text{ for } 0\leq\peralt\leq \per}.
\end{equation}
Notice that 
\begin{align}
\label{eq:timettau}
\qu_{\per,\run} \coloneqq \Prob\parens*{\stime_{\run}^{\NE}=  \per\mid \stime_{\run}^{\NE} \ge \per}=\sum_{\ppath\in\paths_{\per}} \Prob\parens*{\stime_{\run}^{\NE}= \per \mid \bestpath_{\per}^{\ppath},\,  \stime_{\run}^{\NE} \ge \per}\Prob\parens*{\bestpath_{\per}^{\ppath} \mid \stime_{\run}^{\NE} \ge \per}.
 \end{align}

The conditional probability $\Prob\parens*{\stime_{\run}^{\NE}=  \per \mid \bestpath_{\per}^{\ppath},\,  \stime_{\run}^{\NE} \ge \per}$  equals the probability that the maximum of $\cresp_\run(\per-1)$ \iid uniform random variables is bigger than the maximum of $\nactionsstar_\run-\floor*{\frac{\per+1}{2}}$ uniform random variables, where 
\begin{equation}
\label{eq:K-star}
\nactionsstar_\run=
\begin{cases}
\nactionsAn & \text{if $\per$ is even,}\\
\nactionsBn & \text{if $\per$ is odd.}
\end{cases}
\end{equation}
 
The conditioning event $\bestpath_{\per}^{\ppath}\cap\{  \stime_{\run}^{\NE} \ge \per\}$ determines the position of $\BRD_\run(\per)$ and the fact that the payoffs associated to $\cresp_\run(\per-1)$ action profiles up to time $\per$ have been computed by the \ac{BRD}. 
The payoff associated to $\BRD_\run(\per)$ is the maximum of $\cresp_\run(\per-1)$ \iid uniform random variables. 
The probability that  the action profile $\BRD_\run(\per)$ is a \ac{PNE} is the probability that its  payoff is larger than all the previously uncomputed payoffs associated to action profiles in its same column (if $\per$ is even) or in its same row (if $\per$ is odd).
This requires a comparison with $\nactionsstar_\run-\floor*{\frac{\per+1}{2}}$ uniformly distributed payoffs. 
So 
\begin{align*}
\Prob\parens*{\stime_{\run}^{\NE}=  \per\mid \bestpath_{\per}^{\ppath},\,  \stime_{\run}^{\NE} \ge \per}
&=\Prob\parens*{\Beta(\cresp_\run(\per-1),1)>\Beta\parens*{\nactionsstar_\run-\floor*{\frac{\per+1}{2}},1}}\\
&=\frac{\cresp_\run(\per-1)}{\cresp_\run(\per-1) + \nactionsstar_\run-\floor*{\frac{\per+1}{2}}}\\
&=\frac{\cresp_\run(\per-1)}{\cresp_\run(\per)}\,,
\end{align*}
where the last identity is due to \eqref{eq:r-t-1=}. 
Using \eqref{eq:timettau} we get
\begin{equation}
\label{eq:timettau2}
\begin{split}
 \Prob\parens*{\stime_{\run}^{\NE}
 =  \per\mid \stime_{\run}^{\NE} \ge \per}&=\sum_{\ppath\in\paths_{\per}} \Prob\parens*{\stime_{\run}^{\NE}=  \per\mid \bestpath_{\per}^{\ppath},\, \stime_{\run}^{\NE} \ge \per}\Prob\parens*{\bestpath_{\per}^{\ppath} \mid \stime_{\run}^{\NE} \ge \per} \\
 &=\frac{\cresp_\run(\per-1)}{\cresp_\run(\per)}\sum_{\ppath\in\paths_{\per}}\Prob\parens*{\bestpath_{\per}^{\ppath} \mid \stime_{\run}^{\NE} \ge \per}\,.
 \end{split}
 \end{equation}
 Note that $\{\stime_{\run}^{\NE} \ge \per\}=\sqcup_{\ppath\in\paths_{\per}} \bestpath_{\per}^{\ppath}$.
 Hence 
\begin{equation}
\label{eq:sum-pi-Pi}
\sum_{\ppath\in\paths_{\per}}\Prob\parens*{\bestpath_{\per}^{\ppath} \mid \stime_{\run}^{\NE} \ge \per}=\Prob\parens*{\stime_{\run}^{\NE} \ge \per\mid \stime_{\run}^{\NE} \ge \per}=1\,.
\end{equation}
Therefore, \eqref{eq:timettau2} becomes
\begin{equation*}
\label{eq:timettau3}
\Prob\parens*{\stime_{\run}^{\NE}=  \per\mid \stime_{\run}^{\NE} \ge \per}=\frac{\cresp_\run(\per-1)}{\cresp_\run(\per)}\,. \qedhere
\end{equation*}
\end{proof}

\begin{proof}[Proof of \cref{th:exact-distr-tau}]
Note that 
\begin{equation}
\label{eq:surv-tau-2}
\Prob\parens*{\stime_{\run}^{\NE} > \per}=\Prob\parens*{\stime_{\run}^{\NE} > \per\mid \stime_{\run}^{\NE} > \per-1 }\Prob\parens*{\stime_{\run}^{\NE} > \per-1}=(1-\qu_{\per,\run})\Prob\parens*{\stime_{\run}^{\NE} > \per-1}\,.
\end{equation}
Hence, by iteration, we get
\begin{equation*}
\Prob\parens*{\stime_{\run}^{\NE} > \per}= \prod_{j=1}^{\per}(1-\qu_{j,\run} )\cdot \Prob\parens*{\stime_{\run}^{\NE} >0}= \prod_{j=0}^{\per}(1-\qu_{j,\run} )\,. \qedhere
\end{equation*}
\end{proof}

\begin{proof}[Proof of \cref{pr:KA=KB}]
We first compute $ \Expect\bracks*{\stime_{\run}^{\NE}}$. 
By \cref{th:exact-distr-tau} we have
\begin{equation}
\label{eq:expect-sum}
\Expect\bracks*{\stime_{\run}^{\NE}} 
=\sum_{\per=0}^{2\nactionsn - 2} \Prob\parens*{\stime_{\run}^{\NE}>\per}=\sum_{\per=0}^{2 \nactionsn - 2}\prod_{j=0}^{\per} (1-\qu_{j,\run})\,.
\end{equation}
We split the first sum in \eqref{eq:expect-sum} into two parts:  $\per \in \braces*{0,\dots, \Per_{\run}-1 }$ and  $\per\in\braces*{\Per_{\run},\dots, 2 \nactionsn -2 }$, where $\Per_{\run}=\floor{\left(\log\nactionsn\right)^{2}}$. 
We start by showing that
\begin{equation}
\label{eq:expect-sum2}
\lim_{\run\to\infty}\sum_{\per=\Per_{\run}}^{2\nactionsn-2}\prod_{j=0}^{\per} (1-\qu_{j,\run})=0,
\end{equation}
for which it suffices to show that for $\per\geq \Per_{\run}$ 
\begin{equation}
\label{eq:expect-sum3}
\Prob(\stime_{\run}^{\NE}>\per)=\prod_{j=0}^{\per}(1-\qu_{j,\run})=o\parens*{\nactionsn^{-1}}.
\end{equation}
Notice that the sequence $\qu_{\per,\run}$, defined in \eqref{eq:q-t-n}, is increasing in $\per$.
Hence, 
for $\per \ge \Per_{\run}$, we have
\begin{equation}
\label{eq:P=small-o}
\Prob\parens*{\stime_{\run}^{\NE}>\per}\leq (1-\qu_{0,\run})(1-\qu_{1,\run})^{\Per_{\run}}
\leq \frac{1}{2^{\Per_{\run}}}=o\parens*{\nactionsn^{-1}}\,,
\end{equation}
where in the last asymptotic equality we used  $\qu_{1,\run}=1/2$ and $\Per_{\run}=\omega(\log(\nactionsn))$. 

We are left to show that
\begin{equation}\label{eq:limit-q-Tn}
\lim_{\run\to\infty}\sum_{\per=0}^{\Per_{\run}}\prod_{j=0}^{\per} (1-\qu_{j,\run})=\expo-1\,.
\end{equation}
Notice that
\begin{equation}
\label{eq:1-q}
1-\qu_{0,\run}=\frac{2 \nactionsn -2}{2\nactionsn-1}=1-\bigoh\parens*{\nactionsn^{-1}}\quad\text{and}\quad 1-\qu_{1,\run}=\frac{1}{2}\,.
\end{equation}
Moreover, since
\begin{equation}
\label{eq:r-t-r-t-1}
\cresp_\run(\per)-\cresp_\run(\per-1)=\nactionsn-\ceil*{\frac{\per}{2}}\,,
\end{equation}
we have, for all $\per\le\Per_\run$,
\begin{equation}
\label{eq:1-qt}
\begin{split}
1-\qu_{\per,\run}&=\frac{\cresp_\run(\per)-\cresp_\run(\per-1)}{\cresp_\run(\per)}\\
&=\frac{\nactionsn-\ceil*{\frac{\per}{2}}}{(\per+1)\nactionsn-\ceil*{\frac{\per+1}{2}}\floor*{\frac{\per+1}{2}}}\\
&=\frac{1}{\per+1}\cdot\frac{1-\Theta(\per/\nactionsn)}{1-\Theta(\per/\nactionsn)}\\
&=\frac{1}{\per +1}\cdot\parens*{1+\bigoh\parens*{\frac{\Per_\run}{\nactionsn}}}.
\end{split}
\end{equation}
Hence, by \eqref{eq:1-q} and \eqref{eq:1-qt}, for all $\per\le\Per_\run$,
\begin{equation}\label{eq:factT}
\begin{split}
\prod_{j=0}^{\per} (1-\qu_{j,\run})=&(1+\smalloh(1))\cdot1\cdot \frac{1}{2}\cdot \prod_{j=2}^{\per} \left[\frac{1}{j+1}\cdot \parens*{1+\bigoh\parens*{\frac{\Per_\run}{\nactionsn}}}\right]\\
=&\frac{1}{(\per+1)!} \parens*{1+\bigoh\parens*{\frac{\Per_\run^{2}}{\nactionsn}}}\\
=&(1+\smalloh(1))\frac{1}{(\per+1)!},
\end{split}
\end{equation}
where in the last two steps we used that, by definition, $\Per_\run=\smalloh(\sqrt{\nactionsn})$. 
Hence 
\begin{equation}
\parens*{1+\bigoh\parens*{\frac{\Per_\run}{\nactionsn}}}^{\per-1} \le\parens*{1+\bigoh\parens*{\frac{\Per_\run}{\nactionsn}}}^{\Per_\run}\le \parens*{1+\bigoh\parens*{\frac{\Per_\run^{2}}{\nactionsn}}}=(1+\smalloh(1)),\qquad 2\le\per\le\Per_\run.
\end{equation}
Notice that 
\eqref{eq:factT} implies
\begin{equation}
\label{eq:lim-sum-prod}
\sum_{\per=0}^{\Per_{\run}}\prod_{j=0}^{\per} (1-\qu_{j,\run})=(1+\smalloh(1))\sum_{\per=0}^{\Per_{\run}}\frac{1}{(\per+1)!}=(1+\smalloh(1))(\expo-1)\,,
\end{equation}
and \eqref{eq:limit-q-Tn} follows by taking the limit as $\run\to\infty$.
At this point  \eqref{eq:lim-E-tau-NE} follows from \eqref{eq:expect-sum2} and \eqref{eq:limit-q-Tn}.

We now prove \eqref{eq:lim-Var-tau-NE}. 
Call $\distrtau_{\run}$ the distribution function of $\stime_{\run}^{\NE}$. 
By \eqref{eq:factT}, we have, for $\per\leq\Per_\run$,
\begin{equation}
\label{eq:dist-tau}
\distrtau_{\run}(\per)=1-\Prob\parens*{\stime_{\run}^{\NE}>\per}=(1+\smalloh(1))\cdot\parens*{1-\frac{1}{(\per+1)!}}.
\end{equation}

In what follows, we will use the following lemma, whose proof can be found, for instance, in \citet[corollary~3]{OgrRus:EJOR1999}.

\begin{lemma}
\label{le:var-formula}
Let $X$ be a  random variable with finite expectation $\exptau$, finite variance, and distribution function $G$. 
Define the function $\intdist(x) \coloneqq \int_{0}^x G(\per)\diff \per$. Then 
\begin{equation}
\label{eq:var-formula}
\frac{1}{2}\Var[X]=\int_{0}^{+\infty} (\intdist(\per)-[\per-\exptau]_+)\diff \per,
\end{equation}
where $[\per-\exptau]_+=\max\{\per-\exptau,0\}$.
\end{lemma}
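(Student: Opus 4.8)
The plan is to reduce the identity to two elementary facts about a nonnegative random variable, combined with a single application of Tonelli's theorem. In the intended application $X=\stime_{\run}^{\NE}\ge 0$, and the support being contained in $[0,\infty)$ is in any case implicit both in the definition of $\intdist$ and in the lower limit of integration, so I assume $X\ge 0$ throughout. Note also that finite mean together with finite variance gives a finite second moment, $\Expect[X^{2}]=\Var[X]+\exptau^{2}<\infty$; this is what will make all the integrals below converge.

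First I would record the \emph{stop-loss} representation of the distribution function: for every $\per\ge 0$,
\begin{equation*}
\intdist(\per)=\int_{0}^{\per} G(\peralt)\diff\peralt=\Expect\bracks*{\pospart{\per-X}},
\end{equation*}
which follows from $\Expect\bracks*{\pospart{\per-X}}=\int_{0}^{\infty}\Prob(X<\per-u)\diff u$ and the change of variables $\peralt=\per-u$ (the integrand vanishes for $\peralt<0$ since $X\ge 0$, and replacing $\Prob(X<\peralt)$ by $G(\peralt)$ changes nothing under the integral).

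Next comes the algebraic heart of the argument. Applying $a_{+}=a+[-a]_{+}$ with $a=\per-X$ gives $\pospart{\per-X}=(\per-X)+\pospart{X-\per}$, whence, taking expectations, $\intdist(\per)=(\per-\exptau)+\Expect\bracks*{\pospart{X-\per}}$. Combining this with the elementary identity $(\per-\exptau)-[\per-\exptau]_{+}=-\pospart{\exptau-\per}$ yields the pointwise relation
\begin{equation*}
\intdist(\per)-[\per-\exptau]_{+}=\Expect\bracks*{\pospart{X-\per}}-\pospart{\exptau-\per},\qquad \per\ge 0.
\end{equation*}
Finally I would integrate over $\per\in(0,\infty)$ and treat the two terms on the right separately. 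By Tonelli (the integrand is nonnegative) and $\int_{0}^{\infty}\pospart{X-\per}\diff\per=X^{2}/2$ for $X\ge 0$,
\begin{equation*}
\int_{0}^{\infty}\Expect\bracks*{\pospart{X-\per}}\diff\per=\Expect\bracks*{\int_{0}^{\infty}\pospart{X-\per}\diff\per}=\tfrac12\Expect\bracks*{X^{2}},
\end{equation*}
while $\int_{0}^{\infty}\pospart{\exptau-\per}\diff\per=\int_{0}^{\exptau}(\exptau-\per)\diff\per=\exptau^{2}/2$. Subtracting gives $\tfrac12\bigl(\Expect[X^{2}]-\exptau^{2}\bigr)=\tfrac12\Var[X]$, which is the assertion.

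The only genuine obstacle is a bookkeeping subtlety about convergence: the integrals $\int_{0}^{\infty}\intdist(\per)\diff\per$ and $\int_{0}^{\infty}[\per-\exptau]_{+}\diff\per$ are \emph{each} infinite, so one may not split the integral in the form in which the lemma is stated. The rewriting in the second step is precisely what converts this difference of two divergent integrals into a difference of two convergent ones, after which the finite-second-moment hypothesis and Tonelli's theorem make both the interchange of $\Expect$ and $\int\diff\per$ and the term-by-term splitting fully legitimate.
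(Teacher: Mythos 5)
Your argument is correct, but there is nothing in the paper to compare it against line by line: the paper does not prove \cref{le:var-formula} at all, it imports it from \citet[corollary~3]{OgrRus:EJOR1999}. What you have produced is therefore a self-contained substitute for that citation, and it holds up. The chain --- the stop-loss identity $\intdist(\per)=\Expect\bracks*{\pospart{\per-X}}$ (valid because $\Prob(X<\peralt)$ vanishes for $\peralt<0$ when $X\ge 0$, and replacing $\Prob(X<\peralt)$ by $G(\peralt)$ only alters the integrand on the at most countable set of atoms), then the pointwise rewriting $\intdist(\per)-\pospart{\per-\exptau}=\Expect\bracks*{\pospart{X-\per}}-\pospart{\exptau-\per}$, then Tonelli together with $\int_{0}^{\infty}\pospart{X-\per}\diff\per=X^{2}/2$ --- is essentially the standard derivation in the cited source, specialized to nonnegative $X$. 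Two of your side remarks deserve emphasis because they are exactly where a careless write-up would go wrong. First, the $\infty-\infty$ issue is real: $\int_{0}^{\infty}\intdist(\per)\diff\per$ and $\int_{0}^{\infty}\pospart{\per-\exptau}\diff\per$ each diverge, so one may not split the integral as stated; your rearrangement turns the difference into two pieces that are finite precisely by the finite-second-moment hypothesis, which is what licenses the splitting. Second, your decision to assume $X\ge 0$ is not cosmetic: with $\intdist$ integrated from $0$, the lemma as literally stated is false for general integrable $X$ with finite variance (for $X$ uniform on $[-1,1]$ one computes $\intdist(\per)-\pospart{\per-\exptau}=-1/4$ for all $\per\ge 1$, so the right-hand side equals $-\infty$ while $\tfrac12\Var[X]=\tfrac16$); nonnegativity is implicit in the paper's application to the bounded, nonnegative random variable $\stime_{\run}^{\NE}$, and in the original reference the corresponding integral runs from $-\infty$. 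One optional polish: the integrand is nonnegative by Jensen's inequality, since convexity of $x\mapsto\pospart{\per-x}$ gives $\intdist(\per)=\Expect\bracks*{\pospart{\per-X}}\ge\pospart{\per-\exptau}$, though your proof does not need this fact.
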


We now go back to the proof of \eqref{eq:lim-Var-tau-NE}. 
Since the random variable $\stime_{\run}^{\NE}$ is bounded, we can apply \cref{le:var-formula} to get
\begin{equation}
\label{eq:var-tau-n}
	\Var\bracks*{\stime_{\run}^{\NE}}
	=2\int_{0}^{\exptau} \intdist_{\run}(\per)\diff\per
	+2\int_{\exptau}^{+\infty} [\intdist_{\run}(\per)-(\per-\exptau)]\diff\per\,,
\end{equation}
where, thanks to \eqref{eq:dist-tau}, for all $\per\le \Per_{\run}$, 
\begin{equation}
\label{eq:Phi-n}
\intdist_{\run}(\per)=\int_{0}^{\per} \distrtau_{\run}(\peralt)\diff\peralt\sim \int_{0}^{\per}\left(1-\frac{1}{\floor{\peralt +1}!}\right)\diff\peralt,
\end{equation}
By explicit numerical integration, using \eqref{eq:Phi-n}
and the fact that $\exptau \coloneqq \Expect\bracks*{\stime_{\run}^{\NE}}\sim \expo-1\le \Per_\run$ for all $\run$ sufficiently large, we get
\begin{equation}
\label{eq:num-int-1}
	2\int_{0}^\exptau\intdist_{\run}(\per)\diff \per\sim 2\int_{0}^\exptau \diff \per \int_{0}^{\per}\left(1-\frac{1}{\floor{\peralt +1}!}\right)\diff \peralt \approx 0.258\,.
\end{equation}
Moreover, for all $\per\ge 0$,
\begin{equation}
\label{eq:Phi-n-approx}
\begin{split}
\intdist_{\run}(\per)-(\per-\exptau)
&=\int_{0}^{\per}(\distrtau_{\run}(\peralt)-1)\diff \peralt +\int_{0}^{+\infty}(1-\distrtau_{\run}(\peralt))\diff \peralt \\
&=\int_{\per}^{+\infty}(1-\distrtau_{\run}(\peralt))\diff \peralt
\\&=\int_{\per}^{\Per_\run\vee \per} (1-\distrtau_{\run}(\peralt))\diff \peralt+\int_{\Per_\run\vee \per}^{+\infty}(1-\distrtau_{\run}(\peralt))\diff \peralt\,.
\end{split}
\end{equation}
Since the sequence $\qu_{\per,\run}$, defined in \eqref{eq:q-t-n}, is increasing in $\per$ and $\qu_{1,\run}=1/2$,  by \eqref{eq:expect-sum3} we get, for all $\peralt\ge 0$
\begin{equation}
\label{eq:1-Hn}
1-\distrtau_{\run}(\peralt)=\Prob\parens*{\stime_{\run}^{\NE}>\peralt}\leq \frac{1}{2^\peralt}.   
\end{equation}
Hence, for all $\per\ge 0$,
\begin{equation}\label{eq:bound-H-large-s}
\int_{\Per_\run\vee\per}^{+\infty}(1-\distrtau_{\run}(\peralt))\diff \peralt\leq \frac{1}{\log 2}{2^{-(\Per_\run\vee \per)}},
\end{equation}
which goes to zero when $\run\rightarrow\infty$. 
Therefore, by \eqref{eq:dist-tau}, \eqref{eq:Phi-n-approx}, and \eqref{eq:bound-H-large-s}, we conclude that
\begin{equation}\label{eq:Phi-cases}
\intdist_{\run}(\per)-(\per-\exptau)=\begin{cases}
(1+o(1))\int_{\per}^{\Per_\run}\frac{1}{\floor{\peralt+1}!} \diff\peralt+\bigoh(2^{-\Per_\run})&\text{if }\per\le \Per_\run,\\
\bigoh(2^{-\per})&\text{if }\per>\Per_\run.
\end{cases}
\end{equation}
It is now convenient to split the second integral in \eqref{eq:var-tau-n} as follows
\begin{equation}
\label{eq:integral-split}
\begin{split}
2\int_{\exptau}^{+\infty} [\intdist_{\run}(\per)-(\per-\exptau)]\diff \per
&=2\int_{\exptau}^{\Per_\run} [\intdist_{\run}(\per)-(\per-\exptau)]\diff \per +2\int_{\Per_\run}^{+\infty} [\intdist_{\run}(\per)-(\per-\exptau)]\diff \per.
\end{split}
\end{equation}
At this point, using \eqref{eq:Phi-cases}, we can bound the second integral on the right hand side of \eqref{eq:integral-split} as follows
\begin{equation}\label{eq:Phi-cases-large}
\begin{split}
  \int_{\Per_\run}^{\infty} [\intdist_{\run}(\per)-(\per-\exptau)]\diff \per  &=  \int_{\Per_\run}^{+\infty} \bigoh(2^{-\per}) \diff\per
  = \bigoh(2^{-\Per_\run}) .
\end{split}
\end{equation}
On the other hand, the first integral on the right hand side of \eqref{eq:integral-split} can be bounded by
\begin{equation}\label{eq:Phi-cases-small}
\begin{split}
  \int_{\exptau}^{\Per_\run} [\intdist_{\run}(\per)-(\per-\exptau)]\diff \per  &=(1+o(1))\int_{\exptau}^{\Per_\run} \int_{\per}^{\Per_\run}\frac{1}{\floor{\peralt+1}!} \diff\peralt \diff \per+\bigoh(\Per_\run 2^{-\Per_\run})
  \\&=(1+o(1))\int_{\exptau}^{\Per_\run} \int_{\per}^{\Per_\run}\frac{1}{\floor{\peralt+1}!} \diff\peralt \diff \per
  \\&=(1+o(1))\int_{\exptau}^{\infty} \int_{\per}^{\infty}\frac{1}{\floor{\peralt+1}!} \diff\peralt \diff \per .
\end{split}
\end{equation}
In conclusion, by \eqref{eq:integral-split}, \eqref{eq:Phi-cases-large}, and \eqref{eq:Phi-cases-small} and numerical integration, we have
\begin{equation}
\label{eq:num-int-2}
2\int_{\exptau}^{+\infty} [\intdist_{\run}(\per)-(\per-\exptau)]\diff \per
\sim 2\int_{\exptau}^{+\infty} \diff \per\int_{\per}^{+\infty}\frac{1}{\floor{\peralt+1}!} \diff \peralt\approx 0.509\,.
\end{equation}
The combination of \eqref{eq:var-tau-n}, \eqref{eq:num-int-1}, and \eqref{eq:num-int-2} gives
\begin{equation*}
\Var\bracks*{\stime_{\run}^{\NE}}\approx 0.767\,.  \qedhere
\end{equation*}
\end{proof}

%
%

\subsection*{Proofs of \cref{suse:iid}}\label{suse:proof-BRD-iid}

\begin{proof}[Proof of \cref{th:iid-rectangular}]
First observe that either $\stime_{\run}^{\NE} \le 2 \nactionsn -1$ or $\stime_{\run}^{\NE}=\infty$.
Call
\begin{equation}
\label{eq:BRD(t)=NE}
\brdNE_{\run}(\per) \coloneqq \braces*{\BRD_{\run}(\per)\in \NE_{\run}},\quad\text{and}\quad \badiid_{\run}(\per)\coloneqq \brdNE_{\run}(\per)^c\cap\braces*{\BRD_{\run}(\per)\not\in\resp_{\run}(\per-2) },
\end{equation}
where $\resp_{\run}(\per)$ is defined as in \eqref{eq:r-t=}.
Therefore, the statement in \eqref{eq:tau<infty} is equivalent to
\begin{equation}
\label{eq:tau<infty-equiv}
\lim_{\run\to\infty}\Prob(\brdNE_{\run}(2 \nactionsn -1))=0.
\end{equation}
Notice that
\begin{equation}
\label{eq:P-C0-C-1}
\Prob(\brdNE_{\run}(0)) = \frac{1}{\nactionsAn\nactionsBn}
\quad\text{and}\quad
\Prob(\brdNE_{\run}(1)) = \Prob(\brdNE_{\run}(0)) + \Prob(\brdNE_{\run}(1) \mid \brdNE_{\run}(0)^{c}) \Prob(\brdNE_{\run}(0)^{c}).
\end{equation}
We have
\begin{equation}
\label{eq:P-C1-C0c}
\begin{split}
\Prob(\brdNE_{\run}(1) \mid \brdNE_{\run}(0)^{c})
&= \sum_{\actA=1}^{\nactionsAn} \Prob\parens*{\brdNE_{\run}(1) \mid \brdNE_{\run}(0)^{c} \cap \BRD_{\run}(1) = (\actA,1)} 
\Prob\parens*{\BRD_{\run}(1) = (\actA,1) \mid \brdNE_{\run}(0)^{c}}\\
&= \sum_{\actA=2}^{\nactionsAn} \Prob\parens*{\brdNE_{\run}(1) \mid \brdNE_{\run}(0)^{c} \cap \BRD_{\run}(1) = (\actA,1)} 
\Prob\parens*{\BRD_{\run}(1) = (\actA,1) \mid \brdNE_{\run}(0)^{c}}\\
&= \sum_{\actA=2}^{\nactionsAn} \frac{1}{\nactionsBn}
\Prob\parens*{\BRD_{\run}(1) = (\actA,1) \mid \brdNE_{\run}(0)^{c}}\\
&= \frac{1}{\nactionsBn} \parens*{1 - \Prob\parens*{\BRD_{\run}(1) = (1,1) \mid \brdNE_{\run}(0)^{c}}}\\
&= \frac{1}{\nactionsBn} \parens*{1- \frac{\Prob\parens*{\brdNE_{\run}(0)^{c} \mid \BRD_{\run}(1) = (1,1)} \Prob\parens*{\BRD_{\run}(1) = (1,1)}}{\Prob\parens*{\brdNE_{\run}(0)^{c}}}}\\
&= \frac{1}{\nactionsBn} \parens*{1- \frac{\frac{\nactionsBn-1}{\nactionsBn}\frac{1}{\nactionsAn}}{1-\frac{1}{\nactionsAn\nactionsBn}} }= \frac{\nactionsAn-1}{\nactionsAn\nactionsBn-1}\,.
\end{split}
\end{equation}
This implies
\begin{equation}
\label{eq:P-C1}
\Prob(\brdNE_{\run}(1)) = \frac{1}{\nactionsAn\nactionsBn}
+ \frac{\nactionsAn-1}{\nactionsAn\nactionsBn-1}
\parens*{1 - \frac{1}{\nactionsAn\nactionsBn}}
=\frac{1}{\nactionsBn}\,.
\end{equation}
Notice that 
\begin{equation}
     \brdNE_{\run}(\per)\cap \brdNE_{\run}^c(\per-1)= \braces*{\BRD_{\run}(\per-1)\neq\BRD_{\run}(\per)=\BRD_{\run}(\per+1) }\subseteq \braces*{\BRD_{\run}(\per)\not\in\resp_{\run}(\per-2)}.
\end{equation}
Hence, by definition of  $\badiid_{\run}(\per-1)$, we get
\begin{equation}\label{eq:identity-CD}
    \brdNE_{\run}(\per)\cap \brdNE_{\run}^c(\per-1)= \brdNE_{\run}(\per)\cap \badiid_{\run}(\per-1).
\end{equation}
Moreover, since
\begin{equation}
\label{eq:monotone-C}
    \brdNE_{\run}(\per-1)\subseteq \brdNE_{\run}(\per),
\end{equation}
by \eqref{eq:identity-CD} and \eqref{eq:monotone-C} we conclude that, for $\per\geq 1$,
\begin{equation}
\label{eq:PCt}
\Prob(\brdNE_{\run}(\per)) = \Prob(\brdNE_{\run}(\per-1)) + \Prob(\brdNE_{\run}(\per) \mid \badiid_{\run}(\per-1)) \Prob(\badiid_{\run}(\per-1)).
\end{equation}

\begin{figure}[h]
\centering
\includegraphics[width=14cm]{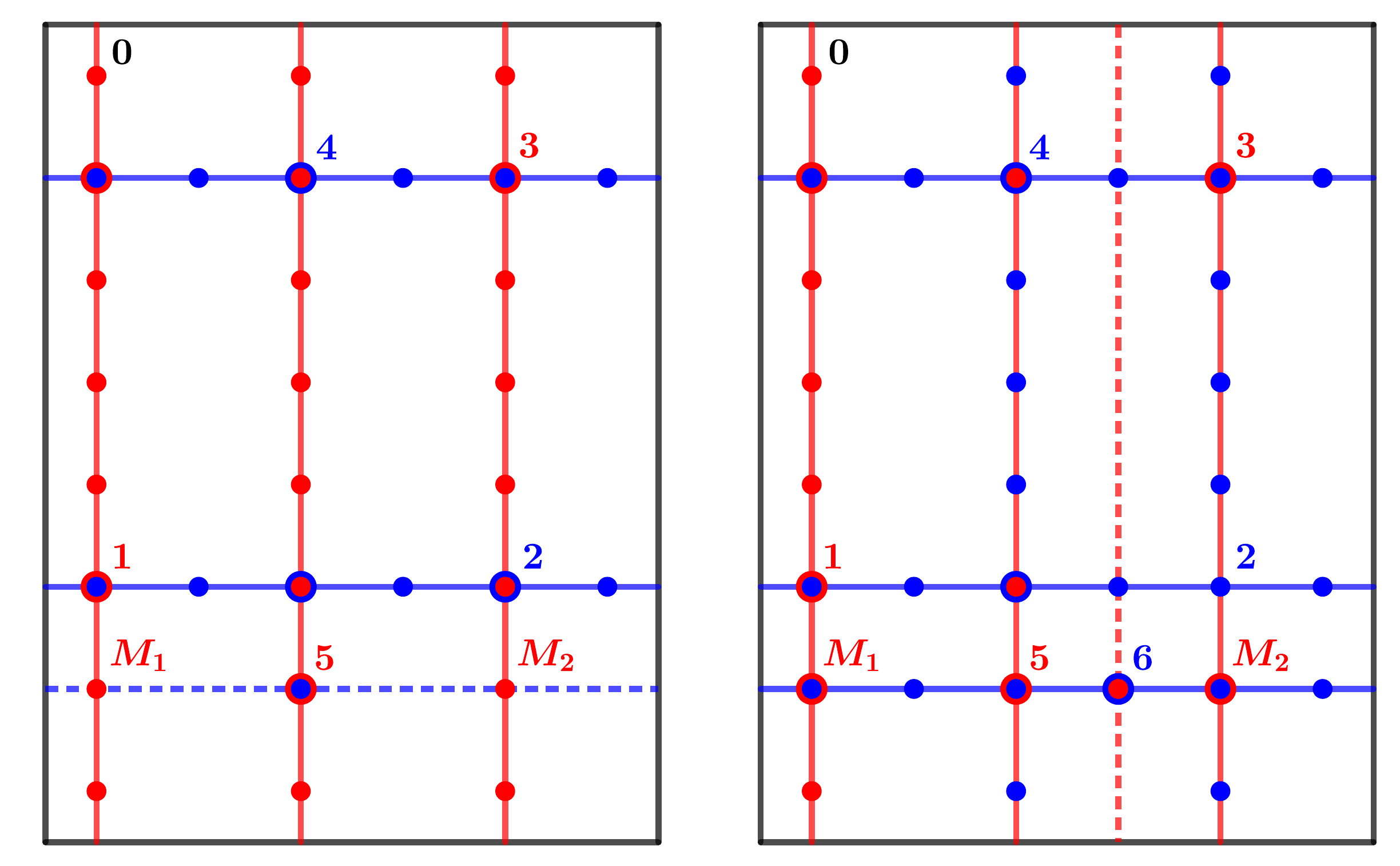}
\caption{The figure on the left shows an instance of the first five steps of the \ac{BRD}, whereas the one on the right considers an additional step. 
Given the position of $\BRD(\per)$ for $\per=1,\ldots,5$  in the figure on the left,  $\badiid_{\run}(5)$ coincides with the event that the payoff of player $\pB$ at $5$ is not the maximum of its row (the blue dashed line). 
Conditioning on $\badiid_{\run}(5)$, the probability of $\badiid_{\run}(6)$ is the product between the probability that $\BRD(6)$ is not at the action profiles $M_1$ and $M_2$, 
that is $(\nactionsBn-3)/(\nactionsBn-1)$, and, given this, the probability that the payoff of player $\pA$ at $6$ is not the maximum of its column (the red dashed line), \ie $(\nactionsAn-1)/\nactionsAn$. 
This explains \eqref{eq:P-Bt} when $\per=6$.
Similarly, conditioning on $\badiid_{\run}(5)$, $\brdNE_{\run}(6)$ is the intersection between two events: the first one is that  the position of $\BRD(6)$ does not coincide with $M_1$ and $M_2$, which has probability $(\nactionsBn-3)/(\nactionsBn-1)$; 
the second one is that the payoff of player $\pA$ at $6$ is the maximum of its column (the red dashed line), which, conditioning on the first event, has probability $1/\nactionsAn$. 
This justifies  \eqref{eq:P-Ct-Bt-1} when $\per=6$. }
\label{fig:EventoDC}
\end{figure}
By explicit computation (see \cref{fig:EventoDC}), we get
\begin{align}
\label{eq:P-Ct-Bt-1}
\Prob\parens*{\brdNE_{\run}(\per) \mid \badiid_{\run}(\per-1)}
&=
\begin{cases}
\frac{1}{\nactionsAn} \frac{\nactionsBn - \floor*{\frac{\per}{2}}}{\nactionsBn - 1} & \text{for $\per$ even,}\\
\frac{1}{\nactionsBn} \frac{\nactionsAn - \floor*{\frac{\per}{2}}}{\nactionsAn - 1} & \text{for $\per$ odd.}
\end{cases}
\end{align}
On the other hand, since
\begin{equation}
    \badiid_{\run}(\per)\subseteq \badiid_{\run}(\per-1),
\end{equation}
we have
\begin{equation}\label{eq:monotonicity-D}
\Prob\parens*{\badiid_{\run}(\per)}=\Prob\parens*{\badiid_{\run}(\per)\cap\badiid_{\run}(\per-1) }=\Prob\parens*{\badiid_{\run}(\per-1) }\Prob\parens*{\badiid_{\run}(\per)\mid\badiid_{\run}(\per-1) }.
\end{equation}
The latter conditional probability can be explicitly computed (see \cref{fig:EventoDC}), obtaining
\begin{equation}
    \label{eq:P-Bt}
\Prob\parens*{\badiid_{\run}(\per)\mid\badiid_{\run}(\per-1) }
=
\begin{cases}
\frac{\nactionsAn-1}{\nactionsAn} \frac{\nactionsBn - \floor*{\frac{\per}{2}}}{\nactionsBn - 1} & \text{for $\per$ even,}\\
\frac{\nactionsBn-1}{\nactionsBn} \frac{\nactionsAn - \floor*{\frac{\per}{2}}}{\nactionsAn - 1} & \text{for $\per$ odd.}
\end{cases}
\end{equation}
By iterating \eqref{eq:monotonicity-D} and \eqref{eq:P-Bt}, we deduce that, for all $\per \le 2 \nactionsn - 1$ odd,
\begin{equation}
\label{eq:PBt-equiv}
\begin{split}
\Prob\parens*{\badiid_{\run}(\per)}
&= \Prob\parens*{\badiid_{\run}(\per-2)} \frac{\nactionsAn - \floor*{\frac{\per}{2}}}{\nactionsAn} \frac{\nactionsBn - \floor*{\frac{\per}{2}}}{\nactionsBn}\\
&= \frac{\nactionsBn - 1}{\nactionsBn} \prod_{\substack{j=1\\j\text{ odd}}}^{\floor*{\frac{\per}{2}}}
\frac{\nactionsBn - j}{\nactionsBn} \frac{\nactionsAn - j}{\nactionsAn}\\
&\le
\prod_{\substack{j=1\\j\text{ odd}}}^{\floor*{\frac{\per}{2}}}
\parens*{1 -\frac{j}{\nactionsn}}\le
\exp\parens*{-\sum_{\substack{j=1\\j\text{ odd}}}^{\floor*{\frac{\per}{2}}} \frac{j}{\nactionsn}}\le 
\exp\parens*{-\frac{(\per-1)^{2}}{4\nactionsn}},
\end{split}
\end{equation}
where we have used the inequality $1-x\leq \expo^{-x}$.

Moreover, by \eqref{eq:P-Bt}, $\Prob\parens*{\badiid_{\run}(\per)}$ is decreasing in $\per$ and, by \eqref{eq:P-Ct-Bt-1}, $\Prob\parens*{\brdNE_{\run}(\per) \mid \badiid_{\run}(\per-1)} \le 1/\nactionsn$ for all $\per$.
Hence, iterating \eqref{eq:PCt}, we can write
\begin{equation}
\label{eq:P-Ct-equiv}
\begin{split}
\Prob\parens*{\brdNE_{\run}(\per)} 
&= \Prob\parens*{\brdNE_{\run}(\per-2)} + \Prob\parens*{\brdNE_{\run}(\per-1) \mid \badiid_{\run}(\per-2)} 
\Prob\parens*{\badiid_{\run}(\per-2)} \\
&\quad+ \Prob\parens*{\brdNE_{\run}(\per) \mid \badiid_{\run}(\per-1)} 
\Prob\parens*{\badiid_{\run}(\per-1)} \\
&\le \Prob\parens*{\brdNE_{\run}(\per-2)} + \frac{2}{\nactionsn} \Prob\parens*{\badiid_{\run}(\per-2)}.
\end{split}
\end{equation}
By applying the estimate in \eqref{eq:PBt-equiv} to $\Prob\parens*{\badiid_{\run}(\per-2)}$, we obtain, for $\per$ odd,
\begin{equation}
\label{eq:P-Ct-equiv-2}
\Prob\parens*{\brdNE_{\run}(\per)} \le \Prob\parens*{\brdNE_{\run}(\per-2)}
+ \frac{2}{\nactionsn} \exp\parens*{-\frac{(\per-3)^{2}}{4\nactionsn}},
\end{equation}
whose iteration leads to
\begin{equation}\label{eq:P-Ct-iter}
\Prob\parens*{\brdNE_{\run}(\per)} 
\le 
\Prob\parens*{\brdNE_{\run}(1)} + \frac{2}{\nactionsn} 
\sum_{\substack{j=1\\
j\text{ odd}}}^{\per-3} \exp\parens*{-\frac{j^{2}}{4\nactionsn}}
\le
\frac{1}{\nactionsn} + \frac{2}{\nactionsn} 
\sum_{\substack{j=1\\
j\text{ odd}}}^{\per-3} \exp\parens*{-\frac{j^{2}}{4\nactionsn}}\,,
\end{equation}
where in the last inequality we have used \eqref{eq:P-C1}. 

Taking $\per=2\nactionsn-1$ and using the estimate $\sum_{j=0}^{H}\exp\braces*{-j^{2}/H}=\bigoh(\sqrt{H})$ for $H\to\infty$, we conclude that
\begin{equation}
\label{eq:P-C-2KA-1}
\Prob\parens*{\brdNE_{\run}(2\nactionsn - 1)} \le \bigoh\parens*{\frac{1}{\sqrt{\nactionsn}}},
\end{equation}
which implies \eqref{eq:tau<infty}.
\end{proof}

%
%

\subsection*{Proofs of \cref{suse:general}}
\label{suse:proof-general}

We now prove the following result, and then show that \cref{pr:pn-p-tau} immediately follows from it.
\begin{proposition}
\label{pr:BRD-p-not0}
Fix a positive sequence $\proba_{\run}$. Then, for every sequence $\Per_{\run}$ such that
\begin{equation}
\label{eq:condition-T}
\lim_{\run\to\infty}\Per_{\run} = \infty,\quad
\lim_{\run\to\infty}\frac{\log(\proba_{\run})}{\log(\Per_{\run})}=0,\quad\text{and}\quad	\lim_{\run\to\infty}\frac{\Per_{\run}}{\sqrt{\nactionsAn\wedge\nactionsBn}}=0,
	\end{equation}
	we have
\begin{equation}
\label{eq:tau<T}
	\lim_{\run\to\infty} \Prob\parens*{\stime_{\run}^{\NE}<\Per_{\run}}=1.
	\end{equation}
\end{proposition}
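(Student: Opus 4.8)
The plan is to establish the stronger statement that, with probability tending to $1$, during the first $\Per_\run$ steps the \ac{BRD} both visits a \emph{correlated} profile---one on which $\PayA_\run=\PayB_\run$---that is a best response for each player (hence a \ac{PNE}), and never returns to a row or column it has already explored. As in \cref{suse:BRD-tau} I reveal payoffs lazily, so that at each step the payoffs of the newly explored line are fresh independent uniforms; the relevant control is then an interplay between the second hypothesis in \eqref{eq:condition-T} (which forces $\proba_\run\Per_\run\to\infty$) and the third one (a birthday-type bound ruling out collisions, and hence traps, in this short time window).

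\textbf{Per-step stopping bound.} Consider a step $\per$ with $3\le\per\le\Per_\run$ at which the active player has just moved---say player $\pA$ moved inside the current column to the row of largest $\PayA_\run$. Conditionally on the trajectory so far and on no line having been revisited, the value $V$ created there is the maximum of the $\nactionsAn$ fresh uniform entries of that column, independent of the coin (with success probability $\proba_\run$) that decides whether the profile is correlated. If the profile is correlated, then it is a \ac{PNE} precisely when $V$ also dominates the $\PayB_\run$-payoffs of player $\pB$ along the equally fresh current row; by \cref{prop:maxunif,prop:betacomp}, exactly as in the proof of \cref{le:bound-tau-NE}, this conditional probability is $\nactionsAn/(\nactionsAn+\nactionsBn-1)$ up to an $\bigoh(\per/\nactionsn)$ correction from the few already-revealed entries. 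Calling a step \emph{good} when the player who just moved owns the larger action set, this probability is at least $1/2$, so each good step yields a \ac{PNE} with conditional probability at least $\proba_\run/2$. Good steps occur at every second step, hence there are at least $\lfloor\Per_\run/2\rfloor$ of them before time $\Per_\run$.

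\textbf{No collisions, and conclusion.} In the first $\Per_\run$ steps the walk uncovers $\Theta(\Per_\run)$ distinct rows and columns, and by \cref{le:size-of-R} the payoffs stay fresh as long as no line recurs. Each maximiser lands in a uniformly random slot of the current line, so the probability that some step up to $\Per_\run$ points back into an already explored row or column is, by a union bound, $\bigoh(\Per_\run^{2}/\nactionsn)=\smalloh(1)$ by the third hypothesis in \eqref{eq:condition-T}; on this no-collision event the walk cannot have closed a cycle, so it cannot be trapped, and $\{\stime_\run^{\NE}\ge\Per_\run\}$ forces failure at every good step. Iterating the per-step bound along the good steps (conditioning at each one on survival and absence of collisions up to that step) gives
\[
\Prob\parens*{\stime_\run^{\NE}\ge\Per_\run}\le\parens*{1-\tfrac{\proba_\run}{2}}^{\lfloor\Per_\run/2\rfloor}+\bigoh\parens*{\frac{\Per_\run^{2}}{\nactionsn}}\le\exp\parens*{-c\,\proba_\run\Per_\run}+\smalloh(1),
\]
for some constant $c>0$. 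The second hypothesis in \eqref{eq:condition-T} yields $\log(\proba_\run\Per_\run)/\log\Per_\run\to 1$, so $\proba_\run\Per_\run\to\infty$ and the right-hand side tends to $0$, which is \eqref{eq:tau<T}.

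The main obstacle is the lazy-revelation bookkeeping underpinning the per-step bound: one must check that, conditionally on survival and on no line having been revisited, the value produced at a good step really is the maximum of a fresh batch of uniforms that is independent of the orthogonal fresh batch it is tested against, and that these conditional estimates may be legitimately multiplied along the (random) sequence of good steps via the filtration of lazily revealed payoffs. Controlling the $\bigoh(\per/\nactionsn)$ corrections from the handful of previously exposed entries, uniformly over $\per\le\Per_\run$, is exactly what the hypothesis $\Per_\run=\smalloh(\sqrt{\nactionsn})$ is there to absorb.
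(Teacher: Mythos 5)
Your route is genuinely different from the paper's, and as a strategy it is sound. The paper never iterates a per-step success probability: its mechanism is the event $\FTs_{\run}^{\peralt_\run,\Per_\run}$, a run of $\peralt_\run$ \emph{consecutive} profiles in $\eqpay_\run$, which makes the running payoff the maximum of $u\ge\frac{\peralt_\run}{3}(\nactionsAn+\nactionsBn)$ fresh uniforms and hence a \ac{PNE} at a single attempt with probability $1-\bigoh(1/\peralt_\run)\to 1$; the price of a run is $\proba_\run^{\peralt_\run}$ per block, whence the binomial block bound \eqref{eq:P-tau-T-intervals} and the delicate two-scale choice \eqref{eq:s-n-instance} of $\peralt_\run$, which is exactly where the hypothesis $\log(\proba_\run)/\log(\Per_\run)\to 0$ is consumed. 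You instead use single-coin attempts: at every ``good'' step the freshly created column maximum is $\Beta$-distributed with the \emph{larger} parameter, so conditionally on correlation (one coin, cost $\proba_\run$) it dominates the orthogonal fresh line with probability $\ge\tfrac12-\smalloh(1)$ by \cref{prop:maxunif,prop:betacomp}, and you multiply over $\Theta(\Per_\run)$ attempts. This buys a cleaner and quantitatively stronger bound, $\exp(-c\,\proba_\run\Per_\run)+\bigoh(\Per_\run^{2}/\nactionsn)$, which only needs $\proba_\run\Per_\run\to\infty$ --- strictly weaker than the second condition in \eqref{eq:condition-T} (e.g.\ it would tolerate $\proba_\run=\Per_\run^{-1/2}$, hence $\proba_\run=\nactionsn^{-1/4}$ with $\Per_\run=\nactionsn^{0.45}$). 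If fully verified, this actually strengthens the proposition and speaks directly to open problem (i) in \cref{se:conclusions}; that unexpected strength is itself a reason to execute the deferred bookkeeping with great care rather than wave at it.

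Two of your supporting claims need repair, though both are repairable with tools the paper already provides. First, ``each maximiser lands in a uniformly random slot of the current line'' is false for $\proba_\run>0$: entries at intersections with previously visited lines are conditioned (not uniform), and the correct statement is the stochastic domination of \cref{le:dom-unif}, used exactly as in \eqref{eq:est4}--\eqref{eq:PEt-less-iter}; the domination goes in the favorable direction, so the birthday bound $\bigoh(\Per_\run^{2}/\nactionsn)$ survives, but you should cite that lemma rather than the iid heuristic. Second, the per-step bound must hold conditionally on the \emph{entire} failing, collision-free history, uniformly over $\per\le\Per_\run$: you need (a) that revelation is done through comparison events only, so that payoffs on visited lines remain dominated by uniforms and unrevealed coins remain independent $\Binomial(1,\proba_\run)$ variables even at a profile whose $\pA$-payoff has just been exposed as a column maximum; and (b) that conditioning on past failures only pushes the current champion value up (the walk moved \emph{because} the new maximum beat it), never tilts the fresh orthogonal batch, so the $\Beta$ comparison is a legitimate lower bound at every good step. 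You flag this as ``the main obstacle,'' correctly: it is the real content of your proof, and the paper's run construction exists precisely to avoid multiplying conditional estimates across a growing number of dependent attempts, settling instead for one high-probability attempt per block at the cost of the stronger hypothesis.
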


The proof of \cref{pr:BRD-p-not0} relies on the following lemma.
\begin{lemma}
\label{le:dom-unif}
Given a sequence $\braces*{X_{i}}_{i\in\naturals_{+}}$ of \iid random variables having a uniform distribution on $[0,1]$, consider the event 
\begin{equation}
\label{eqX-not-max}
\Xonemax=\{X_1<\max \{X_1,\ldots,X_k\}\}\,.
\end{equation}
Then for all $x\in[0,1]$ we have 
\begin{equation}
\label{eq:P-dominance}
\Prob(X_1\leq x \mid \Xonemax) \ge x = \Prob(X_1\leq x).
\end{equation}
\end{lemma}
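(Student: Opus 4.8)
The plan is to reduce the whole statement to the complementary event, whose probability is transparent. Write $\Xonemax^{c}=\{X_{1}=\max\{X_{1},\dots,X_{k}\}\}$. Since the $X_{i}$ are \iid with a continuous (uniform) law, ties occur with probability zero, so up to a null event $\Xonemax^{c}$ is exactly the event that $X_{1}$ is the unique largest among the $k$ variables. By exchangeability each variable is equally likely to be the maximizer, hence $\Prob(\Xonemax^{c})=1/k$ and $\Prob(\Xonemax)=(k-1)/k>0$ for the relevant range $k\ge 2$.

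Next I would compute the joint probability $\Prob(X_{1}\le x,\,\Xonemax^{c})$ by conditioning on the value of $X_{1}$. Given $X_{1}=t$, the event that $X_{1}$ dominates the remaining $k-1$ independent uniforms has probability $t^{k-1}$, so integrating over $t\in[0,x]$ gives
\[
\Prob(X_{1}\le x,\,\Xonemax^{c})=\int_{0}^{x}t^{k-1}\diff t=\frac{x^{k}}{k}.
\]
Subtracting from $\Prob(X_{1}\le x)=x$ yields $\Prob(X_{1}\le x,\,\Xonemax)=x-x^{k}/k$, and dividing by $\Prob(\Xonemax)=(k-1)/k$ produces the closed form
\[
\Prob(X_{1}\le x\mid \Xonemax)=\frac{kx-x^{k}}{k-1}.
\]

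The final step is purely elementary: the claimed inequality $\tfrac{kx-x^{k}}{k-1}\ge x$ becomes, after clearing the (positive) denominator and cancelling $kx$, the statement $x^{k}\le x$, which holds for every $x\in[0,1]$ and every integer $k\ge 2$. This closes the argument, and the equality $x=\Prob(X_{1}\le x)$ on the right of \eqref{eq:P-dominance} is just the uniform distribution function.

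As for difficulties, there is essentially no obstacle here: the only point requiring a word of care is the passage from the strict-inequality event $\{X_{1}<\max\}$ to the complement of $\{X_{1}=\max\}$, which relies on the continuity of the uniform law to discard ties, exactly as is done elsewhere in the paper. Everything else reduces to a one-line integral and an elementary monotonicity inequality, so the main ``work'' is simply choosing to condition on the complement rather than on $\Xonemax$ directly.
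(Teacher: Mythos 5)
Your proof is correct and follows essentially the same route as the paper: both arguments pass to the complementary event $\Xonemax^{c}$, on which $X_{1}$ has the $\Beta(k,1)$ law $x^{k}$, and both reduce the claimed domination to the elementary inequality $x^{k}\le x$ on $[0,1]$. The only cosmetic difference is that you compute $\Prob(\Xonemax^{c})=1/k$ and the explicit closed form $\Prob(X_{1}\le x\mid \Xonemax)=(kx-x^{k})/(k-1)$, whereas the paper leaves $\Prob(\Xonemax^{c})$ abstract and bounds the total-probability decomposition directly, which yields the same conclusion without the exact constant.
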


\begin{proof}
The conditional distribution of $X_{1}$, given $\Xonemax^{c}$, is $\Beta(k,1)$, \ie
\begin{equation}
\label{eq:alf2}
\Prob(X_1\leq x\,|\,\Xonemax^c)=x^k\leq x\,.
\end{equation}
Then 
\begin{equation}
\label{eq:dom-unif00}
\begin{split}
\Prob(X_1\leq x \mid \Xonemax)&=\frac{\Prob(X_1\leq x)-\Prob(X_1\leq x\,|\,\Xonemax^c)\Prob(\Xonemax^c)}{\Prob(\Xonemax)}
\\
&=\frac{x-x^{k}\Prob(\Xonemax^c)}{1-\Prob(\Xonemax^c)}
\\
&\geq\frac{x-x\Prob(\Xonemax^c)}{1-\Prob(\Xonemax^c)}=x\,,
\end{split}
\end{equation}
where the last inequality is due to \eqref{eq:alf2}.
\end{proof}

\begin{proof}[Proof of \cref{pr:BRD-p-not0}]
Define the events
\begin{align}
\label{eq:est0}
\brdR_{\run}(\per)
&\coloneqq \braces*{\BRD_{\run}(\per) \in \resp_{\run}(\per-2)},\quad\text{for }\per \ge 2,\\ 
\label{eq:Znt}
	\brddiff_\run(\per) 
	&\coloneqq \braces*{\BRD_{\run}(\per)\neq\BRD_{\run}(\per -1)},\quad\text{for }\per \ge 1, 
\end{align}
In words, $\brdR_{\run}(\per)$ represents the event that at time $\per$ the process $\BRD_{\run}(\per)$ visits a previously visited row or column, that is,  $\brdR_{\run}(\per)=\{\stime_{\run}^{\resp}\le \per\}$,
whereas 
$\brddiff_\run(\per)$ represents the event that $\BRD_{\run}(\per-1)$ is not a \ac{PNE}.
Therefore, the sequence $\brdR_{\run}(\per)$ is increasing in $\per$, \ie
\begin{equation}
\label{eq:Et-monotone}
\brdR_{\run}(\per-1) \subseteq \brdR_{\run}(\per),
\end{equation}
whereas the sequence $\brddiff_\run(\per)$ is decreasing in $\per$, \ie
\begin{equation}
\label{eq:Zt-monotone}
\brddiff_{\run}(\per-1) \supseteq \brddiff_{\run}(\per).
\end{equation}
Then
\begin{equation}
\label{eq:PEt-less}
\begin{split}
\Prob\parens*{\brdR_{\run}(\per)\cap\brddiff_\run(\per)}&=\Prob\parens*{\brdR_{\run}(\per-1)\cap \brdR_{\run}(\per)\cap\brddiff_\run(\per)}+\Prob\parens*{\brdR_{\run}(\per-1)^c\cap \brdR_{\run}(\per)\cap\brddiff_\run(\per)}\\
&=\Prob\parens*{ \brdR_{\run}(\per-1)\cap\brddiff_\run(\per)}+\Prob\parens*{\brdR_{\run}(\per-1)^c\cap \brdR_{\run}(\per)\cap\brddiff_\run(\per)}\\
&\le\Prob\parens*{ \brdR_{\run}(\per-1)\cap\brddiff_\run(\per-1)}+\Prob\parens*{\brdR_{\run}(\per-1)^c\cap \brdR_{\run}(\per)\cap\brddiff_\run(\per)}\\
&\le \Prob\parens*{ \brdR_{\run}(\per-1)\cap\brddiff_\run(\per-1)}+\Prob\parens*{\brdR_{\run}(\per)\mid\brdR_{\run}(\per-1)^c\cap\brddiff_\run(\per)},
\end{split}
\end{equation}
where the first equality is just the law of total probabilities, the second derives from \eqref{eq:Et-monotone}, the first inequality is a consequence of \eqref{eq:Zt-monotone}, and the last stems from the definition of conditional probability.
Moreover, we claim that, 
for $\per \ge 3$, we have
\begin{equation}
\label{eq:est4}
\Prob\parens*{\brdR_{\run}(\per)\mid\brdR_{\run}(\per-1)^c\cap\brddiff_\run(\per)}\le
\begin{cases}
\frac{\floor{\frac{\per}{2}}-1}{\nactionsBn-1}&\text{if $\per$ is even,}\\
\frac{\floor{\frac{\per}{2}}-1}{\nactionsAn-1}&\text{if $\per$ is odd.}
\end{cases}
\end{equation}
The conditioning event on the \lhs of \eqref{eq:est4} represents the fact that  $\BRD_\run(\per-1)$ is neither a \ac{PNE} nor an element of $\resp_{\run}(\per-3)$.
Therefore  \eqref{eq:est4} provides a bound for the conditional probability that $\BRD_\run(\per)$ is  an element of $\resp_{\run}(\per-2)$.
To see why the bound holds, start considering  the case $\proba_{\run}=0$, where  the inequality in \eqref{eq:est4} holds as an equality. 
This is due to the fact that all payoffs are \iid\!\!. On the other extreme, when $\proba_{\run}=1$, the left hand side equals zero, since potential games do not admit traps. In the intermediate case when $\proba_{\run}\in(0,1)$, the payoffs in the row (column) of interest are not \iid\!\!.
Consider an $\pA$ payoff in a previously visited row; with probability $1-\proba_{\run}$ it is uniformly distributed on $[0,1]$ and with probability $ \proba_{\run}$ it has the law of a uniform random variable, conditioned on not being the largest payoff in its row.
A similar argument holds for $\pB$ payoffs, replacing row with column.
By \cref{le:dom-unif}, the distribution of a $\pB$ payoff on a previously visited row ($\pA$ payoff on a previously visited column)  is stochastically dominated by a uniform distribution on $[0,1]$. 
This proves the inequality in \eqref{eq:est4}.

Iterating \eqref{eq:PEt-less} we obtain
\begin{equation}
\label{eq:PEt-less-iter}
\begin{split}
\Prob\parens*{\brdR_{\run}(\per)\cap \brddiff_{\run}(\per)} 
&\le \Prob\parens*{\brdR_{\run}(2)\cap \brddiff_{\run}(2)}+  \sum_{i=3}^{\per} \frac{\floor{\frac{i}{2}}-1}{\nactionsAn \wedge \nactionsBn - 1}\\
&\le\sum_{i=1}^{\per} \frac{i}{\nactionsAn \wedge \nactionsBn - 1}
\le \frac{\per^{2}}{\nactionsAn \wedge \nactionsBn - 1},
\end{split}
\end{equation}
where for the last bound we used the fact that $\Prob\parens*{\brdR_{\run}(2)\cap \brddiff_{\run}(2)} =0$ (since $\{\brdR_\run(2)\}=\{\brddiff_{\run}(2)^c\}$) and the fact $\sum_{i=1}^\per i=\per(\per+1)/2$.
Call $\stime_{\run}^{\cyclet}$ the stopping time
\begin{equation}
\label{eq:tau-trap}
\stime_{\run}^{\cyclet} \coloneqq \inf\braces*{\per \ge 4 \colon 
\brdR_{\run}(\per) \cap \brddiff_\run(\per) \text{ holds}
},
\end{equation}
that is, $\stime_{\run}^{\cyclet}$ is the first time that the \ac{BRD} re-visits an element of a trap. 
Hence, \eqref{eq:PEt-less-iter} says that for all $\proba_{\run}\in[0,1]$, if $\Per_{\run} = \smalloh(\sqrt{\nactionsAn \wedge \nactionsBn})$, then, as $\run\to\infty$, 
\begin{equation}
\label{eq:tau-cycle-P}
\Prob(\stime_{\run}^{\cyclet} \le \Per_{\run}) \to 0.
\end{equation}
By definition of $\stime_{\run}^{\cyclet}$, the stopping time $\stime_{\run}^\resp$ defined in \eqref{eq:def-tau-R} can be rewritten as
\begin{equation}
\label{eq:tau}
\stime_{\run}^\resp \coloneqq 
\begin{cases}
    \stime_{\run}^{\cyclet} \wedge (\stime_{\run}^{\NE}+1)\,, &\text{if }(1,1)\not\in \NE_\run\,,
    \\2\,, &\text{if }(1,1)\in \NE_\run\,.
\end{cases}
\end{equation}    

Notice that, for every sequence $\Per_{\run} = \smalloh\parens*{\sqrt{\nactionsAn \wedge \nactionsBn}}$, if we show that
\begin{equation}
\label{eq:P-tau-T}
\Prob\parens*{\stime_{\run}^{\resp} \le \Per_{\run}} \to 1, 
\end{equation}
then from \eqref{eq:tau-cycle-P} and \eqref{eq:tau} it follows that
\begin{equation}
	\label{eq:P-tau-T-2}
 \Prob\parens*{\stime_{\run}^{\NE} \le \Per_{\run}} \to 1.
\end{equation}
Let 
\begin{equation}
\label{eq:equal-payoffs}
\eqpay_{\run} \coloneqq \braces*{(\actA,\actB) \in \actionsAn \times \actionsBn \colon \PayA_{\run}(\actA,\actB)=\PayB_{\run}(\actA,\actB)},
\end{equation}
be the set of action profiles that give the same payoff to the two players.
Fix now a sequence $\Per_{\run} = \smalloh\parens*{\sqrt{\nactionsAn \wedge \nactionsBn}}$ and define, for every integer $\peralt\in\braces*{1,\dots,\Per_{\run}-1}$,
\begin{equation}
		\label{eq:F-T-s}
\FTs_{\run}^{\peralt,\Per_{\run}} 
\coloneqq \braces*{\exists \per \le \Per_{\run}-\peralt\ \text{ s.t. }\BRD_{\run}(\per+\peralt') \in \eqpay_{\run},\quad \forall\peralt'\in\braces{0,\dots,\peralt}}.
	\end{equation}
The event $\FTs_{\run}^{\peralt,\Per_\run}$ occurs if there exists an interval of $\peralt$ consecutive steps before $\Per_{\run}$ in which the \ac{BRD} visits only elements in $\eqpay_{\run}$.

For every sequence $(\peralt_\run)_{\run\in\naturals}$ such that $\peralt_\run\le\Per_{\run}$ for every $\run$, we have
\begin{equation}
\label{eq:P-tau>T}
\Prob(\stime_{\run}^{\resp} > \Per_{\run}) = 
\Prob\parens*{\braces*{\stime_{\run}^{\resp} > \Per_{\run}} \cap \FTs_{\run}^{\peralt_\run,\Per_{\run}}} +
\Prob\parens*{ \braces*{\stime_{\run}^{\resp} > \Per_{\run}} \cap\braces*{\FTs_{\run}^{\peralt_\run,\Per_{\run}}}^{c}}.
\end{equation}
First we show that the first term on the \rhs of \eqref{eq:P-tau>T} goes to zero as $\peralt_\run\to \infty$. 
To this end, it is enough to show that, under the event $\FTs_{\run}^{\peralt_\run,\Per_{\run}}$, there exists some $\per \le \Per_{\run}-\peralt_\run$ such that  the best-responding player's payoff at time $\per+\peralt_\run$ is stochastically larger than a $\Beta(u,1)$ random variable, with
\begin{equation}
\label{eq:def-u}
	u\coloneqq \cresp_{\run}(\peralt_\run)-\frac{\Per_\run^{2}}{4}.
\end{equation}	
Notice that, for $\run\in\naturals$ large enough,
\begin{equation}
\label{eq:u-ineq}
u \geq \frac{\peralt_\run}{2}(\nactionsAn+\nactionsBn)-\frac{\Per_\run^{2}}{2}\ge \frac{\peralt_\run}{3}(\nactionsAn+\nactionsBn),
\end{equation}
where in the first inequality we used \eqref{eq:r-t=} and the fact that $\peralt_\run\le \Per_\run$; the second inequality holds for all sufficiently large $\run$ by our choice of $\Per_\run$.  
Indeed, after the interval of $\peralt_\run$ consecutive steps in which the \ac{BRD} visits only elements of $\eqpay_{\run}$, the \ac{BRD} visits an action profile $(\actA,\actB)$ such that
\begin{itemize}
\item $(\actA,\actB)\in \eqpay_{\run} $, that is, both players receive the same payoff;
	
\item this common payoff is the largest of a set of random variables, among which at least $u$ are \iid$\Unif([0,1])$ (see  \cref{Fig:Fs} for more details);
\end{itemize}
hence, the stochastic domination follows.

\begin{figure}[h]
\centering
\includegraphics[width=14cm]{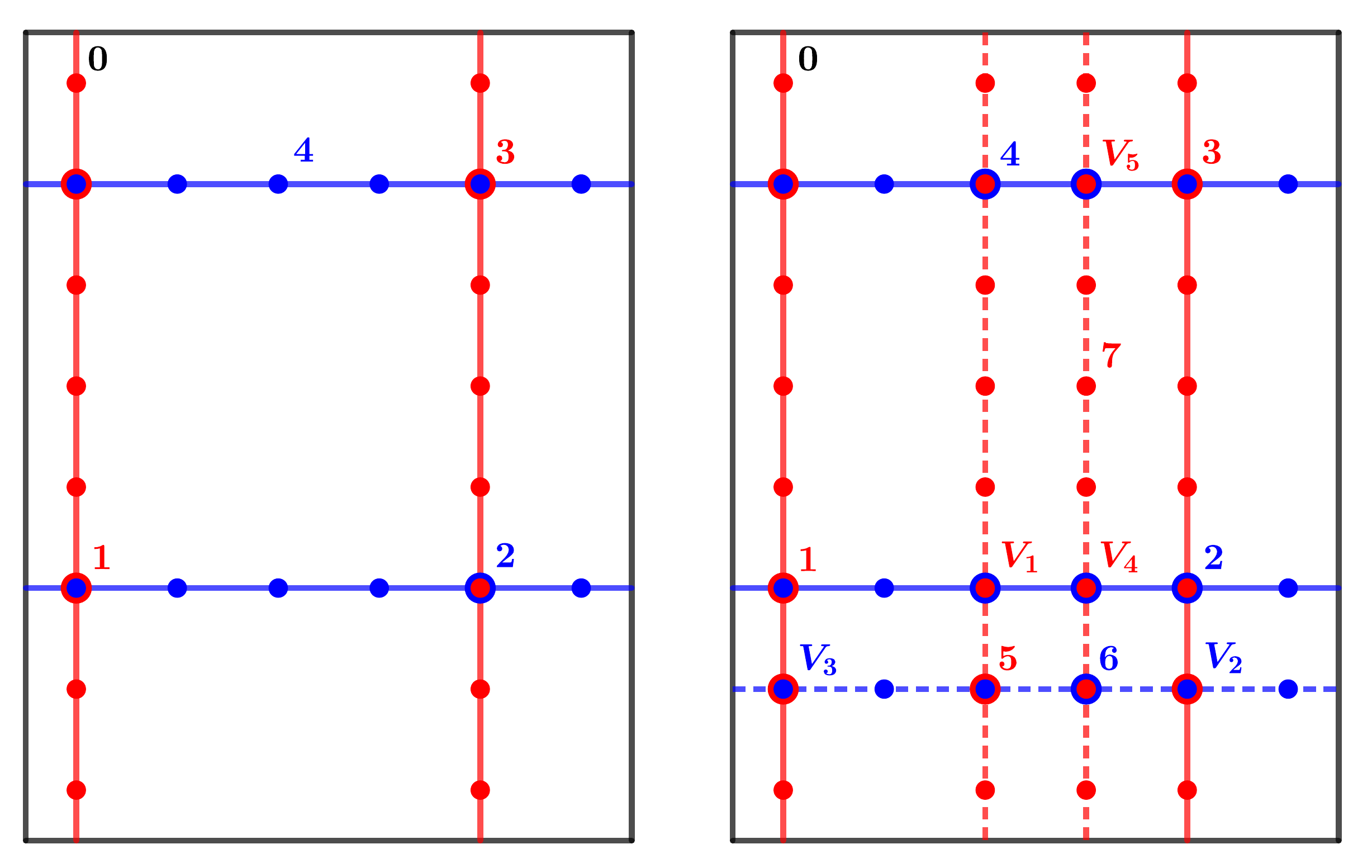}
\caption{The left figure shows an instance of the first four steps of the \ac{BRD}. 
The right figure shows how the dynamics proceeds after time $4$.
The numbered action profiles lying on the dashed lines, \ie $5$, $6$, and $7$, give the same payoff to the row and column player.
Note that in this case the event $\FTs_{\run}^{\peralt_\run,\Per_{\run}}$ occurs with $\Per_{\run}=7$ and $\peralt_\run=3$. 
Indeed, there exists $\per\leq \Per_\run-\peralt_\run$ (in this case $\per=4$) such that the \ac{BRD} visits only action profiles in $\eqpay_{\run}$ for $\peralt_\run$ consecutive steps. 
As a consequence, the payoff of the row player at the action profile $7$ is the maximum of the payoffs of the row player in the action profiles lying on the red dashed lines and of the payoffs of the column player in the action profiles lying on the blue dashed lines.  
Such payoffs are all \iid $\Unif([0,1])$ except for the payoffs associated to the action profiles $4,V_1,\ldots,V_5$, for which we have additional information.  
Since the number of such exceptional action profiles is at most $\left(\Per_\run/2\right)^{2}$ and the total number of action profiles lying on the dashed lines is $\cresp_\run(\peralt_\run)$, we have that the payoff of the row player at the action profile $7$ is the maximum of at least $u$ \iid $\Unif([0,1])$ random variables, where $u$ is defined as in \eqref{eq:def-u}.}
\label{Fig:Fs}
\end{figure}

Therefore, by \cref{prop:maxunif}, the probability that $\BRD_{\run}(\per+\peralt_\run)=\BRD_{\run}(\per+\peralt_\run+1)$, that is, $\BRD_{\run}(\per+\peralt_\run)$ is a \ac{PNE}, is bounded from below by the probability that a $\Beta(u,1)$ random variable is larger than the maximum of $\nactionsAn\vee \nactionsBn-1$ \iid random variables with a uniform distribution on $[0,1]$.
Hence, by \cref{prop:betacomp}, we get
\begin{equation}
\label{eq:est-B}
\Prob\parens*{\braces*{\stime_{\run}^{\resp} > \Per_{\run}} \cap \FTs_{\run}^{\peralt_\run,\Per_{\run}}} \le 
1- \frac{\left\lfloor\frac{\peralt_\run}{3}\right\rfloor (\nactionsAn+\nactionsBn)}{\left\lfloor\frac{\peralt_\run}{3}\right\rfloor (\nactionsAn+\nactionsBn) + (\nactionsAn\vee \nactionsBn-1)}\le 
\frac{3}{\peralt_\run+3},
\end{equation} 
which goes to zero  as $\peralt_\run\to\infty$. 
We now show that, under the assumption in \eqref{eq:condition-T},  it is possible to find a sequence $(\peralt_\run)_\run$  such that
\begin{equation}
\label{eq:various-cond-T-n}	\lim_{n\to\infty}	\peralt_\run=\infty,\qquad	\lim_{n\to\infty} \frac{\peralt_\run}{\Per_{\run}}= 0,
\end{equation}
and
\begin{equation}
\label{eq:P-tau>T-second-part}
\lim_{n\to\infty}	\Prob\parens*{ \braces*{\stime_{\run}^{\resp} > \Per_{\run}} \cap\braces*{\FTs_{\run}^{\peralt_\run,\Per_{\run}}}^{c}}\to 0.
\end{equation}
Notice that under the event $\braces*{\stime_{\run}^{\resp}>\Per_{\run}}$, the probability of the event $\braces*{\FTs_{\run}^{\peralt_\run,\Per_{\run}}}^{c}$ can be upper bounded by the probability that, splitting the interval $\braces*{0,\dots, \Per_{\run}}$ into subintervals of length $\peralt_\run$, none of them is such that the \ac{BRD} visits only $\eqpay_{\run}$ in that subinterval. 
Therefore,
\begin{equation}
\label{eq:P-tau-T-intervals}
\Prob\parens*{\braces*{\stime_{\run}^{\resp} > \Per_{\run}} \cap \braces*{\FTs_{\run}^{\peralt_\run,\Per_{\run}}}^{c}} 
\le \Prob\parens*{\Binomial\parens*{\floor*{\frac{\Per_{\run}}{\peralt_\run}},\proba_{\run}^{\peralt_\run}}=0}
= (1-\proba_{\run}^{\peralt_\run})^{\floor*{\Per_{\run}/\peralt_\run}}.
\end{equation}
If $\Per_{\run} = \smalloh(\sqrt{\nactionsAn \wedge \nactionsBn})$, then the term on the \rhs of \eqref{eq:P-tau-T-intervals} goes to zero whenever
\begin{equation}
\label{eq:lim-T}
\lim_{\run\to\infty}\frac{\Per_{\run}}{\peralt_\run}\proba_{\run}^{\peralt_\run}= \infty.
\end{equation}
A necessary and sufficient condition for \eqref{eq:lim-T} is
\begin{equation}
\label{eq:lim-T-log}
\lim_{\run\to\infty}\log(\Per_{\run})-\log(\peralt_\run)-\peralt_\run \log(\proba_{\run}^{-1})= \infty,
\end{equation}
or,  equivalently,
\begin{equation}
\label{eq:lim-T-log2}
\lim_{\run\to\infty}\log(\Per_{\run})\left[1-\frac{\log(\peralt_\run)}{\log(\Per_{\run})}-\frac{\peralt_\run \log(\proba_{\run}^{-1})}{\log(\Per_{\run})}\right]= \infty.
\end{equation}
Thanks to \eqref{eq:condition-T}, we can choose, \eg
\begin{equation}\label{eq:s-n-instance}
\peralt_\run= \log(\Per_{\run})\wedge \log\left(\frac{\log(\Per_\run)}{\log(\proba_{\run}^{-1})}\right)\to\infty.
\end{equation}
Since, for every diverging positive  sequence $(a_n)_{n\ge 0}$, it holds that $\log(a_n)/a_n\to 0$, by the definition in \eqref{eq:s-n-instance} we deduce that
\begin{equation}
\label{eq:review1}
\frac{\log(\peralt_{\run})}{\log(\Per_{\run})} \le \frac{\log(\log \Per_{\run})}{\log(\Per_{\run})} \to 0 \quad\text{as}\quad \run \to \infty \,,
\end{equation}
and
\begin{equation}
\label{eq:review2}
\frac{\peralt_{\run}\log (\proba_\run^{-1})}{\log(\Per_{\run})} 
\le
\frac{\log\left(\dfrac{\log(\Per_{\run})}{\log(\proba_\run^{-1})}  \right)}{\dfrac{\log(\Per_{\run})}{\log(\proba_\run^{-1})}} \to 0 \quad\text{as}\quad \run\to \infty\,.
\end{equation}
Coupling \eqref{eq:review1} and \eqref{eq:review2}, we immediately validate \eqref{eq:lim-T-log2}.
\end{proof}
    
\begin{proof}[Proof of \cref{pr:pn-p-tau}]
Assume that \eqref{eq:condition-K} is satisfied. 
Choose any sequence $\Per_\run$ such that 
\begin{equation}
\label{eq:hp}
\lim_{\run\to\infty}\frac{\log(\proba_{\run})}{\log(\Per_{\run})}=0.
\end{equation}
There are two cases:
\begin{itemize}
\item 
If
\begin{equation*}
\lim_{\run\to\infty} \frac{\Per_{\run}} {\sqrt{\nactionsAn\wedge\nactionsBn}} = 0,          
\end{equation*} 
 then \eqref{eq:co-pn-p-tau} follows by \cref{pr:BRD-p-not0}.
        
\item If instead
\begin{equation}
\label{eq:case2}
\limsup_{\run\to\infty}\frac{\Per_{\run}}{\sqrt{\nactionsAn\wedge\nactionsBn}}>0,
\end{equation}
then we can define the 
$\Per_{\run}'=\Per_\run\wedge \ceil{\nactionsAn\wedge\nactionsBn}^{1/3}$.
Notice that
\begin{itemize}
\item 
$\lim_{\run\to\infty}\Per_\run'=\infty$;

\item 
by the fact that $\log(\Per_\run')\ge\frac{1}{3}\log(\nactionsAn\wedge\nactionsBn)\wedge\log(\Per_\run)$, combined with  \eqref{eq:condition-K} and \eqref{eq:hp}, we deduce that 
\begin{equation*}
\lim_{\run\to\infty}\frac{\log(\proba_{\run})}{\log(\Per_\run')}=0;
\end{equation*}
            
\item 
moreover, by the definition of $\Per_\run'$ we have 
\begin{equation*}
\lim_{\run\to\infty}\frac{\Per_\run'}{\sqrt{\nactionsAn\wedge\nactionsBn}}=0. 
\end{equation*}
\end{itemize}
        
Hence, thanks to \cref{pr:BRD-p-not0} we get
\begin{equation}
\lim_{\run\to\infty} \Prob\parens*{\stime_{\run}^{\NE}<\Per_\run'}=1.
\end{equation}
To conclude the proof, it is enough to see that for every $\run\in\naturals$
\begin{equation*}
\Prob\parens*{\stime_{\run}^{\NE}<\Per_{\run}}\ge \Prob\parens*{\stime_{\run}^{\NE}<\Per_\run'}.
\qedhere
\end{equation*}
\end{itemize}
\end{proof}

%
%

\section{Conclusions and open problems}
\label{se:conclusions}

We have considered a model of two-person games with random payoffs that parametrically interpolates  potential games and games with \iid payoffs. 
The interpolation acts locally on each payoff profile. 
We have studied both the asymptotic behavior of the random number of \aclp{PNE} of the game and the asymptotic behavior of \acl{BRD}, as the number of actions for each player diverges.
The type of model that we chose requires combinatorial tools for its analysis. 

We see this paper as a first attempt to provide a parametric model for random games where the payoffs are not independent, but have some structure that depends on a locally acting parameter.
Several extensions and variations of this model are conceivable and will be the object of our future research.
For instance:

\begin{enumerate}[(i)]
\item
It would be interesting to have a clearer view of the phase transition taking place at $\proba=0$. 
In particular, it would be important to investigate the existence of a sequence $\proba_\run \to 0$  such that the probability that a \ac{BRD} does not lead to a \ac{PNE} converges to a value smaller than $1$.   

\item 
Games with more than two players could be studied.

\item
With more than two players, different types of deviator rules in \ac{BRD} could be considered, \eg round-robin, random order, etc..

\item
The behavior of better-response dynamics could be studied and compared to \acl{BRD}, along the lines of \citet{AmiColHam:ORL2021}.

\item
When we deal with the number of \aclp{PNE}, we studied a form of Law of Large Numbers. 
The existence of a Central Limit Theorem could be explored.
\end{enumerate}

%
%

\appendix

\gdef\thesection{\Alph{section}} 
\makeatletter
\renewcommand\@seccntformat[1]{\appendixname\ \csname the#1\endcsname.\hspace{0.5em}}
\makeatother

\section{List of symbols}
\label{se:symbols}

\begin{longtable}{p{.13\textwidth} p{.82\textwidth}}

$\best_{\actA}$ & $\braces*{\text{player $\pA$'s best response to action $1$ is $\actA$}}$, defined in \eqref{eq:best-i}\\
$\BRD$ & \acl{BRD}\\
$\brdNE_{\run}(\per)$ & $\braces*{\BRD_{\run}(\per)\in \NE_{\run}}$, defined in \eqref{eq:BRD(t)=NE}\\
$\badiid_{\run}(\per)$ & $\braces*{\BRD_{\run}(\per)\not\in \NE_{\run},\ \BRD_{\run}(\per)\not\in\resp_{\run}(\per-2) }$, defined in \eqref{eq:BRD(t)=NE}\\
$\brdR_{\run}(\per)$ & $\braces*{\BRD(\per)\in \resp_{\run}(\per-2)}$, defined in \eqref{eq:est0}\\ 
$\distr$ & uniform  distribution function  on $[0,1]$\\

$\bestpath_{\per}^{\ppath}$ & $\braces*{\BRD_{\run}(\peralt)=\ppath_{\peralt}\text{ for } 0\leq\peralt\leq \per}$, defined in \eqref{eq:best-paths}\\
$\distrtau_{\run}$ & distribution function of $\stime_{\run}^{\NE}$\\
$\FTs_{\run}^{\peralt,\Per_{\run}}$ &
$\braces*{\exists \per \le \Per_{\run}-\peralt\ \text{ s.t. }\BRD_{\run}(\per+\peralt') \in \eqpay_{\run}, \forall\peralt'\in\braces{0,\dots,\peralt}}$, defined in \eqref{eq:F-T-s}\\

$\nactionsAn$ & number of player $\pA$'s actions in the game $\Bimatrix_{\run}$\\
$\nactionsBn$ & number of player $\pB$'s actions in the game $\Bimatrix_{\run}$\\

$\nactionsn$ & $\min(\nactionsAn,\nactionsBn)$, defined in \eqref{eq:K-n}\\
$\actionsA$ & action set of player $\pA$\\
$\actionsB$  & action set of player $\pB$\\
$\Xonemax$ & $\{X_1<\max \{X_1,\ldots,X_k\}\}$, defined in \eqref{eqX-not-max}\\
$\Trap$ & trap, defined in \cref{de:trap}\\
$\NE$ & set of \aclp{PNE}\\

$\proba_{\run}$ & probability that $\PayA(\actA,\actB) = \PayB(\actA,\actB)$ in the game $\Bimatrix_{\run}$\\

$\qu_{\per,\run}$ & $\Prob\parens*{\stime_{\run}^{\NE} = \per \mid \stime_{\run}^{\NE} \ge \per}$, defined in \eqref{eq:q-t-n}\\

$\cresp_{\run}(\per)$ & $\ceil*{\dfrac{\per+1}{2}}\nactionsAn+\floor*{\dfrac{\per+1}{2}}\nactionsBn - \floor*{\dfrac{\per+1}{2}}\ceil*{\dfrac{\per+1}{2}}$, defined in \eqref{eq:r-t=}\\
$\resp_{\run}(\per)$ & defined in \eqref{eq:response}\\
$\eqpay_{\run}$ & $\braces*{(\actA,\actB) \in \actionsAn \times \actionsBn \colon \PayA_{\run}(\actA,\actB)=\PayB_{\run}(\actA,\actB)}$, defined in \eqref{eq:equal-payoffs}\\

$\per$ & (discrete) time\\

$\PayA$ & player $\pA$'s payoff function\\
$\PayB$ & player $\pB$'s payoff function\\

$\nequi_{\run}$ & number of \aclp{PNE} in $\Bimatrix_{\run}$\\
$\brddiff_\run(\per)$ & $\braces*{\BRD_{\run}(\per)\neq\BRD_{\run}(\per -1)}$, defined in \eqref{eq:Znt}\\

$\paths_{\per}$ & set of possible paths for $\BRD_{\run}$ up to time $\per$\\

$\stime_{\run}^{\NE}$ & first time the \ac{BRD} visits a \ac{PNE}, defined in \eqref{eq:tau-NE}\\ 
$\stime_{\run}^{\cyclet}$ & first time the \ac{BRD} re-visits an element of a trap, defined in \eqref{eq:tau-trap}\\ 
$\stime^{\resp}_\run$ & $\min\braces*{\per\ge 2:\ \BRD_\run(\per)\in\resp_\run(\per-2)}$, defined in \eqref{eq:def-tau-R}\\
$\intdist_{\run}(\per)$ & $\int_{0}^{\per} \distrtau_{\run}(\peralt)\diff\peralt$, defined in \eqref{eq:Phi-n}\\
$\potential$ & potential function, defined in \eqref{eq:potential}\\
\end{longtable}

\section{Beta distribution}
\label{se:beta}
We report two well-known results about Beta distributions. 
For the sake of completeness, we add their simple proofs.
\begin{proposition}
\label{prop:maxunif}
Let $X_1,\ldots,X_k$ be \iid random variables having a uniform distribution on $[0,1]$ and let $M_k \coloneqq \max_{i\in\braces{1,\dots,k}}X_i$. 
Then $M_k$ has distribution $\Beta(k,1)$.
\end{proposition}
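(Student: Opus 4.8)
The plan is to compute the distribution function of $M_k$ explicitly and recognize it as the $\Beta(k,1)$ distribution function; the argument is elementary and rests only on independence.

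First I would fix $x\in[0,1]$ and use the elementary observation that the maximum of the $X_i$ is at most $x$ precisely when each $X_i$ is at most $x$, so that $\braces{M_k\le x}=\bigcap_{i=1}^{k}\braces{X_i\le x}$. Since the $X_i$ are independent and each has the uniform distribution function $\distr(x)=x$ on $[0,1]$, this yields
\[
\Prob(M_k\le x)=\prod_{i=1}^{k}\Prob(X_i\le x)=x^{k},\qquad x\in[0,1],
\]
while $\Prob(M_k\le x)=0$ for $x<0$ and $\Prob(M_k\le x)=1$ for $x>1$. Differentiating on $(0,1)$ gives the density $f_{M_k}(x)=k\,x^{k-1}$.

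Next I would match this against the $\Beta(k,1)$ density, namely $\frac{\Gamma(k+1)}{\Gamma(k)\,\Gamma(1)}\,x^{k-1}(1-x)^{0}$ on $[0,1]$. Since $\Gamma(k+1)=k\,\Gamma(k)$ and $\Gamma(1)=1$, this normalizing constant is exactly $k$, so the $\Beta(k,1)$ density equals $k\,x^{k-1}$ and coincides with $f_{M_k}$. Hence $M_k\sim\Beta(k,1)$.

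There is no real obstacle: the whole content is the factorization of $\Prob(M_k\le x)$ via independence, and the only point requiring a line of care is verifying that the Beta normalizing constant $\Gamma(k+1)/(\Gamma(k)\Gamma(1))$ collapses to $k$. One could equally avoid densities altogether by noting that $x^{k}$ is already the $\Beta(k,1)$ distribution function, which makes the identification immediate.
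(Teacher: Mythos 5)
Your proof is correct and follows essentially the same route as the paper: both compute $\Prob(M_k\le x)=x^k$ by factorizing over the independent $X_i$ and identify this as the $\Beta(k,1)$ law. Your detour through the density and the Gamma-function normalization is harmless but unnecessary, as your own closing remark observes---the paper simply recognizes $x^k$ as the $\Beta(k,1)$ distribution function directly.
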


\begin{proof}
For any $t\in[0,1]$ we have
\begin{equation*}
\Prob(M_k\leq t)=[\Prob(X_1\leq t)]^k=t^k,
\end{equation*}
\ie a $\Beta(k,1)$ distribution function. 
\end{proof}

\begin{proposition}
\label{prop:betacomp}
Let $X$ and $Y$ be independent random variables with distributions $\Beta(a,1)$ and $\Beta(b,1)$, respectively.
Then 
\begin{equation}
\label{eq:PX>Y}
\Prob(X>Y)=\frac{a}{a+b}\,.
\end{equation}
\end{proposition}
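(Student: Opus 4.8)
The plan is to reduce the statement to a one-dimensional integral by using the explicit form of the $\Beta(\argdot,1)$ laws. By \cref{prop:maxunif} (or by direct computation), a $\Beta(a,1)$ random variable is supported on $[0,1]$ with distribution function $t\mapsto t^{a}$ and density $t\mapsto a\,t^{a-1}$, and similarly $Y$ has distribution function $t\mapsto t^{b}$. The natural first step is to condition on the value of $X$ and exploit independence, writing
\begin{equation*}
\Prob(X>Y)=\Expect\bracks*{\Prob(Y<X\mid X)}=\Expect\bracks*{X^{b}},
\end{equation*}
since $\Prob(Y<x)=x^{b}$ for every $x\in[0,1]$.

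Second, I would evaluate this expectation against the density of $X$, which is a routine integration:
\begin{equation*}
\Expect\bracks*{X^{b}}=\int_{0}^{1}x^{b}\,a\,x^{a-1}\diff x=a\int_{0}^{1}x^{a+b-1}\diff x=\frac{a}{a+b},
\end{equation*}
which is exactly \eqref{eq:PX>Y}. I do not expect any genuine obstacle: the result is elementary, and the only point requiring care is correctly recording the $\Beta(a,1)$ density, which is supplied by \cref{prop:maxunif}.

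As a sanity check and an alternative route in the integer case that is the one actually invoked in the main proofs, one may argue combinatorially. Taking $a,b\in\naturals$, a $\Beta(a,1)$ variable is the maximum of $a$ independent uniforms and an independent $\Beta(b,1)$ variable is the maximum of $b$ further independent uniforms; then the event $\braces*{X>Y}$ coincides with the event that the global maximum of all $a+b$ independent uniforms is attained among the first $a$ of them. By exchangeability of the $a+b$ \iid uniform variables, this probability equals $a/(a+b)$, in agreement with the computation above.
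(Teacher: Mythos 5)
Your proof is correct and essentially identical to the paper's: conditioning on $X$ and computing $\Expect\bracks*{X^{b}}=\int_{0}^{1}a\,x^{a+b-1}\diff x$ is exactly the paper's double integral with the inner integral $\int_{0}^{t}b\,s^{b-1}\diff s=t^{b}$ already carried out. The exchangeability argument you add for integer $a,b$ is a pleasant sanity check but not part of the paper's proof, which works for arbitrary positive parameters just as your main computation does.
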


\begin{proof}
We have
\begin{align*}
\Prob(X>Y)&=\int_0^1 \parens*{\int_0^t a t^{a-1}\cdot b s^{b-1}\diff s} \diff t =\int_0^1 at^{a-1}\cdot t^b\diff t=\frac{a}{a+b}\,.
\qedhere
\end{align*}
\end{proof}

\section*{Acknowledgments}
\label{se:acknowledments}
The authors  deeply thank three reviewers for their extremely careful reading of the manuscript and their insightful suggestions.
Hlafo Alfie Mimun and Marco Scarsini are members of GNAMPA-INdAM.

\subsection*{Funding}
 
Hlafo Alfie Mimun and Marco Scarsini's research was supported by the GNAMPA project CUP\_E53C22001930001 ``Limiting behavior of stochastic dynamics in the Schelling segregation model'' and the Italian MIUR PRIN project 2022EKNE5K   ``Learning in Markets and Society.''
Matteo Quattropani thanks the German Research Foundation (project number 444084038, priority program SPP2265) for financial support.

\bibliographystyle{apalike}
\bibliography{bibperturbed}

\end{document}